\newcommand{\Comment}[1]{}
\newcommand{\lr}[1]{\left(#1\right)}
\newcommand{\slr}[1]{\left[#1\right]}
\newcommand{\glr}[1]{\left\{#1\right\}}
\newcommand{\expect}{\mathbb E }
\newcommand{\avg}[1]{\langle #1 \rangle}
\newcommand{\pder}[2][]{\ensuremath{\frac{\partial#1}{\partial#2}}} 
\newcommand{\der}[2][]{\ensuremath{\frac{d#1}{d#2}}} 
\newcommand{\bb}[1]{\boldsymbol{#1}}
\newcommand{\norm}[1]{\left\lVert#1\right\rVert}
\newtheorem{remark}{Remark}
\newtheorem{theorem}{Theorem}
\newtheorem{proposition}{Proposition}
\newtheorem{corollary}{Corollary}
\newtheorem{definition}{Definition}
\newtheorem{lemma}{Lemma}
\title{On the supra-linear storage in dense networks of grid and place  cells}
\date{}
\author{Adriano Barra\footnote{Dipartimento di Scienze di Base ed Applicate per l'Ingegneria, Sapienza Universit\`a di Roma, Italy $\&$ INFN, Sezione di Roma1 $\&$ CNR Nanotec, Salento Unit, Lecce, Italy.}, Martino S. Centonze\footnote{Dipartimento di Matematica, Universit\`a di Bologna, Italy.}, Michela Marra Solazzo \footnote{Dipartimento di Matematica e Fisica ``Ennio de Giorgi'', Unisalento, Italy.}, Daniele Tantari$^\dagger$ %\footnote{Dipartimento di Matematica, Universit\`a di Bologna, Italy.}
}
\begin{document}

\maketitle

\begin{abstract}
Place-cell networks, typically forced to pairwise synaptic interactions, are widely studied as models of cognitive maps: such models, however, share a severely limited storage capacity, scaling linearly with network size  and with a very small critical storage. This limitation is a challenge for navigation in three-dimensional space because, oversimplifying, if encoding motion along a one-dimensional trajectory embedded in two dimensions requires $O(K)$ patterns (interpreted as bins), extending this to a two-dimensional manifold embedded in a three dimensional space -yet preserving the same resolution- requires roughly $O(K^2)$ patterns, namely a supra-linear amount of patterns. In these regards, dense Hebbian architectures, where higher-order neural assemblies mediate memory retrieval, display much larger capacities and are increasingly recognized as biologically plausible, but have never linked to place cells so far. 
\newline
Here we propose a minimal two-layer model, with place cells building a layer and leaving the other layer populated by neural units that account for the internal representations (so to qualitatively resemble grid cells in the medial enthorinal cortex of mammals): crucially, by assuming that each place cell interacts with pairs of grid cells (the minimal  quest to capture information on position but also on direction, i.e. the one- and two-point correlation functions),  we show how such a model is formally equivalent to a dense Battaglia-Treves-like Hebbian network of grid cells only endowed with four-body interactions. By studying its emergent computational properties by means of statistical mechanics of disordered systems, we prove -analytically- that such effective higher-order assemblies (constructed under the guise of biological plausibility)  can support supra-linear storage of continuous attractors; furthermore, we prove -numerically- that the present neural network (namely the simplest dense generalization of the interplay between grid and cells) is, thus, already capable of recognition and navigation on general surfaces embedded in a three-dimensional space. 
\end{abstract}

\section{Introduction}\label{sec:intro}
The hippocampus, particularly the CA1 region, hosts place cells that fire selectively when an animal occupies specific locations, thereby forming the building blocks of cognitive maps—internal representations of the external, physical, space \cite{OKeefe1971,Rich2014,Fenton2008,Leutgeb,PlaceCellOriginal}.
These networks have been extensively modeled as continuous attractor neural networks (CANNs), which support localized bumps of activity that smoothly track stimuli along continuous manifolds (see e.g. \cite{Ale0,MonassonTreves2,MonassonTreves19}). Among such models, the Battaglia–Treves formulation, with $N$ McCulloch–Pitts neurons storing $K$ spatial maps, has served as a canonical reference \cite{BattagliaTreves1998,Treves-Noi}. Its statistical-mechanical analysis has yielded exact phase diagrams and clarified the roles of noise and inhibition, yet revealed a severe limitation: storage scales only linearly with system size, i.e., $K_{\max} = \alpha_c N$, with $\alpha_c \lesssim  10^{-2}$, far below the Hopfield benchmark ($\alpha_c \sim 10^{-1}$). Even improved models deepened in more recent times, see e.g. \cite{MonassonPlaceCellsLong,MonassonPlaceCellsPRL}, still face roughly the same low capacity. 
\newline
This shortcoming is indeed not unique to the Battaglia–Treves model but, rather, stems from the common assumption of pairwise synaptic couplings ($p=2$), inherited from classical Hebbian learning and shared by most attractor frameworks for spatial memory. In contrast, recent work on dense Hopfield models \cite{KrotovNew1} has demonstrated that many-body generalizations retain biological plausibility while achieving supra-linear storage \cite{LindaSuper,LindaRSB,DenseCapacity,DanielinoDenso,Krotov1,Krotov2,theriault2025saddle}. Extending this perspective to  networks that try to capture spatial correlations  can thus be relevant, especially if we think that whereas encoding locomotion along a one-dimensional manifold embedded in $d=2$ dimensions requires $O(K)$ patterns, representing motion on a two-dimensional manifold embedded in $d=3$ dimensions demands roughly $O(K^2)$ patterns (if we want to preserve spatial resolution), resulting in an unattainable scenario if tackled by neural networks supporting solely linear capacity storage.
%\mycomment{Qui introdurrei la struttura bipartita di place-grid cells che giustifica l'quivalente struttura bipartita del nostro modello e specificherei che questo è un modello effettivo che coglie alcune funzioni biologiche (soprattutto la navigazione) ma è lontano da essere comprensivo di tutta la fenomenologia osservata.}\\

On top of that, the discovery of grid cells in the medial entohrinal cortex (MEC) of mammals \cite{hafting2005microstructure} has led to the idea that the spatial selectively shown by place cells is not entirely encoded in the hippocampus (where the CA1 and CA3 regions populated by place cells lie), but it is rather a byproduct of internal activity in the MEC and its connection with the hippocampus \cite{place-to-grid1, place-to-grid2, place-to-grid3}. In fact, as the animal crosses specific positions in space, grid cells activate coherently in periodic hexagonal-grid patterns (hence showing spatial selectivity) and their activity is fed to the hippocampus producing the aperiodic spatial selectivity shown by place cells \cite{place-to-grid3}. 
Grid cells are known for maintaining their characteristics (\emph{i.e.} scale, phase and orientation) across different environments \cite{Treves-Moser}, which suggests that grid cells work as universal maps, which is compatible with the idea that grid-cells offer a universal metric for space-representation and space-navigation. The latter constitutes a key difference with place cells, whose configurations change at different environments, a property that is called remapping \cite{Muller-Kubie}, which is essential for recalling past memories associated with different space environments, \cite{Moser-yasser}.
\newline
\newline
In order to investigate the interplay between place and grid cells in mammals' navigation system, we propose here a suitably simplified  associative memory model that tries to capture some of the main properties of the biological counterpart, while attaining the possibility of studying its computational properties with techniques inherited from statistical mechanics of spin glasses, namely interpolation technique and replica trick. Nevertheless, despite the drastic simplifications carried out in keeping its architecture minimal, which is needed to perform exact computations (at the replica symmetry level of description), the model is able to work as a navigation system on rather general manifolds, capturing spatial correlations within the environment and enjoying a supra-linear storage of patterns coding for its navigation.
\newline
Building on analogies with $p$-spin models in spin-glass theory \cite{Gardner,Gardner2,Baldi,Burioni} and relying upon the duality between (higher-order) Boltzmann machines and (generalized) Hopfield neural networks \cite{FraDenso,LindaSuper,LindaUnsup,DaniPRE2018,Mezard2017,Monasson2017,manzan2025effect,theriault2025modelling,alemanno2023hopfield,decelle2021inverse}, we develop a minimal two-layer architecture as a core-model for spatial navigation in mammals:  the hidden (or {\em more internal}) layer is built off by neurons whose function is to represent the spatially coherent states that qualitatively resemble grid cells activity that, in turn, underlie the firing of place cells, the latter being all allocated in the visible  (or {\em more external}) layer. We stress the fact that, in our model, the purpose of the hidden layer of neurons is to produce internal representations that are localized in the space coded by a given manifold $\mathcal M_{hidden}$. The coherent activity produced in the hidden layer is responsible for the emergence of the activation of place cell neurons in the visible layer at specific locations in the visible space $\mathcal M_{visible}$. The latter is a binned representation of the environment, where each place cell is attached to a given anchor point (within its surrounding region, i.e. the {\em place field}): this way, hidden neurons work qualitatively  as grid cell units. 
\newline
Up to this point, the model is general as its actual representation depends on the particular choice of $\mathcal M_{hidden}$, which is not fixed: in our simulations and computations, however,  we minimally diverge from biological plausibility by choosing $\mathcal M_{hidden}=\mathcal S_D$ to be the $D-$dimensional (hyper-)sphere of the same dimension of $\mathcal M_{visible}$ (while in biological circuits of grid cells $\mathcal M_{hidden}$ is rather a torus \cite{di2025role})\footnote{To be sharp, in our simulations, coordination by place cells for toroidal navigation will be taken into account and solely grid cells will be a tessellation of a regular -Euclidean- space for the sake of simplicity.},  and we focus on the study of aperiodic (rather than periodic) solutions of the MC dynamics, as this considerably simplifies the calculations and numerical subtleties, yet letting the model still able to capture key aspects of the general qualitative behavior of its biological counterpart.
\newline
Crucially, if we force each place cell in the visible layer to interact with (at least) couples of grid cells -the minimal quest to capture both information on orientation  but also for navigation (namely the one- and two-point correlation functions), once the visible layer is integrated out (thus, by relying on the above mentioned duality, we focus  on the marginal distribution of solely the hidden neurons), this construction is then shown to be formally equivalent to a dense Battaglia–Treves network \cite{BattagliaTreves1998}  with many-body (i.e. four) interactions that allow to code higher-order spatial correlations needed to bin a $D\geq 3$ $\mathcal M_{visible}$ space.
In this dense formulation, the maximal storage  of $K$ patterns naturally scales as $K_{\max} = \alpha_c N^{p-1}$, where the supra-linear factor $N^{p-1}$ (rather than the small pre-factor $\alpha_c$) drives the capacity enhancement: as a result, effective fully connected higher-order neural networks, involving quadruplets (or more) of neurons but actually representing  lower-order biologically-driven layered networks, can thus constitute a natural route to overcome the storage bottleneck and, in a cascade fashion, easily allow for spatial navigation in dimensions higher than two. 
\newline
In practice, for a $d=3$ dimensional embedding space, it suffices to work with $p=4$-order interactions: we study this case in detail. Analytically, by inspecting its supra-linear storage capacity and checking the stability of coherent attractor states, numerically, facing  challenging navigation tasks, concretely showing how spatial navigation on a bi-dimensional manifold embedded in a three dimensional space becomes affordable by such a neural network. 
\newline
In doing so, we provide a theoretical framework that highlights the computational and dynamical advantages of many-body interactions in spatial memory, offering a step toward more biologically realistic models of information processing neural networks within mammals' brain.

\section{The model: from definitions to computational capabilities}\label{sec:model}

In this Section, trying to preserve the most biological plausibility,  we introduce the core-mechanisms that we identifies as mandatory for a neural network  in order to  accomplish spatial orientation and navigation on manifolds embedded in generic dimensions (i.e. not confined to planar motion).  
\newline
In particular, in Sec. \ref{SpatialRecognition} we introduce the simplest bipartite structure where one layer  -built off by place cells- interact in a mean field manner with another layer -built off by grid cells- such that, each place cell senses couples of grid cells (i.e. the interactions are ternary and not pairwise): this is the minimal quest to capture one- and two-point correlation functions among grid cells for a given chart to be recognized.
\newline
This assumption has two fundamental -despite elementary- consequences: the former is that, the dual representation of this bipartite network (achievable by marginalizing over the place cells), is a generalized dense Battaglia-Treves model equipped with four-wise interactions among grid cells only and this network is able to accomplish supra-linear storage of patterns and thus can play as a working model for spatial orientation also in the challenging case of motion in a  three-dimensional environment. 
\newline
The latter is that the field acting on each place cell contains Hebbian pairwise interactions among grid cells, hence -as grid cells correlate (due to their interactions) while they recognize the underlying chart- this forces a unique place cell to fire (or just a few of them), letting to this cell the freedom to operate in a quasi-grandmother way and this is pivotal to extend elementary the model from solely spatial recognition to account also  for spatial navigation. 
\newline
Indeed, in Sec. \ref{SpatialNavigation} we extend this core-model by providing also information on consecutive maps in order to turn the network into a true behavioral model able to cope with spatial navigation too. Crucially, as place cells can play like grandmother cells (namely they activate in a rather specific way, that is when the animal crossed their related place field), this extension can be achieved trivially, simply by adding to the Cost function defining the core-model an extra {\em navigation term} where a coupling among two consecutive place cells suffices to drive the animal within the manifold under exploration as it gives rise to a stochastic process in space à la Markov: we stress that, without a quasi-grandmother cell-like behavior of the visible layer, modeling such a spatial drive would be rather cumbersome.   
\newline
The whole result in a minimal neural network's architecture that preserves the Hebbian structure of the synaptic tensors and allows locomotion on manifolds embedded in $\mathbb{R}^3$, namely the challenging scenario (from a modeling perspective) of actual interest.    

\begin{figure}
    \centering
    \includegraphics[width=0.35\linewidth]{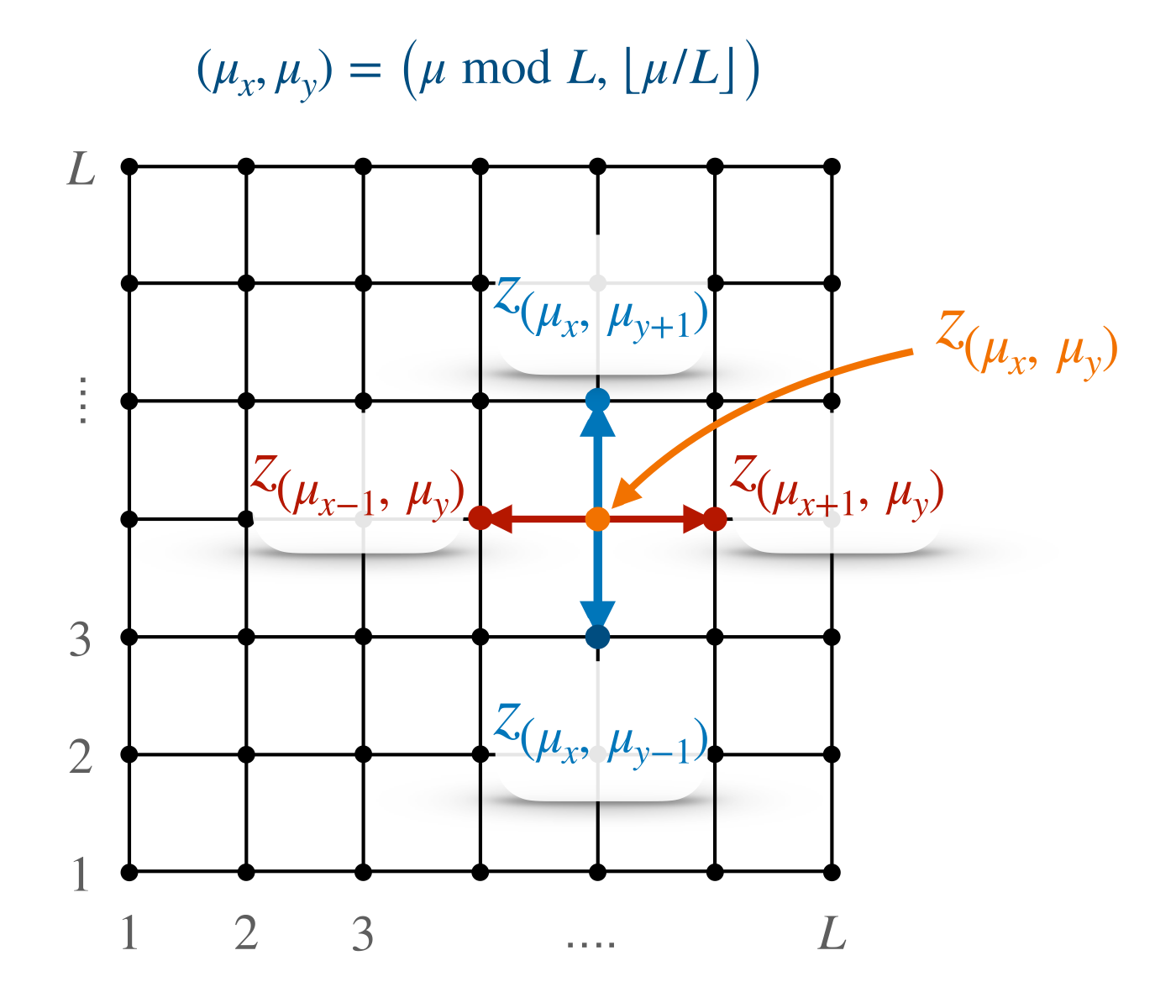}
    \includegraphics[width=0.6\linewidth]{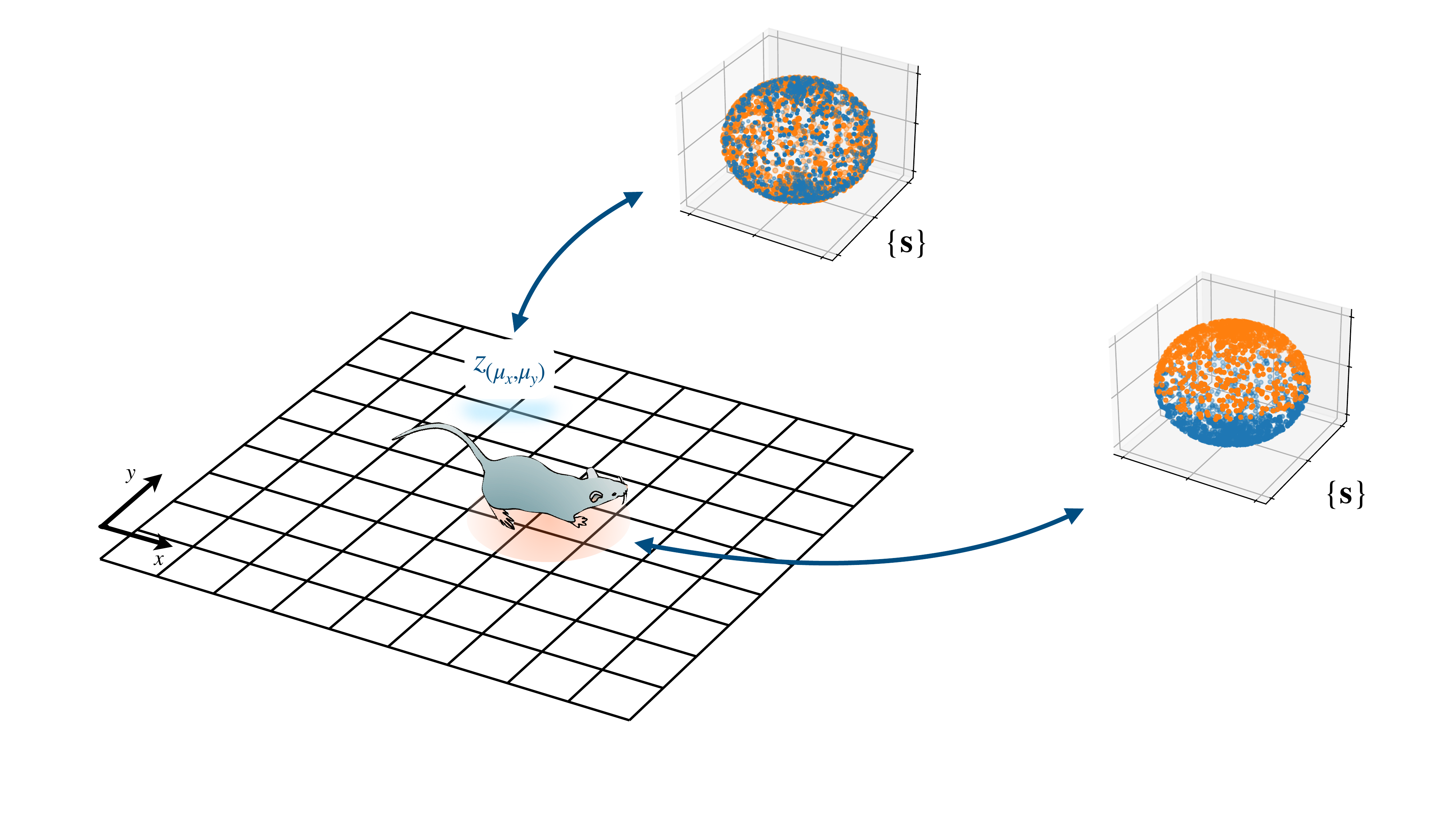}
    \caption{A sketch of the model. \emph{Left}: the place cells $\{\mathbf z\}$ are disposed on the vertices of the grid following the mapping that associates each index $\mu$ to the coordinate $(\mu_x,\mu_y)=(\mu \bmod L, \lfloor \mu/L\rfloor)$. \emph{Right}: as the animal crosses specific points in the environment, place cells activate accordingly, producing a coherent state in the space of grid cells $\{\mathbf s\}$ and relative map $\mu$.}
    \label{fig:graphic}
\end{figure}
\subsection{The simplest representation: one layer of grid cells and one layer of place cells}\label{SpatialRecognition}
Nowadays, the interplay of grid and place cells is understood to be essential for space orientation and navigation in mammals. However, grid and place cells are placed in two distinct areas of the brain, the hyppocampus and the MEC respectively, which causes some difficulties in a proper understanding of how these two neural circuits are wired together in order to produce the observed cognitive behavior related to space orientation. We propose a simplified model that combines place and grid cell-like neurons in a bipartite architecture, where a recurrent continuous attractor network of $N$ hidden neurons $\{s_i\}_{i=1,..,N}$ that play the role of grid cells, is (recurrently) connected to the visible layer built off by $K$ neurons $\{z_\mu\}_{\mu=1,..,K}$, which play the role of place cells. The bipartite architecture allows to bridge the internal space $\mathcal M_{hidden}$ to the visible space $\mathcal M_{visible}$, which represents the external environment navigated by the animal. Let us assume that $\mathcal M_{hidden}$ and $\mathcal M_{visible}$ have the same intrinsic dimension $D$, which is smaller than the dimension $d$ of the embedding space where these manifolds live, \emph{i.e.} $D=d-1$. Concretely, we shall focus on the bi-dimensional navigation embedded in our three-dimensional Euclidean space, so that $D=2$ and $d=3$, but our analytical results will be valid for any $d\geq 1$ in general\footnote{Note that $\mathcal M_{visible}$ has periodic boundary conditions, namely a toroidal topology.}.
\newline
Let us assume that each hidden neuron $s_i$ is mapped to the hidden space $\mathcal M_{hidden}$ via a multi-chart $\eta^\mu_i: \mathcal M_{hidden} \to \mathbb R^D$, one chart per each representation $\mu=1,..,K$ of the hidden space. The assumption of the existence of a multi-chart representation, rather than a single one, is essential as each such representation can be connected to a bijective map $\phi_\mu$ that bridges $\mathcal M_{hidden}$ to a point $r^\mu \in \mathcal M_{visible}$, \emph{i.e.} $\phi_\mu: \mathbb{\eta}^\mu \to r^\mu \in \mathcal M_{visible}$. In other words, we assume that the visible space $\mathcal M_{visible}$ is binned in $K$ bins, and the center $r^\mu$ of each bin is the anchor point of one visible neuron $z_\mu$, which in turn is connected to (couples of) hidden neurons $\mathbf s$ in a given fixed chart $\mathbf{\eta}^\mu$. Notice that we need a binning procedure that allows to bin this space with a number of bins $K$ of order $K\sim L^{d-1}$, where $L$ is the typical linear size of the visible space. This means that, for a generic embedding dimension $d$, we need a dense model whose order of interactions $p$ scales at least as $p=d$, hence allowing to extensively bin the external space with the size of the hidden layer: $N\sim L$. In our model, the quest that each place cells communicates with couples of grid cells automatically forces the lower (even) value of $p$  such that $p\geq d$, which for $d=3$ is $p=4$ and this suffices to allow for three-dimensional orientation and navigation as we deepen in the rest of the manuscript.\\
Let us now introduce the equations that govern the stochastic dynamics of our model:
\begin{align}
    &\tau_h \der[u_i(t)]{t} = -u_i(t) + \sum_{j=1}^N \sum_{\mu=1}^K J^\mu_{ij} \ z_\mu(t) s_j(t) + \epsilon_s(t)\\
    &\tau_v \der[z_\mu(t)]{t} = -z_\mu(t) + \frac{1}{2}\sum_{j=1}^N \sum_{i=1}^N J^\mu_{ij} \ s_i(t) s_j(t) + \epsilon_v(t)\\
    &s_i = \sigma(\gamma u_i),\label{eq:pre-post}\\
    &\avg{\epsilon(t)}=0, \:\: \avg{\epsilon(t)\epsilon(t')}=2\tau\beta^{-1}\delta(t-t')\label{eq:noisecorr}
\end{align}
where $\tau_h, \tau_v$ are the timescales of the hidden and visible layer respectively, $u_i$ is the pre-synaptic potential of the hidden neuron $i$, and it is related to the post-synaptic potential $s_i$ with the relation provided in eq. \ref{eq:pre-post}, where $\sigma(\gamma x)=\frac{1}{1+e^{-\gamma x}}$ is the sigmoid function with gain $\gamma >0$ and $\mathbf J^\mu=\{J^\mu_{ij}\}_{i,j=1,..,N}$ is the synaptic tensor that connects the hidden neurons in each chart $\mu$ to the corresponding visible place cells. Finally, the synaptic noise in each layer  $\epsilon(t)$ (with the subscripts $v$ and $s$) follows the fairly standard one- and two-points correlation relations, as coded in eq. \ref{eq:noisecorr} with fast noise (or 'temperature') $T=\beta^{-1}$ ruled by its (fastest) timescale $\tau$. 
\newline
For simplicity we assume that the visible neurons $\mathbf z$ activate via the identity input-output relation, but other choices (such as a ReLu activation) can be used. In our model, we do not study the problem of inferring the synaptic matrix $\mathbf J^\mu$ from the data, but rather assume that the maps $\bf{\eta}^\mu$ are known  (namely their entries are independently sampled accordingly to a uniform distribution as explained in Appendix \ref{AppendiceZero}) and we write directly the synaptic tensor $\mathbf J^\mu$ in the Hebbian form, which reads
\begin{align}
    J^{\mu}_{ij} = \sqrt{\frac{8}{N^{3}}} \ \eta^\mu_i \cdot \eta^\mu_j
\end{align}
where $\eta^\mu_i \cdot \eta^\mu_j$ is the usual dot product in $R^d$ and the pre-factor ensures the linear extensivity with the hidden layer's size $N$ in the thermodynamic limit: we refer to the Appendix \ref{AppendiceZero} to check the details that allow to write the standard Battaglia-Treves synaptic tensor in terms of this Hebb-like prescription.\\ 
%\mycomment{QUI METTEREI LA DEFINIZIONE DELLE ETA COME VETTORI NELLA SFERA UNITARIA ED ANCHE -SE REPUTI- NOTEREI CHE IL CAMPO CHE SENTONO LE ZETA E' DATO DALLE PAIRWISE CORRELATIONS DI S E NON DA UNA S SOLA. (EVENTUALMENTE SI PUO' O FERMARSI COSI' O AGGIUNGERE CHE QUESTO CI PERMETTE DI INFERIRE POSIZIONE E VELOCITA' MA VEDI TU...)}\\
In the zero noise limit $\beta\to\infty$ the dynamics becomes deterministic and admits the following Lyapunov function $\mathcal{H}(\boldsymbol  s, \boldsymbol z|\boldsymbol \eta)$ (that will also play  as the {\em Hamiltonian} in the analytical investigations and as the {\em Cost Function} in the numerical inspections that follow):
\begin{align}\label{eq:lyap}
    \mathcal{H}(\boldsymbol  s, \boldsymbol z|\boldsymbol \eta)=- \frac{1}{2}\sum_{\mu=1}^K \sum_{i,j=1}^N J^\mu_{ij} s_i s_j z_\mu +\frac{1}{2}\sum_{\mu=1}^K z^2_\mu + c(\boldsymbol s) 
\end{align}
where $c(s)=\sum_{i=1}^N \int^{s_i} ds_i'\ \sigma^{-1}(s_i')$ is a term arising from the input-output relation provided in eq. \eqref{eq:pre-post}. We note that  this and the other term $\propto z_{\mu}^2$ at the r.h.s. of eq. \eqref{eq:lyap} play the role the (negative log) of the prior over the hidden and visible neurons once one introduces the likelihood distribution  \cite{Coolen}, as it will become clear soon\footnote{Furthermore, in the statistical mechanical treatment that follows, these terms will be reabsorbed in the prior directly within  the partition function ({\em vide infra}).}.
\newline
It is indeed a simple exercise to show that the Lyapunov function  \eqref{eq:lyap} decreases along any dynamical trajectory $\{\boldsymbol u(t), \boldsymbol z(t)\}$, that is:
\begin{align}
    \der[\mathcal H]{t} = -\tau_v \sum_\mu \lr{\der[z_\mu(t)]{t}}^2 - \tau_h \sum_i \sigma'(\gamma u_i(t)) \lr{\der[u_i(t)]{t}}^2 \leq 0.
\end{align}
because $\sigma$ is an increasing function of its argument (such that $\sigma'(\gamma u_i)>0$) and it eventually reaches equilibrium at long times $t \to \infty$. For a given finite value of the synaptic noise $\beta<\infty$, the dynamics of the network is no longer deterministic, rather it becomes intrinsically stochastic and it can be studied by introducing the likelihood at time $t$, $p_t(\boldsymbol s, \boldsymbol z | \boldsymbol \eta)$, that -thanks to Detailed Balance granted by the symmetry of the Hebbian couplings in the Cost function-  converges for $t\to \infty$ to the following Boltzmann-Gibbs measure:
\begin{align}\label{eq:likelihood}
    \lim_{t \to \infty}p_t(\boldsymbol s, \boldsymbol z | \boldsymbol \eta) = p_{\infty}(\boldsymbol s, \boldsymbol z | \boldsymbol \eta)=Z^{-1}(\boldsymbol{\eta}) \ e^{-\beta \mathcal{H}(\boldsymbol  s, \boldsymbol z|\boldsymbol \eta)}
\end{align}
where $Z_N(\beta, \boldsymbol \xi)$, i.e., the normalization factor, is also referred to as the {\em partition function}. 
\newline
In the following, we take the infinite gain limit, \emph{i.e.} $\gamma \to \infty$, which results in boolean  variables for the hidden neurons, \emph{i.e.} $\boldsymbol s=\{0,1\}^N$ (namely, driven by simplicity, we keep the $s$ variables to be $N$ McCulloch $\&$ Pitts neurons as in the original Battaglia-Treves model), as this allows us to further simplify the sampling procedure without loosing much information. The z variables are instead real-valued neurons, equipped -as stated- with a Gaussian prior (i.e. the term  $\propto z_{\mu}^2$ in  the Cost function \eqref{eq:lyap}, whose -fairly standard- role is to prevent them to activate toward too high values). \\
In general, sampling from the likelihood \eqref{eq:likelihood} is difficult, if not intractable, since computing the partition function $Z$ is hard. In order to circumvent this difficulty, we use the pseudo-likelihood  \cite{pseudolikelihood1,pseudolikelihood2} in place of the likelihood, where we isolate the hidden neuron at site $i$, $s_i$, conditioned to all other neurons except it: $\{\boldsymbol s_{\setminus i}, \boldsymbol z\}$, and similarly for the visible layer, \emph{i.e.} we isolate $z_\mu$ conditioned to all other neurons $\{\boldsymbol s, \boldsymbol z_{\setminus \mu}\}$. Hence we define two pseudo-likelihoods, one per each layer, that read
\begin{align}
    &p(s_i|\boldsymbol s_{\setminus i}, \boldsymbol z, \boldsymbol \eta)=Z(\boldsymbol s_{\setminus i},\boldsymbol \eta)^{-1} e^{-\beta \mathcal H(s_i|\boldsymbol s_{\setminus i}, \boldsymbol z, \boldsymbol \eta)},\\
    &p(z_\mu|\boldsymbol s, \boldsymbol z_{\setminus \mu}, \boldsymbol \eta)=Z(\boldsymbol z_{\setminus \mu},\boldsymbol \eta)^{-1} e^{-\beta \mathcal H(z_\mu|\boldsymbol s, \boldsymbol z_{\setminus \mu}, \boldsymbol \eta)}.
\end{align}
The two pseudo-likelihoods defined above allow us to perform alternate Gibbs sampling as an algorithm for Monte Carlo dynamics. The advantage of this procedure is that now we are able to easily sample from the partition functions $Z(\boldsymbol s_{\setminus i},\boldsymbol \eta)$ and $Z(\boldsymbol z_{\setminus \mu},\boldsymbol \eta)$, allowing us to finally write the effective updating rules for the hidden and visible layers, which read
\begin{align}
    &P(s^{t+1}_i=1|\boldsymbol s^t_{\setminus i},\boldsymbol z^t,\boldsymbol \eta)=\sigma\lr{\beta h_i(\boldsymbol s^t_{\setminus i},\boldsymbol z^t,\boldsymbol \eta)},\\
    &P(z^{t+1}_\mu|\boldsymbol s^t,\boldsymbol z^t_{\setminus \mu},\boldsymbol \eta)=\mathcal N \lr{\zeta_\mu(\boldsymbol s^t,\boldsymbol z^t_{\setminus \mu},\boldsymbol \eta), \beta^{-1}},
\end{align}
where we introduced the cavity fields $h_i$ and $\zeta_\mu$ as follows
\begin{align}
    &h_i(\boldsymbol s_{\setminus i},\boldsymbol z,\boldsymbol \eta)=\sqrt{\frac{8}{N^{3}}} \ \sum_\mu \sum_{j\neq i} \eta^\mu_i \cdot \eta^\mu_j \ s_j z_\mu,\\
    &\zeta_\mu(\boldsymbol s,\boldsymbol z_{\setminus \mu},\boldsymbol \eta) = \zeta_\mu(\boldsymbol s,\boldsymbol \eta) = \sqrt{\frac{2}{N^{3}}} \ \sum_{i,j} \eta^\mu_i \cdot \eta^\mu_j \ s_i s_j.\label{eq:zcav}
\end{align}
Notice that the cavity field $\zeta_\mu(\boldsymbol s,\boldsymbol \eta)$ does not depend on $\boldsymbol z$ anymore, hence it plays the role of a magnetic field in the visible layer. This reveals the simplicity but also effectiveness of the present model: as a magnetic field is polarized in a given direction $\mu$ by virtue of its pairwise internal correlations among the grid neurons $\boldsymbol{s}$, the related place cell $z_\mu$ activates accordingly, producing a spike that is localized at position $r^\mu$ in the visible space $\mathcal M_{visible}$.  Furthermore, as maps are uncorrelated, in the large $N$ limit, once a place cell is firing (highlighting that the animal entered its place field), all the others stay silent (much as in the Hopfield benchmark, where once a Mattis magnetization has raised because its related pattern has been retrieved, all the other remain quiescent), thus -in the present model-  place cells spontaneously behaves as grandmother cells, acquiring the required selectivity that, empirically, typically these cells enjoy\footnote{For the sake of clearness, still bridging with the Hopfield reference, the possible presence of spurious attractor states implies that, still confined within the retrieval region whose existence we still must prove, not just a unique magnetization may rise from zero but, at worst, a few of them: this does not alter however the high specificity these cells acquire by working in the present architecture.}: we will prove, in the second part of the paper, that this grandmother-like behavior results to be pivotal in order to turn such a recognition model into a navigation model.

\bigskip

In the thermodynamic limit (and confined to the low noise and an affordable storage of charts), we expect the sampling procedure outlined above to converge towards global minima of the cost function \eqref{eq:lyap} that are continuously connected to form continuous attractors for the neural dynamics, which -in the present setting- carries the spatial information about the location of the animal in the external space. To inspect and quantify such a phenomenon, namely the ability of the network to orientate itself in the external environment and, consequently, navigate within it,  we must at first derive  its phase diagram  and then prove the existence of a not-empty retrieval region within it, \emph{i.e.} a phase where the model is able to produce spatially coherent states in the hidden manifold that are directly connected to localized activity in the visible space (the latter, in turn, are correlated to the animal's position in the physical space). 
\newline
To reach this goal, we need to introduce a set of {\em control parameters} (that, in turn, play as the axes of the phase diagram) and a set of {\em order parameters} (that are simple observables able to capture the macroscopic behavior of the network). We introduce three control parameters:
\[
\lambda, \qquad \beta = \frac{1}{T}, \qquad \alpha = \frac{p!}{2 d^{p/2}}\lim_{N \to \infty} \frac{K}{N^{p-1}},
\]
where   \(\lambda \in \mathbb{R}^+\) tunes the global inhibition strength in the network\footnote{Note that, in general, inhibition is mandatory to prevent the network from globally activating as, once we integrate out the place cells,  we are left with a dense network built off solely by Boolean variables $[0,+1]$ rather than Ising spins $[-1,+1]$.},  \(\beta \in \mathbb{R}^+\) is the so-called {\em inverse temperature}, ruling the level of fast noise in the dynamics\footnote{Note that, for \(\beta \to 0\), network dynamics is dominated by noise and resemble an unstructured random walk in configuration space. Conversely, in the zero-temperature limit \(\beta \to \infty\), the dynamics steepest descends accordingly to a  deterministic energy minimization, leading the system toward stable attractors that correspond to stored spatial maps.} and \(\alpha\) accounts for the load of patterns in  the network within the high-storage prescription, namely working at the maximal storage before blackout catastrophes may emerge. 
\newline
From now on, for the sake of simplicity, we fix one value of the control parameters, namely we work at $\lambda=1$: this simplifies considerably the calculations and, as we prove along the paper ({\em vide infra}, in particular Figure \ref{fig:P4P6P8} and its caption), if the network is able to work for unitary values of the inhibition strength, it can certainly work (even better) for other (close by) values\footnote{Indeed in the insets of   Figure \ref{fig:P4P6P8} we show $\alpha_c$ vs $\lambda$ where it shines that the case $\lambda=1$ plays as an effective lower bound for the critical storage (namely, slightly higher values of $\lambda$ improve the network performances).}.
\newline
Once the control parameters have been introduced, the macroscopic behavior of the system is naturally described by the following order parameters
\begin{align}
    \bb x_\mu &= \frac{1}{N} \sum_{i=1}^N \bb \eta^\mu_i s_i \quad \text{(population vector)} \label{x_mu} \\
    q_{ab} &= \frac{1}{N} \sum_{i=1}^N s_i^a s_i^b \quad \text{(grid cell's replica overlap)} \label{qab} \\
    p_{ab} &= \frac{1}{K} \sum_{\mu=1}^K z_{\mu}^a z_{\mu}^b \quad \text{(place cell's replica overlap)} \label{pab} \\
    m &= \frac{1}{N} \sum_{i=1}^N s_i \quad \text{(mean firing activity)} \label{m}
\end{align}
where \(a, b = 1, \ldots, n\) denote replica indices.
\newline
Next, as standard in the statistical mechanics of disordered systems, we introduce and study the (quenched) free energy of the model $\mathcal{A}(\alpha,\beta,\lambda)$, namely
\begin{equation}\label{EnergiaLiberaDefinizione}
    \mathcal{A}(\alpha,\beta,\lambda) := \lim_{N \to \infty} \frac{1}{N} \mathbb{E}\ln Z_{N,K}(\beta, \lambda, \bb \eta),    
\end{equation}
where the expectation $\mathbb{E}$ averages over the randomness in the quenched charts: as standard in the theoretical investigations, these are entirely random objects, namely their entries are Rademacher variables. 
\newline
Once reached an expression for the quenched free energy in terms of the control and order parameters of the theory, its extremization w.r.t. the order parameters returns to a set of self-consistency equations that trace their evolution in the space of the control parameters, whose inspection allows to paint the phase diagram of the model, namely to obtain the explicit evolution of the order parameters in the space of the control parameters\footnote{For the sake of clearness,  to be sharp, due to historical reasons we are using the statistical pressure $\mathcal{A}$ and not the free energy $F$ with no loss of generality as $\mathcal{A} = -\beta F$.}.
\newline
Given that the dynamics of the visible neurons is driven by the internal correlations among the hidden neurons $\boldsymbol{s}$, we can safely integrate out the formers over the factorized Gaussian measure $D\boldsymbol z=\prod_\mu \frac{dz_\mu}{\sqrt{2\pi \beta^{-1}}} \ \exp\lr{-\frac{\beta}{2}z_\mu^2}$, and write the partition function as follows
\begin{align}\label{DualityEq}
Z_{N,K}(\beta, \lambda=1, \bb \eta) &= \sum_{\bb s=\{0,1\}^N }\int D\boldsymbol z\: \exp\lr{ \beta\sqrt{\frac{2}{N^{3}}}\sum_{\mu=1}^K \sum_{i,j=1}^N \ \eta^\mu_i \cdot \eta^\mu_j s_i s_j z_\mu }\\\label{eq:gridZ}
&=\sum_{\bb s=\{0,1\}^N } \exp\left( \frac{\beta}{N^3}\sum_{i_1,i_2,i_3,i_4=1}^N\sum_{\mu=1}^K \eta^\mu_{i_1} \cdot \eta^\mu_{i_2}\ \eta^\mu_{i_3} \cdot \eta^\mu_{i_4}\ s_{i_1}s_{i_2} s_{i_3}s_{i_4}\right)
\end{align}
\begin{figure}[!h]
    \centering
    \includegraphics[width=0.5\linewidth]{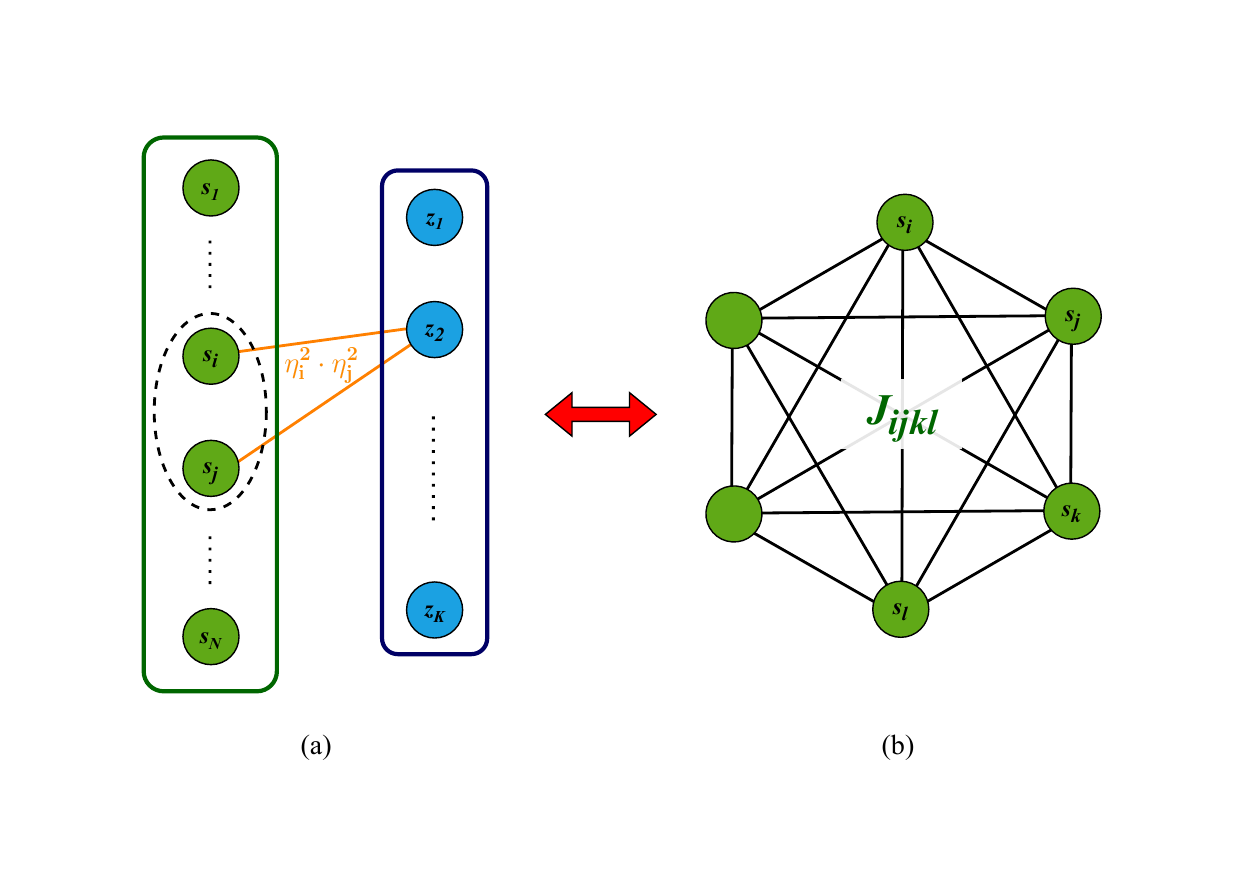}
    \caption{Duality of representation. \emph{Left}: the two-layer neural network where couples of grid cells (green circles in the left layer) are coupled to a place cell (blue circle in the right layer). For the sake of simplicity only one triplet -\emph{i.e.} one coupling- is shown. \emph{Right}: the equivalent representation (obtained by marginalizing out the place cells, see eq. \eqref{DualityEq}) in terms of a dense Battaglia-Treves-like neural network of grid cells only.}
    \label{fig:dualita}
\end{figure}
%This is equivalent to study the dynamical regime where the visible neurons $\boldsymbol z$ evolve over a temporal scale much faster then the hidden's one: $\tau_v \ll \tau_v$
Namely we reached the equivalent dual representation of this model naturally in terms of a dense Hebbian network built off by solely grid cells: see Figure \ref{fig:dualita}.
%(that have stored the correlations inferred in the place cell's firing paths within their interactions).
While we refer to the supplementary material for the mathematical details that allow to express  the free energy of this class of models in terms of control and order parameters  (achieved by two independent approaches, namely interpolation technique -see Appendix \ref{Appendic:interpolazione}- and replica trick -see Appendix \ref{sec:replica}), hereafter we report directly the results that stem from this investigation and its  extremization, namely the explicit expression of the free energy as well as the  self-consistent equations for its order parameters.
\newline
%Once introduced as order parameters the variables $\bb x_\mu = \frac{1}{N} \sum_{i=1}^N \bb \eta^\mu_i s_i$ (that is the {\em population vector} quantifying retrieval of the given $\mu^{th}$ chart), $m=\frac{1}{N} \sum_{i=1}^N s_i$ (that is the {\em mean firing rate} in the network) and the two-replica overlap $q_{ab} = \frac{1}{N} \sum_{i=1}^N s_i^a s_i^b$ (that is the {\em spin-glass quantifier}, accounting for the glassiness in the network, being (a,b) two replicas of the network \cite{MPV}),  
Under the replica symmetry ansatz (namely assuming that these stochastic variables do not fluctuate in the thermodynamic limit, rather they concentrate around their unique averages $\overline{m}$, $\overline{x}$, $\overline{q}$), the free energy of the dense Battaglia-Treves model reads as
\begin{align}\label{eq:freeen}
    &\mathcal A\lr{\alpha,\beta, \lambda} = \lr{1-p} \beta \| \bb{\overline{x}} \|^p - \beta \lr{\lambda-1} \lr{1-p} \overline{m}^p + \lr{1-p} \alpha \beta^2 \lr{ \overline{q}_1^p -\overline{q}_2^p } + \nonumber\\
    & \quad +\expect_{\bb{\eta}} \int Dz \ln \bigg[1 + \exp \bigg(\beta p \| \bb{\overline{x}} \|^{p-2} \lr{\bb{\overline{x}} \cdot \bb{\eta}} - \beta p \lr{\lambda-1}\overline{m}^{p-1} + \alpha \beta^2 p \lr{ \overline{q}_1^{p-1} - \overline{q}_2^{p-1}} + \beta \sqrt{2 \alpha p \overline{q}_2^{p-1}} z \bigg) \bigg].
\end{align}
\begin{figure}[!h]
    \centering
    \raisebox{0.35em}{\includegraphics[width=0.50\linewidth]{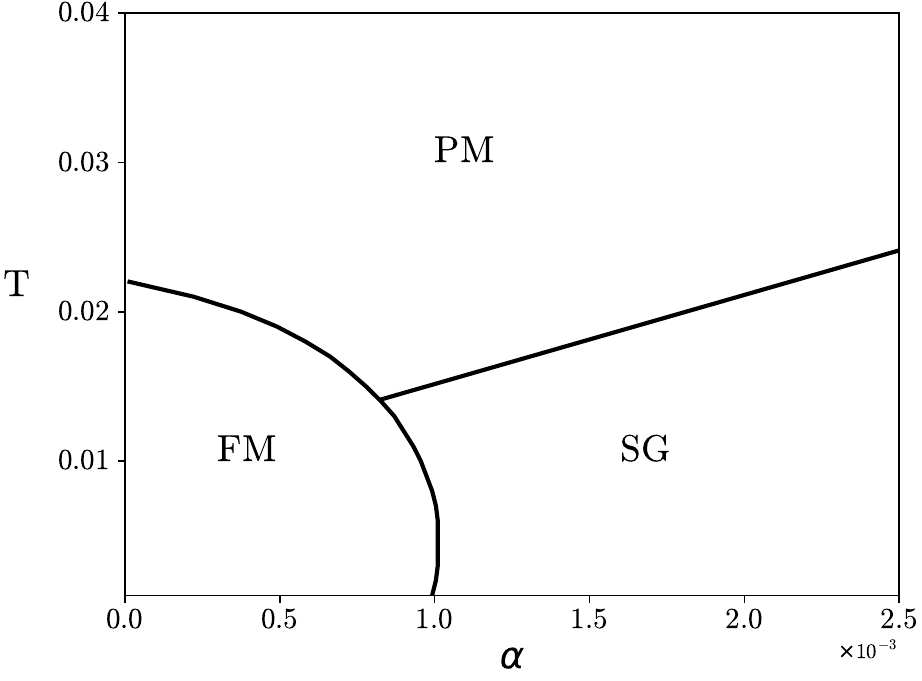}}%
    \hspace{0.5em}%
    \caption{Phase diagram of the dense Battaglia-Treves model for $p=4$. 
    Notice that the free energy eq. \ref{eq:freeen} is formally invariant under $d$ (since $d$ always only appears inside the definition of $\alpha$), hence the phase diagram is the same for every value of $d$.
    As expected, we note the presence of three regions, namely the high noise limit captured by the {\em paramagnetic phase} $(PM)$  -where nor computational capabilities neither spin glass features appear- the low noise but too much load regime captured by  the spin glass region $(SG)$ (where glassy features are shown but the model is handling too much information and it fails in performing chart recognition) and the {\em ferromagnetic phase}  $(FM)$  -where retrieval of maps is effectively achieved by the network in this challenging high storage regime where $K \propto N^{3}$.}
    \label{fig:DiagrammiFaseMain}
\end{figure}

%\begin{figure}[!h]
%    \centering
%\includegraphics[width=0.5\linewidth]{MC_beta1e3_50reps.pdf}
%    \caption{Markov chain Monte Carlo (MCMC) simulations for the case $p=4$ in $d=3$ at $\beta=10^3$: evolution of the chart magnetization $\overline x$ as a function of the load $\alpha$, averaged over $50$ different independent realizations of the model. 
%    The simulated points are fitted with a sigmoid function, resulting in the continuous lines; errors are also shown as colored shadings around the fitted curves.
    %(Right) Susceptibility of $\overline x$  still as a function of the load $\alpha$ and averaged over $20$ different independent realizations of the model. In both the plots, t
%The black vertical line shows the theoretical transition line, obtained by solving the self-consistent equations.
    %, while the peaks in the susceptibilities are marked with vertical bars of the same color of the size of the network they refer to: note indeed that different colors represent different network sizes as reported in the legend so to allow for a finite-size-scaling inspection (namely a visual comparison to the asymptotic behavior returned by the theory and presented by continuous lines), that shows how -by increasing the size of the network- these thresholds collapse to the theoretical one derived in the infinitely large network limit. 
%    }
%    \label{fig:MCMC}
%\end{figure} 
Remarkably, as deepened in the Appendix, in reaching this expression we ensured the maximal scaling $K = \frac{2 \alpha d^{p/2}}{p!} N^{p-1}$, namely the expected supra-linear scaling $K \propto N^{p-1}$ that, in this particular setting with $p=4$, reads as $K= \alpha N^3$: if we now prove the existence of a not-empty retrieval region in the phase diagrams of this network (as it is indeed shown by the plots provided in Figure \ref{fig:DiagrammiFaseMain} for both two and three dimensions) we reached the first part of our thesis: once the analytical inspections grant the existence of such a retrieval region, we then must computationally verify that, actually, confining the network to that region, it is indeed able to reconstruct the spatial charts and, eventually, use them for navigation.  
\newline
The set of self-consistent equations that trace the evolution of the order parameters in the space of the control parameters - stemmed by the quest $\nabla_{\bb{\overline{x}},\overline{q_1},\overline{q_2}}\mathcal{A}\lr{\alpha,\beta,\lambda}=0$ - is reported hereafter and allows us to  draw the phase diagrams reported in Figure \ref{fig:DiagrammiFaseMain}.
\newline
\begin{align} \label{eq:self1}
&\| \bb{\overline{x}} \| ^2 = \int D\boldsymbol z\  \langle \ \lr{\bb{\overline{x}} \cdot \bb{\eta}} \sigma \lr{ \beta h(\bb{\overline{x}}, \overline{q}_1,\overline{q}_2; z)}\ \rangle_{\bb{\eta}},\\ \label{eq:self2}
&\overline{q}_1 = \int D\boldsymbol z\  \langle \ \sigma(\beta h(\bb{\overline{x}}, \overline{q}_1,\overline{q}_2; z)\ \rangle_{\bb{\eta}},\\ \label{eq:self3}
&\overline{q}_2 = \int D\boldsymbol z\  \langle \ \sigma^2(\beta h(\bb{\overline{x}}, \overline{q}_1,\overline{q}_2; z)\ \rangle_{\bb{\eta}},\\ \label{eq:hfield}
&h(\bb{\overline{x}}, \overline{q}_1,\overline{q}_2; z) = p \| \bb{\overline{x}} \|^{p-2} \lr{\bb{\overline{x}} \cdot \bb{\eta}} - p \lr{\lambda-1}\overline{m}^{p-1} + \alpha \beta p \lr{ \overline{q}_1^{p-1} - \overline{q}_2^{p-1}} + \sqrt{2 \alpha p \overline{q}_2^{p-1}} z.
\end{align}
where $(\bb{\overline{x}}, \overline{q}_1,\overline{q}_2)$ are the expected values of the population vector and the diagonal and off-diagonal part of the overlap respectively and $h(\bb{\overline{x}}, \overline{q}_1,\overline{q}_2; z)$ is the internal effective field.

\begin{figure}[!h]
    \centering
    \includegraphics[width=0.32\linewidth]{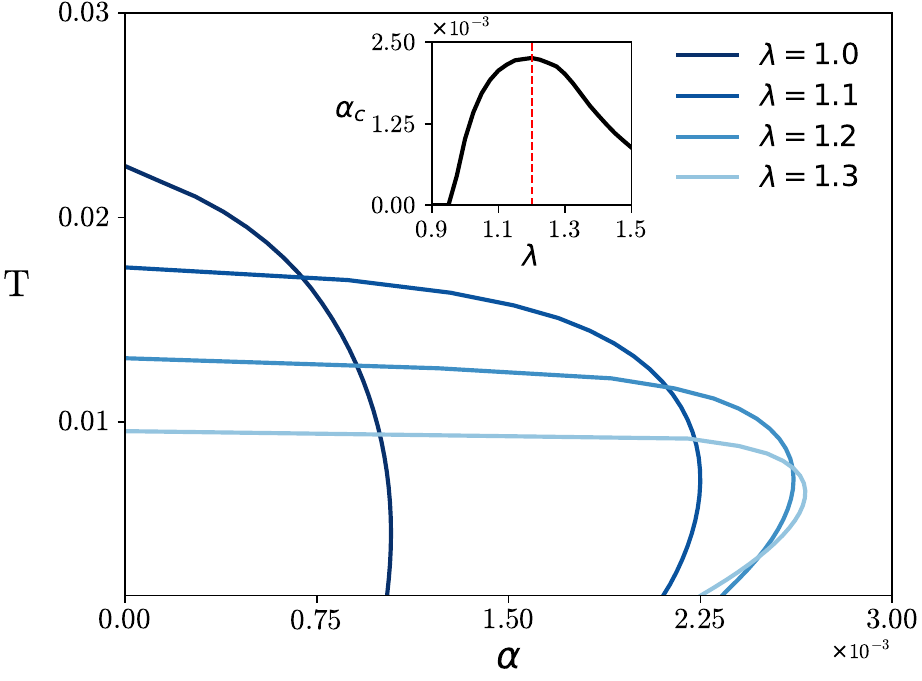}
    \includegraphics[width=0.32\linewidth]{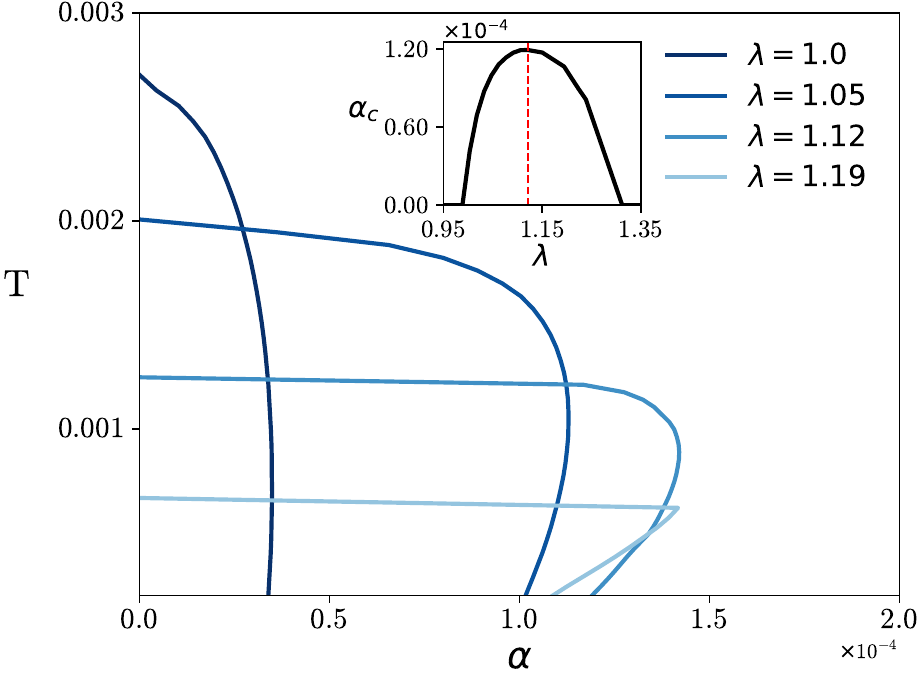}
    \includegraphics[width=0.32\linewidth]{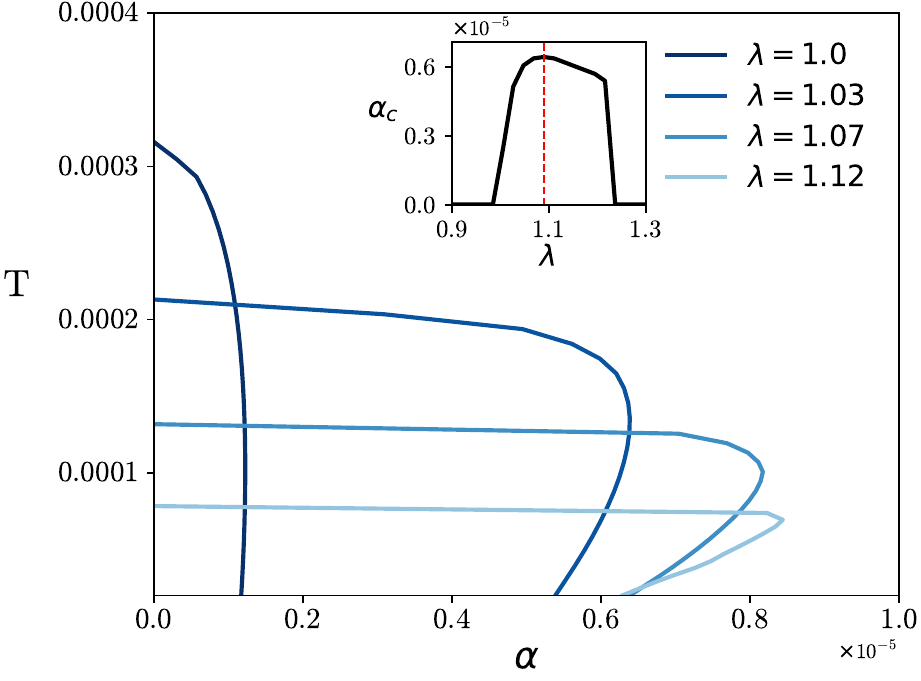}
    \caption{
    Phase diagrams of dense Battaglia-Treves like neural networks of grid cells with (left) $d=2, p=4$,  (center) $d=2, p=6$ and (right) $d=2, p=8$ at different values of the inhibition strength $\lambda$ confined to planar motion. We remark two points: The former is that, qualitatively,  the existence of a retrieval region is robust with respect to the density of the network. The latter is that, while $\lambda$ actually does not appear in the main text (due to our choice of using  $\lambda=1$ in order to simplify the mathematical treatment, as explained in Appendix \ref{AppendiceZero}), such working assumption (i.e. $\lambda=1$) is roughly a worse case scenario as proved by the insets of these  diagrams, where the maximal storage $\alpha_c$ is shown versus $\lambda$.}
    \label{fig:P4P6P8}
\end{figure}
Furthermore, in Figure \ref{fig:P4P6P8} we also provide and compare the phase diagrams that networks equipped with, respectively, $p=4$, $p=6$ and $p=8$ couplings, would give rise to: the ultimate purpose of these plots is to prove robustness of the theory also w.r.t. the network's density because, while  we analyzed the case $p=4$ as a special test case, the theory is completely general. We stress that we have conducted MC simulations in order to test numerically our theoretical results of the $p=4$ model and confirmed that the scaling $K\propto N^{3}$ is indeed correct, with a critical load $\alpha_c$ at zero temperature that is compatible with the theoretical value of $\alpha_c = 10^{-3}$ within the numerical errors, for either $d=2,3$ dimensions.

\subsection{Extending the model from  {\em spatial recognition} to  {\em spatial navigation}}\label{SpatialNavigation}
The model described so far is a reconstruction model, not yet a behavioral model, namely it is able to store and recall several attractor states corresponding to discretized positions in the visible space but there is no dynamics underlying (i.e. it accounts for chart reconstruction by a static animal). At contrary, biological neural circuits that perform spatial navigation are typically able to reconstruct the animal's position dynamically, that is, while the animal is moving within its environment. 
\newline
In other words, the animal's position changes according to the animal's motion, with mechanisms that can vary in accordance with biological complexity but that typically share the main functioning: external stimuli representing angular or linear velocity are reconstructed according to visual clues about the external environment, and are used to modulate the synaptic interactions in the neural layers where the bump of activity is correlated with the position\footnote{Note that this modulation drives the system out of equilibrium -as Detailed Balance is no longer granted- and typically breaks the symmetry of interactions, leading to the emergence of new dynamical effects that can, in some cases, break the stability of the attractors of the dynamics leading to chaotic regimes \cite{Crisanti}: we will not deepen these chaotic aspects in the present paper.}. 

In our model, we assume that the external velocity of the animal in the visible space $\mathcal M_{visible}$ is well reconstructed (due to the fact that place cells interact with couples of grid cells in the core-model defined in eq. \eqref{eq:lyap}) and available to the layer of place cells, where it is used to modulate new interactions in the $\mathbf z$ layer\footnote{Note that, tacitely, we assume that these interactions take place on a lower timescale compared to $\tau_z$, allowing the core-network presented so far (namely the cost function \eqref{eq:lyap}) to work effectively in a quasi-equilibrium regime.}. The role of these new interactions is to drive the network -and thus the moving animal- towards the next basin of attraction, which represents a new position in the visible space that is correlated with the animal's true position as the latter explores the environment\footnote{The strength of the new interaction is represented by the firing rate of a new layer of neurons, in analogy to conjunctive neurons that receive information about linear velocity and current position from the cells immediately above them in the visible layer \cite{PlaceCellOriginal, battaglia-moser}, however --for the sake of simplicity-- we omit this third layer and directly simulate the activity of such conjunctive neurons via effective fields to be added core Cost function provided by \eqref{eq:lyap}.}: see the introductory Figure \ref{eq:lyap}. 
\newline
Operatively, thanks to the grandmother like behavior of the place cells in this model, adding a new effective term  to generalize $\mathcal{H}(\bb{\sigma},\bb{z}|\bb{\eta}) \to \mathcal{H}(\bb{\sigma},\bb{z}|\bb{\eta}) + \mathcal H_{nav}$ in order to turn the model into a navigation model is a trivial task: indeed, again by a glance at Figure  \ref{eq:lyap}, it is immediate to realize that the new {\em navigation term} must read as
\begin{align}
    \mathcal H_{nav} =\  &J^+_{x} \sum_{{\mu_x,\mu_y}=1}^L z_{(\mu_x,\mu_y)} z_{(\mu_x+1,\mu_y)} +J^-_{x} \sum_{{\mu_x,\mu_y}=1}^L z_{(\mu_x,\mu_y)} z_{(\mu_x-1,\mu_y)} +\nonumber\\
    &+J^+_{y} \sum_{{\mu_x,\mu_y}=1}^L z_{(\mu_x,\mu_y)} z_{(\mu_x,\mu_y+1)} +J^-_{y} \sum_{{\mu_x,\mu_y}=1}^L z_{(\mu_x,\mu_y)} z_{(\mu_x,\mu_y-1)}.
\end{align}
Notice that the indices $(\mu_x, \mu_y)$ represent the coordinate of each neuron $z_\mu$ in the visible $D=2$-dimensional space via the following transformation:
\begin{align}
    \mu_x=\mu \text{ mod } L, \ \ \mu_y=\lfloor \mu/L\rfloor.
\end{align}
In this way we are able to cover the space $\mathcal M_{visible}=L^2$ by placing the place cell $z_\mu$ in the position given by the coordinates $(\mu_x,\mu_y)\in M_{visible}$. Notice also that the covering is periodic along both directions, hence realizing a toroidal topology, as observed experimentally in biological place cells \cite{hafting2005microstructure}\footnote{We stress once more that modeling such an extension from spatial recognition to spatial  navigation  would be, in principle, rather complicated, while here -ultimately due to the highly selective firing of place cells that allows them to behave in a quasi-grandmother manner-  it can be taken into account by simply coupling in a pairwise manner two  {\em consecutive} place cells. Furthermore, we also stress that -despite the high selectivity of the place cells is empirical well established (and related to the amplitude of their place fields), here we did not assume their behavior, rather we obtained it as an emergent property of the collective action of all the cells.}.
\newline
The quantities $(J^\pm _{x}, J^\pm_{y})$ are functions of time accounting for the firing activity of the conjunctive neurons \cite{battaglia-moser}, which are modulated by the external velocity $v_{ext}$ of the animal. In particular, suppose that a conjunctive neuron responsible for a shift in the $x$ direction, fires $f_x\tau$ times (where $\tau$ is the conjunctive neurons timescale): each time it fires, it drives the activity of $z_\mu$ towards the right (or left) direction by one unit of distance. Hence, if the $x$ component of the velocity of the animal is $v_x$, we assume the simplest proportionality rule: $f_x\tau \sim |v_x|$, where the sign of $v_x$ selects which conjunctive neuron has to operate (there are left- and right- neurons responsible for the motion along $+x$ and $-x$), and similarly for the $y$ direction. The ratio $f_x\tau/|v_x|$ gives the relative strength of the new interaction with respect to the intensity of the cavity fields defined in  \eqref{eq:zcav}. 
\newline
Concretely, the simplest modeling assumption is to chose $(J^\pm _{x}, J^\pm_{y})$ to be proportionally representing the firing activity of the conjunctive neurons, such that along the $\pm x$ directions we have
\begin{align}\nonumber
    \frac{1}{T}\int_0^T J^\pm_{\mu_x}(t)dt \sim f^\pm_x\sim |v_x|
\end{align}
where $f^\pm_x$ is the firing rate of the $\pm x$ conjunctive neurons, and similarly along $y$. 
\begin{figure}
    \centering
    \includegraphics[width=0.35\linewidth]{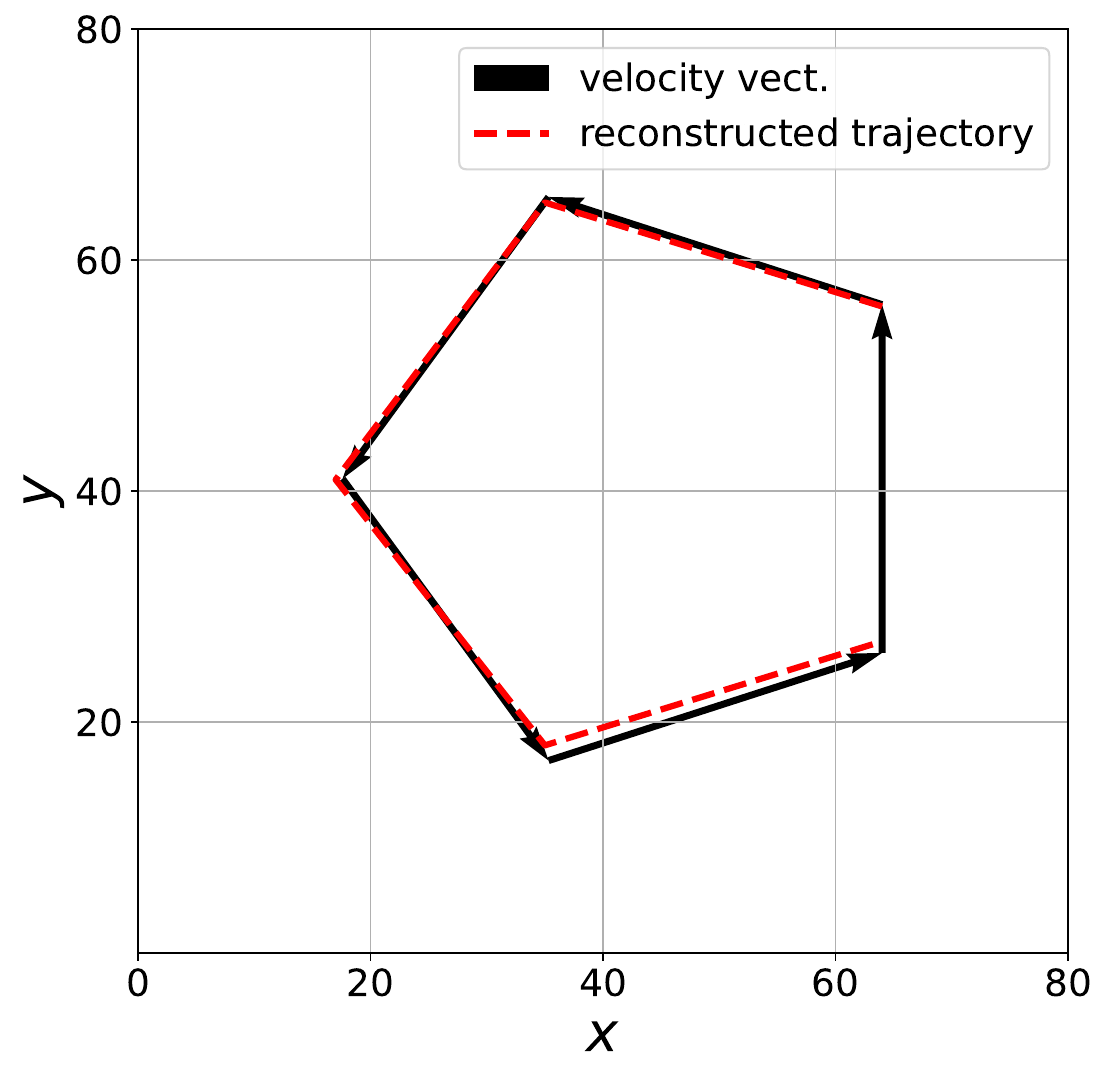}
    \includegraphics[width=0.35\linewidth]{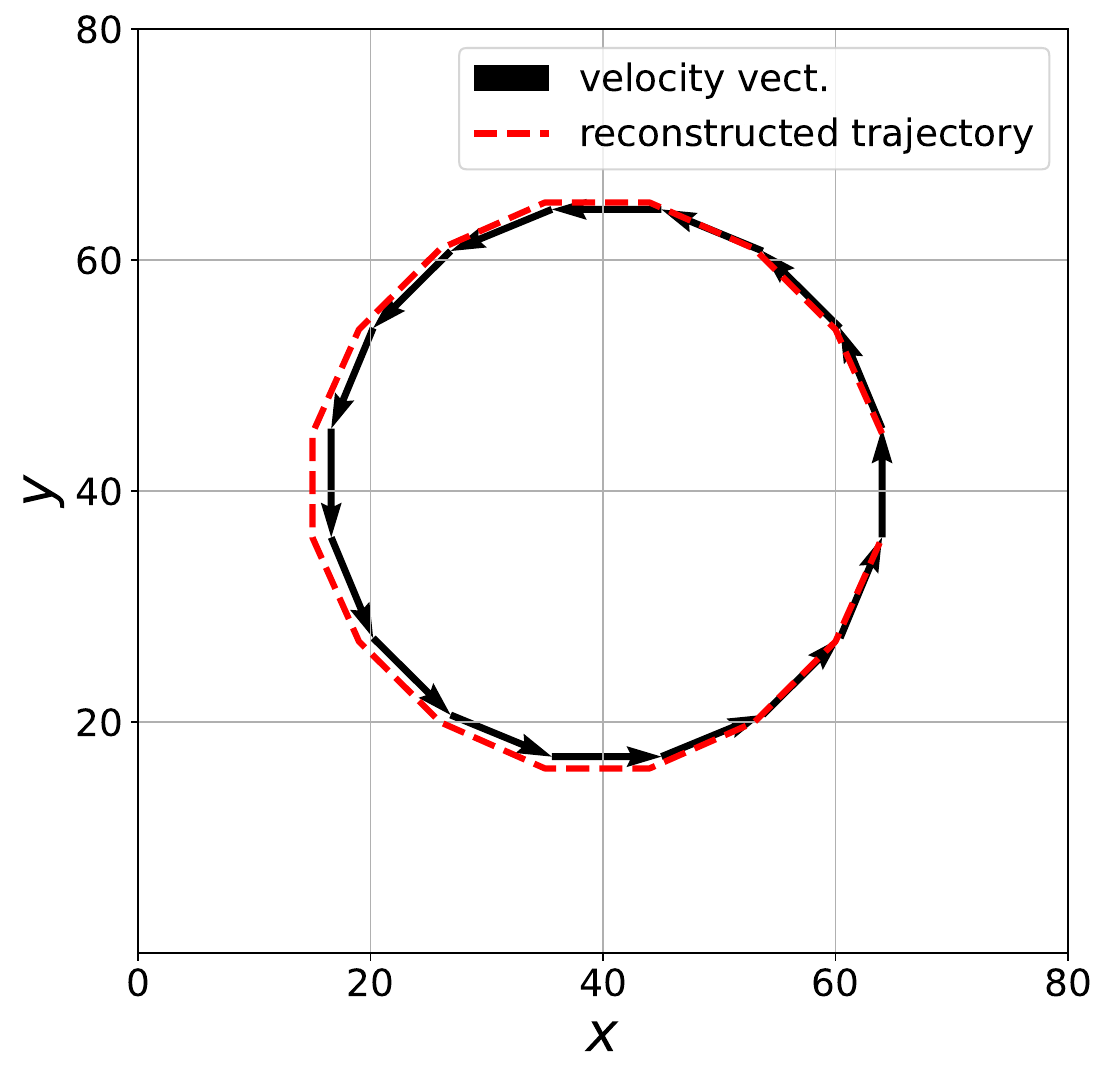}
    \caption{Simplest tests of the dynamical reconstruction of two trajectories, the first one in the shape of a pentagon (left) and the second one a circular trajectory (right) of the 'animal' by the dense network. Black: true velocity of the animal, represented by arrows starting at the prescribed  anchor points,  for all the involved tiles within the plane where the motion happens. Overall, all these arrows roughly form a circle in the visible space $\mathcal M_{visible}$. Red: reconstructed trajectory in the $\mathbf z$ layer of place cells: just by visual inspection, we can appreciate how the reconstructed trajectory resembles the real motion, nevertheless, it also shines that there are errors in the reconstruction (i.e. the two circles do not perfectly overlap as the red arrows sometimes lie inside the black circle some other times outside, preserving zero mean, but not-zero standard deviations).}
    \label{fig:circle}
\end{figure}
Such a dynamical model is able to reconstruct the trajectory of the animal rather well, despite not perfectly, as shown -as a test case- at first in a particularly simple circular motion presented in Fig. \ref{fig:circle}: a crucial point is that, unavoidably, the reconstruction error is inversely related to the resolution by which the visible space is tessellated (hence the reason for a sufficiently fine-grained grid and, thus, a dense network) and it is accumulated during the integration of the velocity along the trajectory as we now deepen focusing on more classical experiments.

Now we try and reproduce computationally, by our navigation model, the celebrated firing fields presented by the Moser's and their collaborators in their famous work \cite{Leutgeb}: we confine the numerical animal in a squared box of side $L=100$ and we force it to perform a standard random walk. In Fig. \ref{fig:randomtraj} (left panel), the activity of place cells neurons $\textbf{z}$ is shown for a random trajectory of the animal. Notice that, even if each place cell has a fixed place field of area $1$ around it by construction (see Appendix \ref{AppendiceZero}), it can actually fire even outside this region: this is due to  the error in reconstruction that gets accumulated by integrating the external velocity $v$ of the animal as times goes by,  as reported in Fig. \ref{fig:errore}. 
\begin{figure}[!h]
    \centering
    \includegraphics[width=0.45\linewidth]{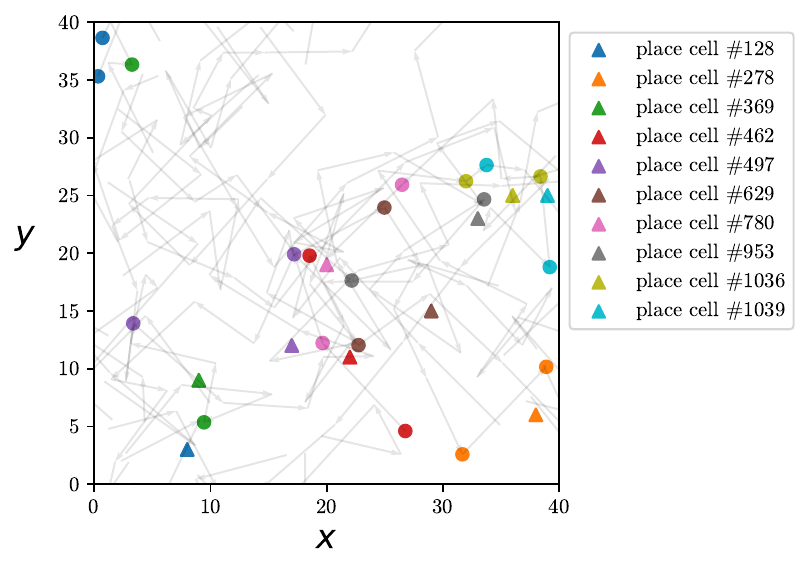}
    \includegraphics[width=0.5\linewidth]{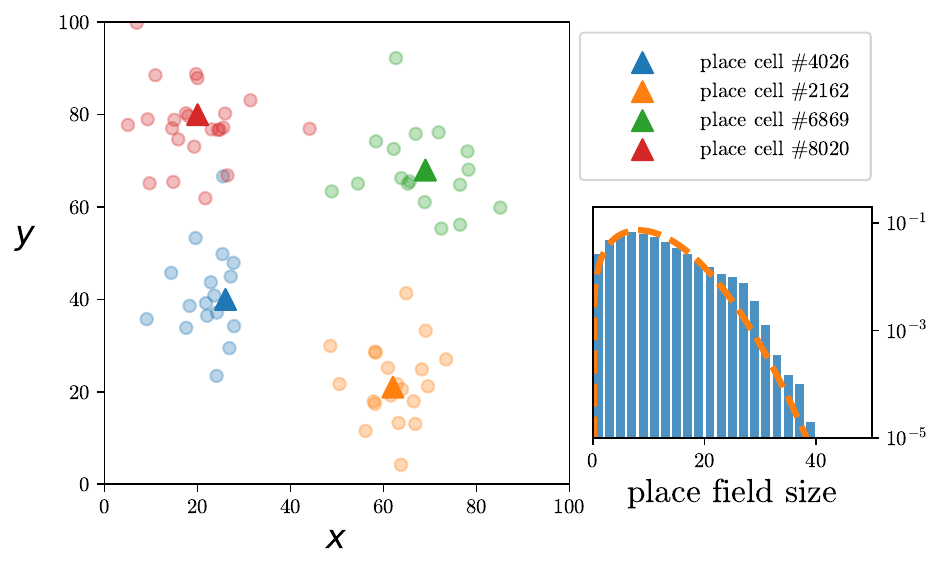}
    \caption{In this picture we reproduce theoretically, by relying upon our model with $p=4$, the same kind of plots produced empirically in the celebrated experiments reported in \cite{Leutgeb}. (Left) Simulation of a single random walk by the numerical animal is shown on a $L\times L=30\times30$ grid: while motion takes place, a subset of place cells (i.e. those indicated in the legend) activate (and are visually represented by circle points) as the animal passes closely enough to their anchor point (that are instead indicated as a triangle). (Center) After simulating several random trajectories on a larger (i.e. $L\times L=100\times 100$) grid, we have enough statistics to show the place fields of four selected place cells, which are defined as the effective area where a given cell fires. (Right) Resulting distribution of the amplitudes of the place fields: we highlight that such a histogram is fairly well compatible with the hypothesis that it follows the Gamma distribution (i.e. the $\rho(r)$ reported in eq. \eqref{eq:gamma}, shown as a dashed orange curve in the lin-log plot). 
    Note that these plots show planar navigation embedded in a three-dimensional space as, due to the periodic boundary conditions ($x+L \to x$ and $y+L \to y$), these plane by a glance are actually tori in three dimensions.}
    \label{fig:randomtraj}
\end{figure}
For each place cell, the one-dimensional errors $\delta\vec r=(\delta x,\delta y)$ are defined as the displacement between the place field centers and their firing locations: crucially, if we do a histogram of their Euclidean distances --defined as $ r=\norm{\delta\vec r}_2$, as shown in Fig. \ref{fig:randomtraj} (top right panel)-- these distribute according to a Gamma distribution
\begin{align}\label{eq:gamma}
    \rho(r)=\frac{r}{\sigma^2}e^{-\frac{r^2}{2\sigma^2}}
\end{align}
\begin{figure}[!h]
    \centering
    \includegraphics[width=0.9\linewidth]{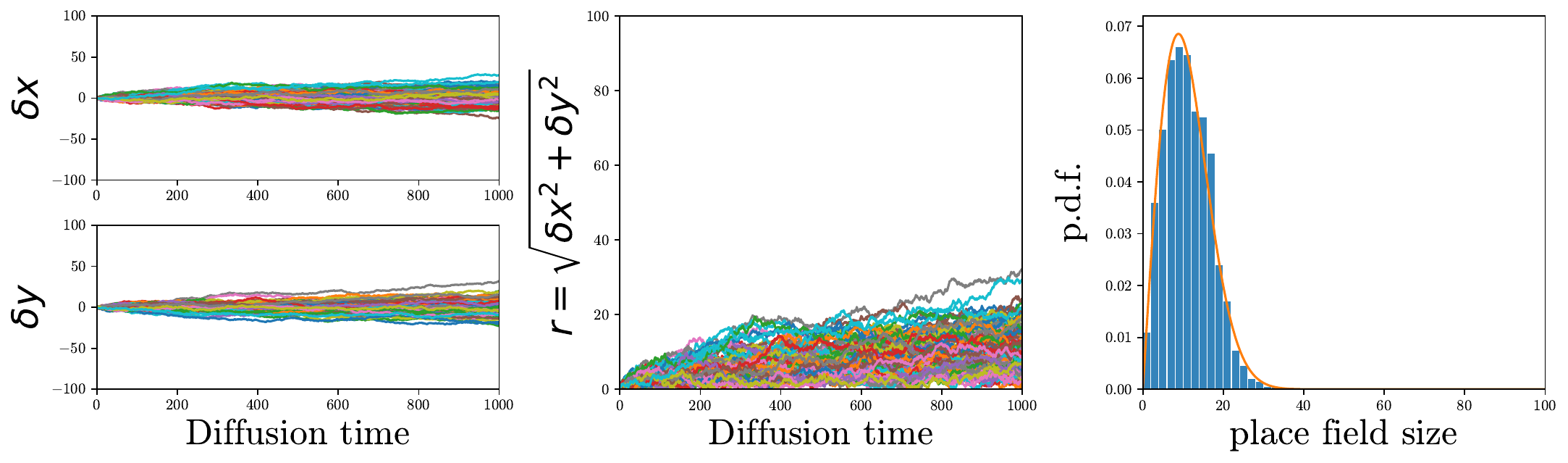}
    \caption{Errors in trajectory reconstruction by the present network as the random walk of the animal takes place. (Left)  Errors accumulate as time elapses, giving rise to a diffusion-like process.  (Middle) As a result, while the mean is kept zero (errors $\delta x,\ \delta y$ are symmetrically distributed), the distance $\sqrt{(\delta x)^2+(\delta y)^2}$ between the anchor point and the real position of the animal (when the corresponding place cell fires) gets biased (i.e., it is no longer unitary, as assumed a priori) giving rise to the Gamma distribution shown in the (Right) panel on a lin-lin scale (relative to a snapshot of the error taken at the final time of the simulated dynamics $T=1000$).}
    \label{fig:errore}
\end{figure}
where $\sigma$ is the standard deviation of $r$ that --given the diffusive nature of the process--  scales as $\sigma \sim \sqrt t$ (where $t$ is the time of the random walk): we speculate that this intrinsic error in reconstruction is the origin of the Gamma distribution (shown on a lin-log scale in Fig. \ref{fig:randomtraj}  but also on a lin-lin scale in the third panel of Fig. \ref{fig:errore}, dedicated to deepening the accumulation of noise during motion) that is empirically observed both in mice performing random walks in small cages \cite{TopiProva} as well as in  bats  performing Levy flights in long tunnels \cite{BatsProva} (see also \cite{Prova1} for a quantitative statistical analysis). 
\newline
We remark that this feature is not intrinsic to the present model, rather it happens regardless the selected tradeoff between density of the network and resolution of the tessellation as we now explain: indeed, even if we would work out a better model (eventually departing from biological evidence) and we assume that each place cells $z_{\mu}$ interact with three grid cells per time (so to eventually capture information also on the acceleration, namely collecting one- two- and three-point correlation functions), then the dual dense network would be a  generalized dense Battaglia-Treves model with couplings involving six-neurons: this would simply result in a slower time for accumulating noise during integration of the trajectory that, ultimately, still gives rise to the Gamma distribution: the impasse is intrinsic to the balance between resolution of tesselation and effective density of the network. 
\newline
As a result, we speculate that such an intrinsic tradeoff between resolution in tessellation and density of the effective dense network that must handle such a grid could be the underlying reason  of the (almost universal) Gamma distribution of place fields typically observed in the experiments\footnote{The main source of errors lies in the integration of the velocity itself given the resolution of the visible space: this gives rise to a tradeoff between storage capacity and resolution that dense network can cope with, as explained in  \cite{PRL-NOI}, see also \cite{MarulloRev2,MarulloReview}.}. We also point out that navigation flows easily if handled by the present network and this is also due the fact that dense networks have better shaped minima (if compared to shallow counterparts) that allows easily the place cells to drive the motion. At contrary, with the same resolution, a standard pairwise Battaglia-Treves model would face a storage by far above the critical value making the network stuck in spin glass minima, hence resulting in several place cells active at once but all poorly firing.

\section{Conclusions}

Driven by the observation that, keeping the resolution fixed, the larger the dimension of the environment explored by the animal, the larger the number of patterns required to tessellate it with anchor points for spatial orientation, in this work we have extended the classical Battaglia–Treves neural network model beyond the conventional pairwise ($p=2$) interaction scheme, demonstrating that a dense formulation (already with the minimal choice of $p=4$ couplings) can sustain the required supra-linear storage of spatial maps and thus allow the animal to easily explore surfaces embedded in a three-dimensional Euclidean space. 
\newline
However, rather than assuming this dense model as the starting point, we investigated a biologically-driven network with a two-layer architecture in which grid cells form the visible layer and place cells the hidden layer: crucially, in order to let the place cells detect (at least) pairwise correlations among grid-cell activities  (that is, in order to capture  one-point and two-point correlation functions, mandatory to extract information on both position as well as direction of the numerical animal exploring its surrounding), the interactions in this bipartite network are between a place cell and couples of grid cells, as coded by the cost function \eqref{eq:lyap}. By marginalizing out the place cells within a statistical mechanical treatment of this network, such a minimal network is shown to be equivalent to an effective dense Battaglia-Treves model in the grid cells only (as shown by the equivalence \eqref{DualityEq}): this duality of representation highlights how effective higher-order Hebbian assemblies can emerge from biologically plausible network structures, thus bridging the gap between theoretical dense models and hippocampal circuitry. 
%Indeed,  the dense Battaglia-Treves network is not assumed a priori by us, rather it spontaneously emerges in our description by the quest that place cells may have access to two-point correlations, the minimal requirement to estimate the velocity (beyond the position) of the animal exploring the environment.
\newline
Our analysis, grounded in statistical mechanics of disordered systems, reveals that the inclusion of quadruplet interactions in the Battaglia-Treves model fundamentally reshapes the storage properties of the network: unlike classical pairwise models, where the number of storable maps scales linearly with system size and  it is further severely constrained by a small critical pre-factor $\alpha_c$, the dense Battaglia–Treves model achieves supra-linear scaling, $K_{\max} = \alpha_c N^{p-1}$. Even if retaining a small $\alpha_c$, the $N^{p-1}$ factor ensures a dramatic increase in its storage capacity and this property is particularly relevant for representing navigation in higher-dimensional spaces, where -in order to preserve resolution- the number of required patterns must grow with the manifold dimensionality.
%\newline
%Furthermore,  as emerged in the recent Literature, beyond storage capacity, many-body interactions offer richer computational benefits (see e.g. \cite{LindaSuper,LindaUnsup,PRL-NOI,MarulloRev2,Krotov1,Krotov2}):  by reshaping the energy landscape, they can enhance the stability of attractor states, reduce interference among stored charts and facilitate the encoding of more complex geometric or topological features. These advantages resonate with recent insights from higher-order neural networks broadly speaking, all of which point to the transformative role of higher-order couplings in associative memory and, in general, in modern network theory \cite{Ginestra1,Ginestra2}.
\newline
Interestingly, a not trivial result stemming from the analytical inspection of this model  at work with orientation within a given environment is that the high-selectivity of place cells (that allows them to fire solely when the animal enters their place fields) is not assumed here, rather it emerges as a consequence of the place-grid cell's interactions: as the animal crosses various place fields one after another, grid cells orchestrate time to time  so to trigger the specific response of one place cell per time  allowing these place cells to behave in a quasi grandmother way, in accordance with empirical findings. Crucially, due to this highly specialized behavior, it is thus trivial to correlate  place cells together so to turn recognition into navigation.  
\newline
Interestingly, a not trivial result stemming from the numerical inspection of this model  at work with navigation within a given environment is that, while we assumed each place field to have the same unitary amplitude,  as the motion takes place (e.g. the animal is forced to random walk in a squared cage), the place field distribution gets deformed, collapsing on a Gamma distribution that is extensively experimentally revealed in the pertinent literature (see e.g. \cite{Prova1,TopiProva,BatsProva}): this is because, despite the network is fairly able to reproduce the trajectory of the animal, yet small errors (e.g. given by the finite resolution) in its detection sum up as the motion keeps going and -while preserving zero mean (allowing for bonafide reconstruction)- they drift away the anchor point and the real position crossed by the animal when the corresponding place cell spikes. 
\newline
Remarkably, this feature is robust against model's improvements: even assuming that place cells interact with e.g. triples of grid cells (so to collect higher order information on the correlation functions), nevertheless, this would result in a more dense Battaglia-Treves network (with six interacting neurons per time), that would however preserve the same pathology, the solely difference being the slower accumulation timescale for the errors related to reconstruction.
\newline
\newline
A comment on the underlying techniques (beyond the above results of potential interest for the Neuroscience Community) that can be of interest for the Statistical Mechanical Community is that we enriched the present study with two appendices (see Appendix \ref{Appendic:interpolazione} and Appendix \ref{sec:replica}) entirely dedicated to explain how to adapt two celebrated mathematical methods in order to cope with information processing capabilities of these networks: the former is Guerra interpolation, a rigorous approach eventually more diffused within the Mathematical Physics division of our Community, the latter is the Replica Trick, that is a powerful tool largely diffused within the Theoretical Physics division of our Community. In both these approaches we assumed that, in the large network size limit, the order parameters  capturing the network's property self-average around their means  and that these are unique, namely we assumed {\em replica symmetry}, the fairly standard level of description in the bulk of neural network's Literature. Yet, some characteristics of the phase diagrams that we obtained (as e.g. the re-entrance of the retrieval region in the $\beta \to \infty$ limit at the critical storage values), suggests that replica symmetry could be broken by the true representation of the stored continuous attractors in the cost function landscape thus, in a near future, efforts will be spent to inspect the role of replica symmetry breaking in layered networks of grid and place cells.
\newline
Beyond these mathematical challenges to overcome, future inspections should also include quantitative comparisons with experimental data on hippocampal–entorhinal circuits in a systematic and exhaustive way.

\newpage

\appendix 
\setcounter{section}{-1}   
\renewcommand\thesection{\arabic{section}}

\refstepcounter{section}

\section*{Appendix Zero: Definition of maps and their statistics}\label{AppendiceZero}
%- definizione eta in 2 e 3 dim\\
%- definizione aspettazione quenched su eta\\
%- cos'altro?
Let us describe more explicitly our framework, particularly the definition of the multi-charts $\{\bb \eta^\mu_i\}^{\mu=1,..,K}_{i=1,..,N}$ and the relation between the original bipartite network and its integral representation in terms of a  dense  model, starting from the basic Battaglia-Treves reference and then generalizing to the present case.

\bigskip

Let us start by the charts: for every neuron $s_i$, these are defined as the mappings between the hidden manifold $\mathcal M_{hidden}$ and the coordinate space $\mathbb R^d$. Since we have $K$ different copies of the same hidden space $\mathcal M_{hidden}$, this mapping is responsible of the localization of the hidden neurons $\mathbf s$ in the whole space $\mathcal M_{hidden}^{\bigotimes K}$. In the following we fix $\mathcal M_{hidden}=S_D$, the hyper-sphere which we assume of unitary radius for the sake of simplicity. Hence, we can define 
\begin{definition}[Multi charts]
The multi-charts $\bb\eta^\mu_i$ are $d-$dimensional functions such that
\begin{align}
    \bb\eta^\mu_i : S_D \to \mathbb R^d
\end{align}
Recall that $d=D+1$ is the dimension of the embedding space while $D$ is the manifold where the real motion takes place. Given the manifest spherical symmetry of the system (see Fig. \ref{fig:placemaps} for $d=2$ and \ref{fig:bump}  for $d=3$), in parameterizing the multi-charts we use spherical coordinates, such that each point on the hyper-sphere is determined by the angles \(\omega = (\phi_1, \phi_2, .., \phi_D)\), with 
\begin{align}
(\phi_1,..,\phi_{D-1})\in[0,\pi]^{D-1},\quad  \phi_D\in[0,2\pi]\label{eq:dom}
\end{align}
%$(\phi_1,..,\phi_{D-1})\in[0,\pi]^{D-1},\quad  \phi_D\in[0,2\pi]$ 
and consequently, the multi-charts are functions of these angles \(\bb{\eta}_i^\mu = \bb{\eta}(\omega_i^\mu)\), with the condition $\eta_i^\mu\cdot\eta^\mu_i=1, \forall \mu,i$.
\end{definition}
%we fix to the product of $K$ $D-$dimensional hyper-spheres $\mathcal S_D$, \emph{i.e.} $\mathcal M_{hidden}^{\bigotimes K}=\mathcal S_D^{\bigotimes K}$. 
Notice that each index $\mu=1,..,K$ can be viewed as denoting a copy of the same space $S_D$, such that each hidden neuron $s_i$ has a different position on each hyper-sphere at the same time.
\begin{remark}
Once the coordinates $\{\omega_i^\mu\}^{\mu=1,..,K}_{i=1,..,N}$ for each neuron $s_i$ and map $\mu$ are assigned, the multi-charts $\bb \eta^\mu_i=\bb{\eta}(\omega_i^\mu)$ can be interpreted as (unit) vectors in $\mathbb R^d$, where the scalar product can be defined. The scalar product of two distinct multi-charts in the same map $\mu$, $\eta^\mu_i \cdot \eta^\mu_j$ only depends on the relative angle $\phi_{ij}$ by virtue of the spherical law of cosines, namely:
\begin{align}
    \eta^\mu_i \cdot \eta^\mu_j = \cos\phi_{ij}
    %\equiv \cos(\phi_i - \phi_j)
\end{align}
\end{remark}
\begin{figure}[!h]
    \centering
    \includegraphics[width=0.5\linewidth]{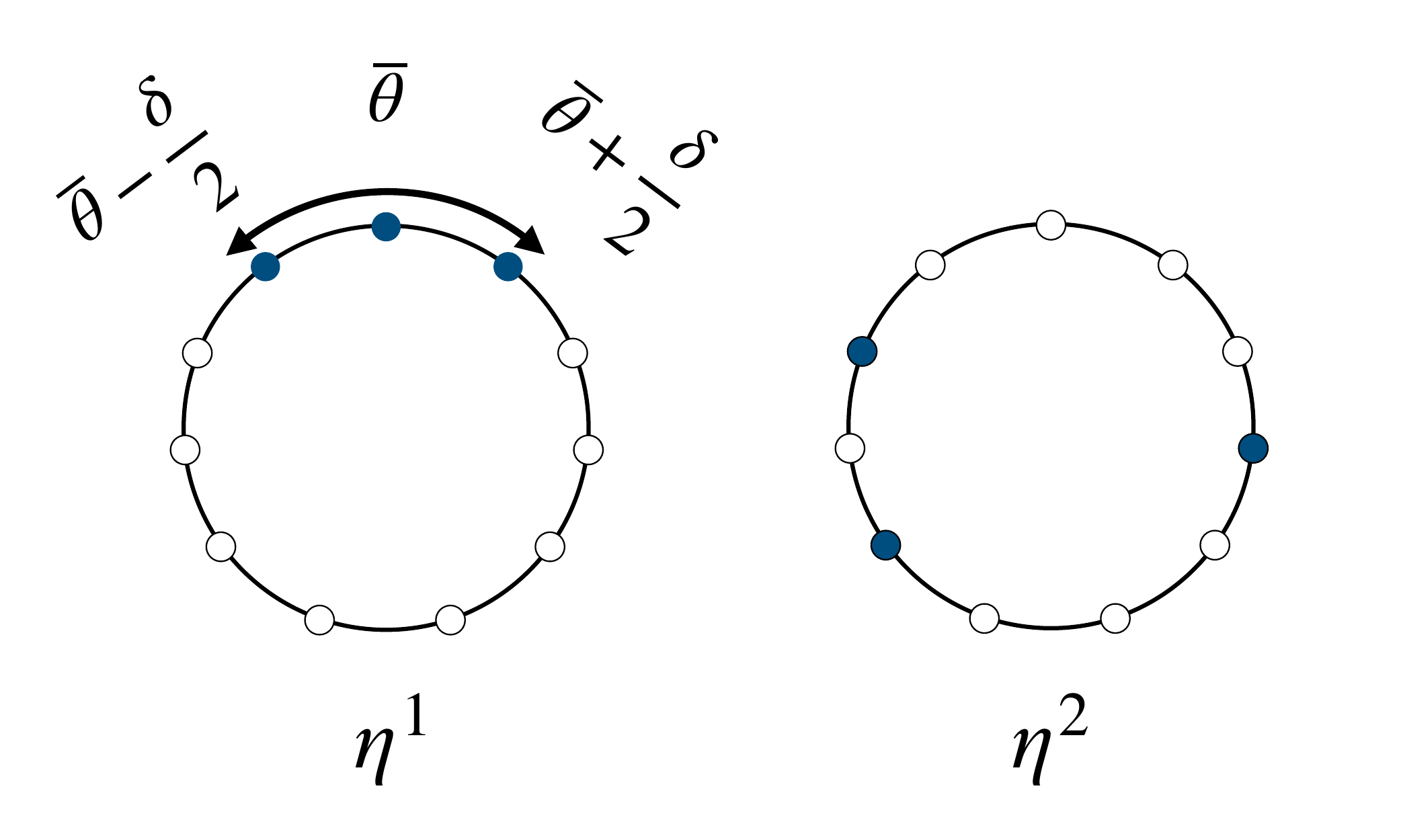} 
    \caption{Two examples of maps $\eta^1$ and $\eta^2$ shown for the case $d=2$, where the topology of the maps $\bb\eta^\mu_i$ is the circle $S_1$, characterized by the set of angles $\theta^\mu_i\in[0,2\pi]$. A retrieval state is shown in $\bb \eta^1$ as all neurons that lie within the $\overline{\psi} \in [\overline \theta - \delta/2,\overline \theta + \delta/2]$ interval are activated (here displayed as black dots), while the others stays quiescent (in white dots). The same firing pattern of neurons, that looks coherent in the first map $\bb \eta^1$, looks disordered in the other map $\bb \eta^2$. Note that the centers of the place fields are scattered roughly uniformly along the unitary circle $S_1$ and that the width of all the place fields is roughly the same (and equal to one).}
    \label{fig:placemaps}  
\end{figure}
The process by which the coordinates are assigned is very important for our goals, since it defines how the hidden neurons $s_i$ tessellate the global hidden space $(S_D)^{\bigotimes K}$. We assume that the assignment process is random, such that, independently for each map $\mu$, the coordinates of each hidden neuron $s_i$ are (also independently) extracted at random with a uniform prior over the space of angles $\{\omega_i^\mu\}^{\mu=1,..,K}_{i=1,..,N}$.

\begin{figure}
    \centering
    \includegraphics[width=0.45\linewidth]{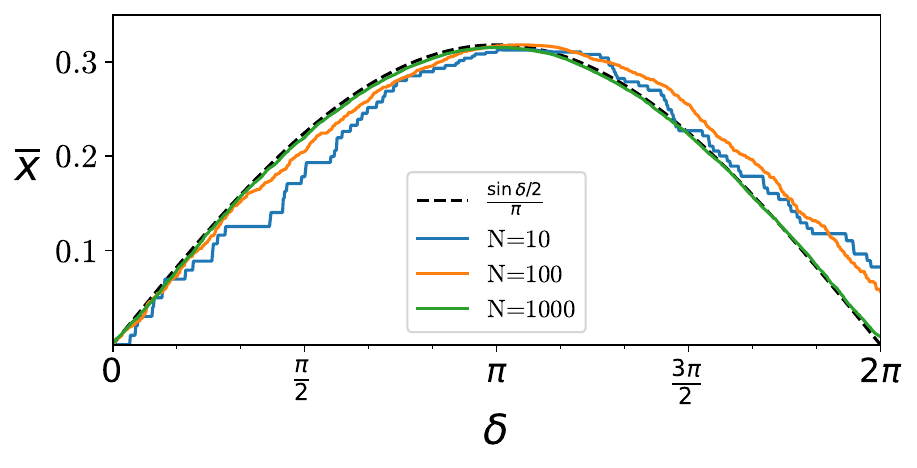}
    \includegraphics[width=0.45\linewidth]{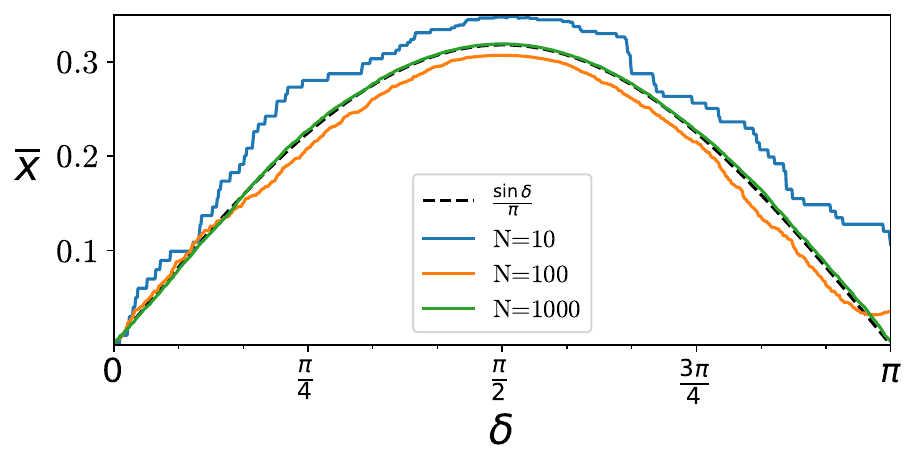}
    \caption{The norm of the population vector is shown as a function of $\delta$ in the case of $d=2$ (\emph{left}) and $d=3$ (\emph{right}), and confronted with numerical simulations at finite size $N=10,100,1000$.}
    \label{fig:xbar}
\end{figure}
\begin{definition}[Coherent States]
For any given \emph{i.i.d.} assignment of the angles $\{\omega^\mu_i\}$, a perfect coherent state in any fixed map $\nu= 1,..,K$, is a neuron state $\mathbf s$ such that the module of the order parameter $||\mathbf{\overline x}^\nu|| \equiv \overline x^\nu=\pi^{-1}$ in the thermodynamic limit $N\to \infty$.
\end{definition}
%\begin{remark}
The result holds in particular in dimension $d=2$ and $d=3$. Consider first $d=2$. The population vector is $\mathbf x^\nu=\frac{1}{N}\sum_{i|s_i=1}(\cos\theta_i^\nu,\sin\theta_i^\nu)$. In the thermodynamic limit we can approximate the sum with the following integral:
$$
x^\nu=\frac{1}{2\pi} \int_{-\delta/2}^{\delta/2}d\theta^\nu (\cos\theta^\nu,\sin\theta^\nu)=\lr{0,\frac{1}{\pi}\sin\frac{\delta}{2}}
$$
where we have introduced the parameter $\delta$ which defines the width of the coherent region, \emph{i.e.} the angle window upon which the neurons are simultaneously active (while the others are quiescent), see Fig. \ref{fig:placemaps}. Hence, the norm of the population vector is $x^\nu=\frac{1}{\pi}\sin\frac{\delta}{2}$, which is equal to $\pi^{-1}$ for $\delta=\pi$.\\
For $d=3$, a similar reasoning applies, leading to
$$
x^\nu=\int_{0}^{2\pi}\frac{d\phi^\nu}{2\pi}\int_{0}^{\delta}\frac{d\theta^\nu}{\pi}(\sin\theta^\nu \cos\phi^\nu,\sin\theta^\nu \sin\phi^\nu, \cos\theta^\nu)=\lr{0,0,\frac{1}{\pi}\sin\delta}
$$
which has norm $x^\nu=\frac{1}{\pi}\sin\delta$. The results are confirmed numerically in Fig. \ref{fig:xbar}.
%\end{remark}

Now we are able to define the

\begin{definition}[Quenched Average]
For any given function \(g(\bb{\eta})\) that depends on the realization of the \(K\) maps $\{\bb \eta^\mu_i\}^{\mu=1,..,K}_{i=1,..,N}$, the quenched average is denoted as \(\expect_\eta[g(\bb{\eta})]\) or \(\avg{g(\bb{\eta})}_{\bb{\eta}}\) depending on the context, and is defined as:
\begin{equation}
\avg{g\lr{\bb{\eta}}}_{\bb{\eta}} = \int \prod_{i, \mu=1}^{N, K} d^D\omega_i^\mu \ g(\bb{\eta}(\bb \omega))=\int \prod_{i, \mu=1}^{N, K} \prod_{q=1}^Dd(\phi_q)_i^\mu \ g(\bb{\eta}(\phi_1,..,\phi_D)),
\end{equation}
where $\bb \omega$ collectively denotes the set of angles $\{\omega^\mu_i\}^{\mu=1,..,K}_{i=1,..,N}$ and the integral is supposed to be performed over the domain given by eq. \ref{eq:dom}.
%\begin{equation} \label{asp_g}
%\expect_{\bb{\eta}}[g(\bb{\eta})] \equiv \avg{g(\bb{\eta})}_{g(\bb{\eta})} = \int \prod_{i,\mu=1}^{N,K} \frac{d{\omega}_i^\mu}{2\pi} \, g(\bb{\eta}(\omega)),
%\end{equation}
\end{definition}

This assumes that the maps are statistically independent, which allows the expectation over the place fields to factorize over the sites \(i = 1, ..., N\) and the maps \(\mu = 1, ..., K\).\\
In practice, for our purposes, in the mean field approximation we need to compute the quenched average over functions that depend only on the scalar product of $\bb{\eta}$ and some vector $\bb a$, \emph{i.e.} $g(\bb \eta \cdot \bb a)$ and $\bb \eta \cdot \bb a \: g(\bb \eta \cdot \bb a)$, which is the case for our self-consistency eqs. \ref{eq:self1}-\ref{eq:self3}. The quenched averages read
\begin{align}\label{eq:general_qav}
    &\avg{g(\bb \eta \cdot \bb a)}_{\bb \eta} = \int_0^\pi \frac{d\theta}{\pi} \ g(|\bb a|\cos\theta),\\ \label{eq:x_qav}
    &\avg{\bb \eta \cdot \bb a \:g(\bb \eta \cdot \bb a)}_{\bb \eta} = |\bb a| \int_0^\pi \frac{d\theta}{\pi} \ \cos\theta\  g(|\bb a|\cos\theta)    
\end{align}
where we used $|\bb \eta|=1$. Notice that the important orthogonality condition holds
\begin{align}
    &\avg{\bb\eta^\mu_i \cdot \bb \eta^\nu_j}_{\bb\eta}=\delta_{ij}\delta^{\mu\nu}.
\end{align}

\bigskip

In order to derive the statistical properties of the model of grid cells, we introduce the dense generalization of the Battaglia-Treves Cost Function involving the neurons $\{s_i\}_{i=1,..,N}$ in $d-$dimensions and with general interaction order $K$ by analogy with the pairwise case:  following \cite{BattagliaTreves1998}, keeping in mind that the kernel has to be a function of a distance among place field cores on the manifold and that the latter is the unitary circle in two dimensions,  to take advantage from the {\em Hebbian experience} \cite{Amit1989}, the interacting strength between neurons will be written as
\begin{align}\label{eq:kernel1}
    J_{ij} =  \frac{1}{N}\sum_{\mu=1}^K \eta^\mu_i \cdot \eta^\mu_j \ .
\end{align}
\begin{figure}[!t] 
    \centering
    \includegraphics[width=0.45\linewidth]{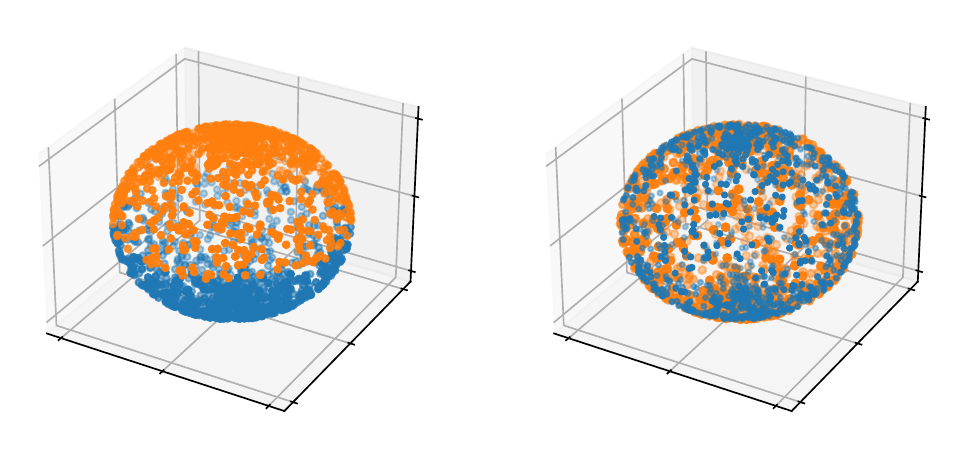}
    \includegraphics[width=0.35\linewidth]{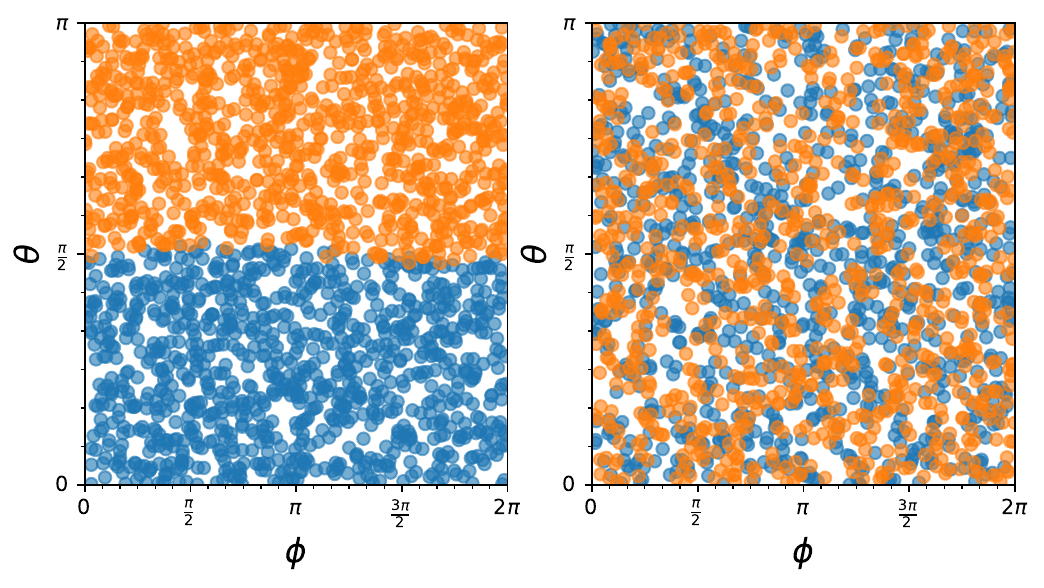}
    \caption{Neuronal activity in the first two maps $\mu=1,2$ at $\lambda=1$.
    The dense Hamiltonian  -see e.g. its sharp definition \eqref{eq:ham} from the next Appendix-  allows to define a MCMC neural update procedure that, in the retrieval regime, produces stable coherent states where half of the neurons are active and the other half quiescent. ({\em left panels}): the neural activity is spatially coherent in one of the maps ($\mu=1$, first panel on the left), while it looks random in the other maps ($\mu=2$, second panel from the left).  ({\em right panels}) The same neural activity in the $\mu=1$ and $\mu=2$ maps is shown in the third and fourth panels from the left in the spherical and angular coordinates $(\theta,\phi)$  respectively.
    %The neurons coordinates $\theta_i^\mu\in[0,2\pi]$ in the given map are displayed along the x-axis, while the neuronal activity is shown on the y-axis  and computed as the spatial average of the neuronal states in a spatial window of fixed length).\\
    %First panel from the left: we provide the model with the initial state $\mathbf s_0$ (a {\em bump} in the map $\mu=1$ centered around $\theta=\pi$): note that in other maps (e.g. $\mu=2$, third panel) this input appears as a random state. This Cauchy condition for the neural dynamics (i.e. the stochastic process \eqref{dinamical} driven by the Hamiltonian \eqref{eq:H_Hop}) allows the network to evolve toward a stationary state $\mathbf s_{out}$, that, in the map $\mu=1$, is the coherent state shown on the second panel from the \emph{left}, while it still appears random in other maps (as e.g. $\mu=2$ in the fourth panel). The map overlap $\overline x$ is also shown in the legend of each panel: coherent states have higher overlap than random states, as expected.
    }
    \label{fig:bump}  
\end{figure}
\\
Notice that the Hebbian kernel \eqref{eq:kernel1} is a function of the relative Euclidean distance of the $i,j$ neuron's coordinates $\theta_i,\theta_j$ in each map $\mu$: to show this, one can simply compute the dot product as $\eta^\mu_i\cdot\eta^\mu_j=\cos(\theta_i^\mu)\cos(\theta_j^\mu)+\sin(\theta_i^\mu)\sin(\theta_j^\mu)=\cos(\theta_i^\mu-\theta_j^\mu).$% \equiv \cos(\theta_{ij}^\mu)$.
\\
We can now define the  Cost function  of the original Battaglia-Treves model as given by the next
\begin{definition}[pairwise Battaglia-Treves Hamiltonian]\label{def:cost_function}
Given $N$ McCulloch $\&$ Pitts neurons $\bb s = (s_1, ..., s_N) \in \{ 0,1 \}^N$, $K$ charts $\bb \eta = (\eta^1,...,\eta^K)$ with  $\eta^{\mu} \in S_D$ for $\mu \in (1,...,K)$ encoded with the specific kernel \eqref{eq:kernel1}, and a free parameter $\lambda \in \mathbb{R}^+$ to tune the global inhibition within the network, the Battaglia-Treves Hamiltonian for chart reconstruction reads as\footnote{The symbol `$\approx$' in eq. \eqref{eq:H_Hop} becomes an exact equality in the thermodynamic limit, $N\to \infty$, where, splitting the summation as $\sum_{i<j} = 1/2\sum_{i,j}^N + \sum_{i=1}^N$, the last term, being sub-linear in $N$, can be neglected.}
\begin{align}\label{eq:H_Hop}
    H_N(\boldsymbol s | \boldsymbol \eta) = - \sum_{i<j}^{N,N}J_{ij} s_i s_j + \frac{\lambda-1}{N}  \sum_{i<j}^{N,N} s_i s_j \approx -\frac{1}{2N} \sum_{\mu=1}^K \sum_{i,j=1}^{N,N} (\eta^\mu_i \cdot \eta^\mu_j) s_i s_j + \frac{\lambda -1}{2N}  \sum_{i,j=1}^{N,N} s_i s_j.
\end{align}
\end{definition}
Notice that the factor $N^{-1}$ in front of the sums ensures that the Hamiltonian is extensive in the thermodynamic limit $N\to \infty$ and the factor $1/2$ is inserted in order to count only once the contribution of each couple: these pre-factors have to be suitably generalized when moving from two-body to many-body interactions ({\em vide infra}). 
%Further notice that, as already mentioned in Sec.~\ref{sec:intro}, the kernel \eqref{eq:kernel}, and therefore the synaptic matrix \eqref{eq:kernel1} is solely exciting. 
Also, as we are working with McCulloch$\&$Pitts neurons (namely Boolean variables rather than Ising spins) the hyper-parameter $\lambda$ tunes a source of inhibition acting homogeneously among all pairs of neurons and prevents the network from collapsing onto a fully firing state $\bb s = (1, ..., 1)$. In fact, for $\lambda \gg 1$ the last term at the r.h.s. of eq.~ \eqref{eq:H_Hop} prevails,  global inhibition dominates over local excitation and the most energetically-favorable configuration is the fully inhibited one (where all the neurons are quiescent $\bb s=(0,..,0)$); in the opposite limit, for $\lambda \ll 1$, the most energetically-favorable configuration is  the totally excitatory one (where all the neurons are firing $\bb s = (1, ...,1)$).

\bigskip

By comparison with the above Hamiltonian, the generalization toward many-body is straightforward, as can be seen in the next definition, yet it is important to realize that -in the main text- we did not assume this dense generalization out of the blue, rather, we started from the Cost function coded in eq.\eqref{eq:lyap}, that stems from the biological evidence of existing dialogues among place and grid cells and that, in order to account for (at least) position and direction, each place cell must have access to (at least) the one- and two-point correlation functions, that is, each place cell  have to interact  with couples of grid cells (in a mean field manner in the present manuscript, for mathematical convenience): this results in a dense Battaglia-Treves model for grid cells where, remarkably, collectively the place cells dialogue with the grid cells in a grandmother setting such that they can be able to fire if and only if the animal enters the related place field.
\newline
Let us now introduce the Cost Function that we use in the present dense generalization. 
%\begin{definition}[Hamiltonian]
%\begin{align}
%    H_N(\mathbf s)=-\frac{p!}{N^{p-1}} \sum_{\mu=1}^K\sum_{i_1<..<i_p=1}^N \boldsymbol\eta^\mu_{i_1} \ \cdot \boldsymbol\eta^\mu_{i_2}\dots \ \boldsymbol\eta^\mu_{i_{p-1}}\cdot \boldsymbol\eta^\mu_{i_p} \ \  s_{i_1}\dots s_{i_p} + \frac{\lambda-1}{N^{p-1}}\sum_{i_1<..<i_p=1}^N  s_{i_1}\dots s_{i_p}
%\end{align}
%The first term in the latter expression involves the local interactions that arise on the hidden manifold $\mathcal M_{hidden}$, while the second term, with $\lambda>0$, represents a global inhibition level that tunes the activity of the network.
%\end{definition}

\begin{definition}[Dense Battaglia-Treves Hamiltonian]
Let \( a \in \mathbb{N} \) and \( p \in \mathbb{N} \). Consider a system of \( N \) binary neurons \( \bb{s} = (s_1, \ldots, s_N) \in \glr{0, 1}^N \), and \( K = \frac{2d^{p/2}}{p!} \alpha N^{a}\) charts \( \bb\eta = (\bb\eta^1, \ldots, \bb\eta^K) \), where \( \bb\eta^\mu = (\bb\eta^\mu_1, \ldots, \bb\eta^\mu_N) \), and each \( \bb\eta^\mu_i \in \mathbb{R}^d \) is a random unit vector independently drawn from the uniform distribution on the unit hypersphere \( \mathcal{S}^{d-1} \).

The Hamiltonian for the reconstruction of charts is expressed as\footnote{In the thermodynamic limit, the sum over ordered indices \( \sum_{i_1<\ldots<i_p} \) can be replaced by \( \frac{1}{p!} \sum_{i_1,\ldots,i_p} \). This cancels out the pre-factor \( p! \) in the original Hamiltonian.
}

\begin{equation}\label{eq:ham}
\begin{aligned}
    H_N^{\lr{p}}\lr{\bb{s}|\bb{\eta}} &= -\frac{p!}{N^{p-1}} \sum_{\mu=1}^K \sum_{i_1<\ldots<i_p=1}^{N} \lr{\bb\eta^\mu_{i_1} \cdot \bb\eta^\mu_{i_2}} \cdots \lr{\bb\eta^\mu_{i_{p-1}} \cdot \bb\eta^\mu_{i_p}} s_{i_1} \cdots s_{i_p} + \frac{p!\lr{\lambda-1}}{N^{p-1}} \sum_{i_1<\ldots<i_p=1}^{N} s_{i_1} \cdots s_{i_p} \\
    &\approx
    -\frac{1}{N^{p-1}} \sum_{\mu=1}^K \sum_{i_1, \ldots, i_p=1}^N \lr{\bb\eta^\mu_{i_1} \cdot \bb\eta^\mu_{i_2}} \cdots \lr{\bb\eta^\mu_{i_{p-1}} \cdot \bb\eta^\mu_{i_p} } s_{i_1} \cdots s_{i_p} + \frac{\lambda-1}{N^{p-1}} \sum_{i_1,\ldots,i_p=1}^{N} s_{i_1} \cdots s_{i_p}  \\
\end{aligned}
\end{equation}
\end{definition}

Now we are able to define the
\begin{definition}[Boltzmann and Quenched Averages]
Let \(f(\bb{s})\) be a function depending on the neuronal configuration \(\bb{s}\). The Boltzmann average, which represents the average over the Boltzmann-Gibbs distribution, is denoted as \(\omega(f(\bb{s}))\) and is defined as:
\begin{equation}
\omega(f(\bb{s})) = \frac{\sum_{\{\bb{s}\}} f(\bb{s}) e^{-\beta H_N(\bb{s}|\bb{\eta})}}{\sum_{\{\bb{s}\}} e^{-\beta H_N(\bb{s}|\bb{\eta})}},
\end{equation}
where \(H_N(\bb{s}|\bb{\eta})\) is the Hamiltonian of the system, and \(\beta = 1/T\) is the inverse temperature.
\newline
We use the notation \(\avg{\cdot}\) to indicate the average over both the Boltzmann-Gibbs distribution and the (quenched) realizations of the maps. This combined average is expressed as:
\begin{equation*}
\avg{\cdot} = \expect_{\bb{\eta}}[\omega(\cdot)].
\end{equation*} 
\end{definition}

\refstepcounter{section}
\section*{Appendix One: Interpolation Technique for dense networks of place cells}\label{Appendic:interpolazione}

In this appendix we adapt the celebrated Guerra's interpolation technique to the class of dense neural networks of the type coded by eq. \eqref{eq:ham}. The network is fully connected and features higher-order interactions: instead of simple pairwise couplings as in standard models with a synaptic matrix \(J_{ij}\), neurons interact in \(p\)-plets. These \(p\)-spin interactions are described by a tensorial structure involving \(p\) indices, constructed from the scalar products between the spatial positions of the neurons -- thereby encoding the geometry of the place fields -- and modulated according to the synaptic learning rules of the model. \\

The network is capable of storing \(K\) spatial maps, denoted by \(\{\bb{\eta}^\mu\}_{\mu=1}^{K}\), where each map \(\bb{\eta}^\mu\) is defined by a set of position vectors \(\bb{\eta}^\mu = \left( \bb{\eta}_{1}^\mu, \dots, \bb{\eta}_{N}^\mu \right)\), with \(\bb{\eta}_{i}^\mu \in \mathbb{R}^d\). These vectors represent the spatial coordinates of the place fields associated with the neurons, for \(i = 1,\dots, N\). \\

%The model thus extends classical associative memory paradigms by embedding spatial information directly into the neuronal interactions.
%
%To characterize the behavior of the system, we introduce two control parameters:
%\[
%\beta = \frac{1}{T}, \qquad \alpha = \lim_{N \to \infty} \frac{K}{N^{p-1}} \frac{p!}{2 d^{p/2}},
%\]
%where \(\beta \in \mathbb{R}^+\) is the inverse temperature, controlling the level of thermal noise in the dynamics, and \(\alpha\) encodes the storage load in the high-storage regime. For \(\beta \to 0\), the network dynamics are dominated by noise and resemble an unstructured random walk in configuration space. Conversely, in the zero-temperature limit \(\beta \to \infty\), the dynamics are governed by deterministic energy minimization, leading the system toward stable attractors that correspond to stored spatial maps. \\

In particular, we focus on the high-storage regime, where the number of stored maps \(K\) grows extensively with the system size \(N\): to inspect analytically this regime, we adopt the one-body interpolation method adapting the original Guerra’s interpolation scheme \cite{FraDenso,Barra2017}. This technique provides a mathematically controlled framework for computing the free energy of the model and investigating the emergent  behavior of the network as a whole.

A central assumption of our approach is the \textit{replica symmetric hypothesis}, which posits that the relevant order parameters self-average and concentrate around their mean values in the thermodynamic limit. This assumption significantly simplifies the analysis and allows us to derive closed-form self-consistency equations for the evolution of the order parameters in the space of the control parameters.

Thus, these self-consistency relations are instrumental in determining the phase diagram of the model and identifying distinct operational regimes, such as the paramagnetic phase (where no memory is retrieved), the ferromagnetic phase (where a stored map is successfully retrieved), and the spin glass phase (where retrieval is hindered by a too-strong interference from multiple maps). \\

As standard in high-storage analyses, we assume that only one of the stored maps -- labeled \(\mu = 1\) -- is actively retrieved, while the remaining \(K-1\) maps act as quenched noise. This decomposition enables a clear separation between the signal and the noise components in the free energy computation and provides a tractable route to characterizing retrieval performance under heavy memory load.

%\begin{definition}[Cost Function]
%Let \( a \in \mathbb{N} \) and \( p \in \mathbb{N} \). Consider a system of \( N \) binary neurons \( \bb{s} = (s_1, \ldots, s_N) \in \glr{0, 1}^N \), and \( K = \frac{2 \alpha N^{a} d^{p/2}}{p!}  \) charts \( \bb\eta = (\bb\eta^1, \ldots, \bb\eta^K) \), where \( \bb\eta^\mu = (\bb\eta^\mu_1, \ldots, \bb\eta^\mu_N) \), and each \( \bb\eta^\mu_i \in \mathbb{R}^d \) is a random unit vector independently drawn from the uniform distribution on the unit hypersphere \( \mathcal{S}^{d-1} \).

%The Hamiltonian for the reconstruction of charts is expressed as\footnote{%
%In the thermodynamic limit, the sum over ordered indices \( \sum_{i_1<\ldots<i_p} \) can be replaced by \( \frac{1}{p!} \sum_{i_1,\ldots,i_p} \). This cancels the prefactor \( p! \) in the original Hamiltonian.}

%\begin{equation}\label{eq:ham2}
%\begin{aligned}
%    H_N^{\lr{p}}\lr{\bb{s}|\bb{\eta}} &= -\frac{p!}{N^{p-1}} \sum_{\mu=1}^K \sum_{i_1<\ldots<i_p=1}^{N} \lr{\bb\eta^\mu_{i_1} \cdot \bb\eta^\mu_{i_2}} \cdots \lr{\bb\eta^\mu_{i_{p-1}} \cdot \bb\eta^\mu_{i_p}} s_{i_1} \cdots s_{i_p} + \frac{p!\lr{\lambda-1}}{N^{p-1}} \sum_{i_1<\ldots<i_p=1}^{N} s_{i_1} \cdots s_{i_p} \\
%    &\approx
%    -\frac{1}{N^{p-1}} \sum_{\mu=1}^K \sum_{i_1, \ldots, i_p=1}^N \lr{\bb\eta^\mu_{i_1} \cdot \bb\eta^\mu_{i_2}} \cdots \lr{\bb\eta^\mu_{i_{p-1}} \cdot \bb\eta^\mu_{i_p} } s_{i_1} \cdots s_{i_p} + \frac{\lambda-1}{N^{p-1}} \sum_{i_1,\ldots,i_p=1}^{N} s_{i_1} \cdots s_{i_p}  \\
%\end{aligned}
%\end{equation}
%\end{definition}

%\noindent 
Separating the signal term (\( \mu = 1 \)) from the noise contribution (\( \mu > 1 \)) in eq.~\eqref{eq:ham}, and using the definition \eqref{x_mu}, we write:

\begin{equation}
    \begin{aligned}
        H_N^{\lr{p}}\lr{\bb{s}|\bb{\eta}} =& -N\|\bb{x}_1\|^p + \frac{\lambda-1}{N^{p-1}} \sum_{i_1,\ldots,i_p=1}^{N} s_{i_1} \cdots s_{i_p} + \\
        &- \frac{1}{N^{p-1}} \sum_{\mu=2}^K \sum_{i_1,\ldots,i_p} \left( \bb\eta_{i_1}^\mu \cdot \bb\eta_{i_2}^\mu \right) \cdots \left( \bb\eta_{i_{p-1}}^\mu \cdot \bb\eta_{i_p}^\mu \right) s_{i_1} \cdots s_{i_p},
    \end{aligned}
\end{equation}

Each scalar product in the noise term is given by
$
\bb\eta^\mu_i \cdot \bb\eta^\mu_j = \sum_{t=1}^d \eta^{\mu, t}_i \eta^{\mu, t}_j,
$
where the components \( \eta^{\mu, t}_i \) are i.i.d. with zero mean and variance \( 1/d \). 
%Thus, each scalar product has zero mean and variance $\mathbb{E}\left[ (\bb\eta^\mu_i \cdot \bb\eta^\mu_j)^2 \right] = 1/d$. 

Following the reasoning of previous investigations on dense neural networks with interpolating tools (see e.g. \cite{LindaRSB,LindaSuper,FraDenso}),  to simplify the treatment of the noise term -- in particular, to allow for a Hubbard-Stratonovich (HS) transformation (that, in turn, is in order to lower the effective degree of interaction) -- we approximate the product of \( p/2 \) scalar products as the product of two independent Gaussian variables, each corresponding to a multilinear combination of \( p/2 \) vectors:
\[
\left( \bb\eta_{i_1}^\mu \cdot \bb\eta_{i_2}^\mu \right) \cdots \left( \bb\eta_{i_{p/2 -1}}^\mu \cdot \bb\eta_{i_{p/2}}^\mu \right) \approx \eta^\mu_{i_1,\ldots,i_{p/2}} \eta^\mu_{i_{p/2 +1},\ldots,i_p},
\]
where \( \eta^\mu_{i_1,\ldots,i_{p/2}} \) and \( \eta^\mu_{i_{p/2 +1},\ldots,i_p} \) are standard Gaussian variables with mean zero and variance \( 1/d^{p/4} \).

We can now rewrite the noise term approximately as 
\begin{equation*}
    \begin{aligned}
        &- \frac{1}{N^{p-1}} \sum_{\mu=2}^K \sum_{i_1,\ldots,i_p} \left( \bb\eta_{i_1}^\mu \cdot \bb\eta_{i_2}^\mu \right) \cdots \left( \bb\eta_{i_{p-1}}^\mu \cdot \bb\eta_{i_p}^\mu \right) s_{i_1} \cdots s_{i_p} \approx \\
       & \approx - \frac{A}{N^{p-1}} \sum_{\mu=2}^K \lr{ \sum_{i_1,\ldots,i_{p/2}} \eta^\mu_{i_1,...,i_{p/2}} s_{i_1} \cdots s_{i_{p/2}} }^2,
    \end{aligned}
\end{equation*}

where \( A \) is a pre-factor to be determined.

The transition from a full \( p \)-wise summation to a squared form involves a change in combinatorics. Specifically, the original sum includes all \( p! \) permutations of \( p \) indices, while the squared form symmetrically counts each unordered pair of \( p/2 \)-tuples twice. Therefore, to match the scale of the two expressions, we must correct for this over-counting by introducing a suitable normalization factor, namely $A = \sqrt{\frac{p!}{2}}$.
\newline
The factor \( p! \) accounts for all permutations of the \( p \) indices in the original term, while the factor \( \frac{1}{2} \) arises from the symmetric square, which double-counts each pair of index groups.

Note that expressing the term as a perfect square also introduces diagonal terms -- i.e., those with repeated index tuples:
\begin{equation}
\left( \sum_{i_1,...,i_{p/2}} \, \eta^\mu_{i_1,...,i_{p/2}} \, s_{i_1} \cdots s_{i_{p/2}} \right)^2
= \sum_{\substack{i_1, \dots, i_{p/2} \\ j_1, \dots, j_{p/2}}} \eta^\mu_{i_1, \dots, i_{p/2}} \eta^\mu_{j_1, \dots, j_{p/2}} s_{i_1} \cdots s_{i_{p/2}} s_{j_1} \cdots s_{j_{p/2}}.
\end{equation}

Diagonal contributions \( \lr{i_1, \dots, i_{p/2}} = \lr{j_1, \dots, j_{p/2}} \) are counted twice, whereas the original Hamiltonian counts them at most once. This overcounting introduces a systematic bias that must be corrected.

To address this, we subtract the expected value of the spurious diagonal contributions. Since each \( \eta^\mu_{i_1, \dots, i_{p/2}} \) is a zero-mean Gaussian variable with variance \( \mathbb{E}[(\eta^\mu_{i_1,\ldots,i_{p/2}})^2] = 1/d^{p/4} \), the correction term becomes:

\begin{equation}
\sqrt{\frac{p!}{2}} \sum_{i_1,...,i_{p/2}}\mathbb{E}\left[ (\eta^\mu_{i_1,...,i_{p/2}})^2 \right] s_{i_1}^2 \cdots s_{i_{p/2}}^2 = \sqrt{\frac{p!}{2}} \frac{1}{d^{p/4}} \sum_{i_1,...,i_{p/2}} s_{i_1}^2 \cdots s_{i_{p/2}}^2.
\end{equation}

\noindent Therefore, incorporating the diagonal terms and employing the definition of the order parameter~\eqref{m}, the Hamiltonian is expressed as
\begin{equation}
\begin{aligned}
H_N^{\lr{p}}\lr{\bb{s}|\bb{\eta}} =
& - N \|\bb{x}_1\|^p + N\lr{\lambda-1}m^p - \frac{\lambda-1}{N^{p-1}} \sum_{i_1,\ldots,i_{p/2}=1}^{N} s_{i_1}^2 \cdots s_{i_{p/2}}^2 + \\
&- \frac{1}{N^{p-1}} \sqrt{ \frac{p!}{2}} \sum_{\mu=2}^K \left( \sum_{i_1,\dots,i_{p/2}} \eta^\mu_{i_1,\dots,i_{p/2}} s_{i_1} \cdots s_{i_{p/2}} \right)^2 + \frac{1}{N^{p-1} d^{p/4}} \sqrt{ \frac{p!}{2}} \sum_{\mu=2}^K \sum_{i_1,\dots,i_{p/2}} s_{i_1}^2 \cdots s_{i_{p/2}}^2.
\end{aligned}
\end{equation}

We now introduce the partition function $Z_N \lr{\beta} = \sum_{\bb{s}} \exp \lr{-\beta H_N} $. By substituting \( H_N^{\lr{p}}\lr{\bb{s}|\bb{\bb{\eta}}} \) into the definition of the partition function \( Z_N \lr{\beta} \), we obtain:
\begin{equation}
    \begin{aligned}
        Z_N \lr{\beta, \bb{\eta}} = & \sum_{\bb{s}} \exp \bigg[ \beta N \|\bb{x}_1\|^p - \beta N\lr{\lambda-1}m^p + \beta \frac{\lambda-1}{N^{p-1}} \sum_{i_1,\ldots,i_{p/2}=1}^{N} s_{i_1}^2 \cdots s_{i_{p/2}}^2 + \\ 
        &+ \frac{\beta}{N^{p-1}} \sqrt{ \frac{p!}{2}} \sum_{\mu=2}^K \left( \sum_{i_1,\dots,i_{p/2}} \eta^\mu_{i_1,\dots,i_{p/2}} s_{i_1} \cdots s_{i_{p/2}} \right)^2 - \frac{\beta}{N^{p-1} d^{p/4}} \sqrt{ \frac{p!}{2}} \sum_{\mu=2}^K \sum_{i_1,\dots,i_{p/2}} s_{i_1}^2 \cdots s_{i_{p/2}}^2 \bigg].
    \end{aligned}
\end{equation}

Applying the HS transformation \footnote{$
\exp\slr{\beta Q^2} = \int \frac{dz}{\sqrt{2 \pi}} \exp \slr{-\frac{1}{2} z^2 + \sqrt{2 \beta}Qz}
$} to the quadratic term tacitly introduces the place cells as hidden variables and gives:
\begin{equation}
    \begin{aligned} \label{Z1}
        Z_N \lr{\beta, \bb{\eta}} = & \sum_{\bb{s}} \int D\boldsymbol z\: \exp \Bigg[ \beta N \|\bb{x}_1\|^p - \beta N\lr{\lambda-1}m^p + \beta \frac{\lambda-1}{N^{p-1}} \sum_{i_1,\ldots,i_{p/2}=1}^{N} s_{i_1}^2 \cdots s_{i_{p/2}}^2 + \\ 
        &+\sqrt{ \frac{2\beta}{N^{p-1}} \sqrt{\frac{p!}{2}} } \sum_{\mu=2}^K \sum_{i_1,\dots,i_{p/2}} \eta^\mu_{i_1,\dots,i_{p/2}}  s_{i_1} \cdots s_{i_{p/2}} z_\mu - \frac{\beta}{N^{p-1} d^{p/4}} \sqrt{ \frac{p!}{2}} \sum_{\mu=2}^K \sum_{i_1,\dots,i_{p/2}} s_{i_1}^2 \cdots s_{i_{p/2}}^2 \Bigg],
    \end{aligned}
\end{equation}
where $z_\mu \sim \mathcal{N}(0,1)$ and the Gaussian measure is defined as $D\boldsymbol z = \prod_{\mu=2}^K \frac{dz_\mu}{\sqrt{2 \pi}} \exp \lr{-\frac{z_\mu^2}{2}}$.

\begin{definition} 
Given the auxiliary Gaussian neurons \( z_\mu \sim \mathcal{N}(0,1) \) introduced via the HS transformation, we define the \emph{place-cell overlap} between replicas \( a \) and \( b \) as
\begin{equation}
    p_{ab} = \frac{1}{K} \sum_{\mu=1}^K z_\mu^a z_\mu^b.
\end{equation}
This order parameter measures the similarity between the hidden representations in two replicas.
\end{definition}

%\begin{remark}
%The structure emerging after the Hubbard-Stratonovich transformation naturally suggests a duality of repre between the original model and a bipartite neural network -- formally equivalent to a Restricted Boltzmann Machine (RBM). In this dual picture, the visible units are the place cells \( s_i \), while the hidden units are the chart-selective variables \( z_\mu \), each associated with a stored map. The hidden overlap \( p_{ab} \) plays the same role for the hidden layer as \( q_{ab} \) does for the visible one. This dual formulation becomes particularly meaningful in the high-storage regime, where the collective dynamics of the hidden units reflect the memory load and retrieval capacity of the system.
%\end{remark}
% QUESTO REMARK VA LEVATO PERCHE' ABBIAMO RIGIRATO IL DISCORSO: NON PARTIAMO DAL DENSO MA DALL'RBM DUALE QUINDI, ORA CHE ABBIAMO INVERTITO L'ORDINE DI PRESENTAZIONE DELLA DUALITA' QUESTO RIDONDA.

The form of the partition function in eq. \eqref{Z1} enables us to define a suitable interpolating Hamiltonian \( \mathcal{H}\lr{t} \), depending on a interpolation parameter \( t \in [0,1] \), which continuously connects the original model at \( t = 1 \) with a simplified one-body system at \( t = 0 \), where neurons interact only with independent Gaussian fields.

The free energy of the original model is then obtained via the fundamental theorem of calculus:
\begin{equation} \label{interp}
A(\alpha, \beta) = \mathcal{A}(1) = \mathcal{A}(0) + \int_0^1 ds\, \slr{\frac{d}{dt} \mathcal{A}(t)}_{t=s}.
\end{equation}

where the interpolating free energy \( \mathcal{A}(t) \) is defined as:

\begin{equation}
\mathcal{A}(t) = \lim_{N \to \infty} \frac{1}{N} \mathbb{E}_\eta \ln \mathcal{Z}(t),
\end{equation}

and \( \mathcal{Z}(t) \) is the interpolating partition function, defined as follows:

\begin{definition}[Interpolating partition function]
Let \( t \in [0, 1] \) be the interpolating parameter, and let $A$, $B$, $C$, $D$, $\psi_1$, $\psi_2$ in $\mathbb{R}$. Assume that \( J_i \sim \mathcal{N}(0, 1) \) for \( i = 1, \ldots, N \) and \( J_\mu \sim \mathcal{N}(0, 1) \) for \(\mu = 1, \ldots, K\), are independent and identically distributed standard Gaussian variables. The interpolating partition function \( \mathcal{Z}(t) \) is given by:

\begin{equation} \label{Z}
\begin{aligned}
    \mathcal{Z}\lr{t} = & \sum_{\bb{s}} \int D\boldsymbol z\: \exp \Bigg[ t \beta N \|\bb{x}_1\|^p + \lr{1-t}N\sum_{a=1}^d \psi_1^{\lr{a}}x_1^{\lr{a}} - t \beta N\lr{\lambda-1}m^p - \lr{1-t} N \psi_2 \lr{\lambda-1}m + \\
    &- t \beta \frac{\lambda-1}{N^{p-1}} \sum_{i_1,\ldots,i_{p/2}=1}^{N} s_{i_1}^2 \cdots s_{i_{p/2}}^2 +\sqrt{t} \sqrt{ \frac{2\beta}{N^{p-1}} \sqrt{\frac{p!}{2}} } \sum_{\mu=2}^K \sum_{i_1,\dots,i_{p/2}} \eta^\mu_{i_1,\dots,i_{p/2}} s_{i_1} \cdots s_{i_{p/2}} z_\mu + \\
    & - t \frac{\beta}{N^{p-1} d^{p/4}} \sqrt{ \frac{p!}{2}} \sum_{\mu=2}^K \sum_{i_1,\dots,i_{p/2}} s_{i_1}^2 \cdots s_{i_{p/2}}^2 + \\
    & + \sqrt{1-t} \lr{A\sum_{i=1}^N J_i s_i + B \sum_{\mu=2}^K J_\mu z_\mu} + \frac{1-t}{2}\lr{C \sum_{\mu=2}^K z_\mu^2 + D \sum_{i=1}^N s_i^2}
    \Bigg].
\end{aligned}
\end{equation}

\end{definition}
%From now on, for the sake of clearness, we avoid specifying each time that we are working at fixed inhibition strength $\lambda=1$, furthermore, hereafter, for the sake of clearness, we write explicitly the replica symmetric assumption at work on the order parameters.
From now on, for the sake of clearness, we write explicitly the replica symmetric assumption at work on the order parameters.

\Comment{
\begin{definition}[Boltzmann and Quenched Averages]
Let \(f(\bb{s})\) be a function depending on the neuronal configuration \(\bb{s}\). The Boltzmann average, which represents the average over the Boltzmann-Gibbs distribution, is denoted as \(\omega(f(\bb{s}))\) and is defined as:
\begin{equation}
\omega(f(\bb{s})) = \frac{\sum_{\{\bb{s}\}} f(\bb{s}) e^{-\beta H_N(\bb{s}|\bb{\eta})}}{\sum_{\{\bb{s}\}} e^{-\beta H_N(\bb{s}|\bb{\eta})}},
\end{equation}

where \(H_N(\bb{s}|\bb{\eta})\) is the Hamiltonian of the system, and \(\beta = 1/T\) is the inverse temperature.

Additionally, consider a function \(g(\bb{\eta})\) that depends on the realization of the \(K\) maps. The quenched average, which accounts for the average over the realizations of the maps, is denoted as \(\expect_\eta[g(\bb{\eta})]\) or \(\avg{g(\bb{\eta})}_{\bb{\eta}}\) depending on the context. It is defined as:
\begin{equation} \label{asp_g}
\expect_{\bb{\eta}}[g(\bb{\eta})] \equiv \avg{g(\bb{\eta})}_{g(\bb{\eta})} = \int_{-\pi}^{\pi} \prod_{i,\mu=1}^{N,K} \frac{d\theta_i^\mu}{2\pi} \, g(\bb{\eta}(\theta)),
\end{equation}

where \(\bb{\eta}(\theta)\) refers to the realization of the maps, determined by the set of angles \(\theta\). This assumes that the maps are statistically independent, which allows the expectation over the place fields to factorize over the sites \(i = 1, ..., N\) and the maps \(\mu = 1, ..., K\).

Finally, we use the notation \(\avg{\cdot}\) to indicate the average over both the Boltzmann-Gibbs distribution and the realizations of the maps. This combined average is expressed as:
\begin{equation*}
\avg{\cdot} = \expect_{\bb{\eta}}[\omega(\cdot)].
\end{equation*}
\end{definition}

It is useful to derive certain relationships that will prove valuable in the subsequent analysis. Specifically, we calculate the quenched average of a function \(g(\bb{\eta})\), which depends on \(\bb{\eta}\) through the scalar product \(\bb{\eta}_i^\mu \cdot \bb{a}\). Here, \(\bb{a}\) is a two-dimensional vector characterized by its magnitude \(|\bb{a}|\) and its direction given by the unit vector \(\hat{\bb{a}}\), such that \(\bb{a} = |\bb{a}|\hat{\bb{a}}\). 
By omitting the indices \(\mu\) and \(i\) in \(\bb{\eta}_i^\mu\), without any loss of generality, and restricting the analysis to the case \(d = 2\) (i.e., \(\bb{\eta}_i^\mu = \left( \cos \theta_i^\mu, \sin \theta_i^\mu \right)\)), we obtain the following results\footnote{These relationships can be derived by performing the variable substitution \(t = \cos \theta\) in the integrals, where \(\theta\) represents the angle between the two vectors involved in the scalar product, and by using the condition \(|\bb{\eta}| = 1\). Furthermore, note that the integral identity:
$
\frac{1}{\pi} \int_{-1}^{1} \frac{dt}{\sqrt{1 - t^2}} = 1
$
ensures proper normalization. As a result, for small values of \(|\bb{a}|\), we obtain:
$
\avg{\exp(\bb{\eta} \cdot \bb{a})}_{\bb{\eta}} \sim 1 + \frac{|\bb{a}|^2}{4} + \mathcal{O}(|\bb{a}|^4), \quad \text{as } |\bb{a}| \to 0.
$
}

\begin{equation} 
\avg{ g(\bb{\eta} \cdot \bb{a})}_{\bb{\eta}} = \int_{-\pi}^{\pi} \frac{d\theta}{2\pi} g(|\bb{a}| \cos \theta) = \frac{1}{\pi} \int_{-1}^{1} \frac{dt}{\sqrt{1 - t^2}} g(|\bb{a}| t), \label{getaa}
\end{equation}
\begin{equation} 
\avg{ \lr{\bb{\eta} \cdot \bb{a}} g(\bb{\eta} \cdot \bb{a})}_{\bb{\eta}} = \int_{-\pi}^{\pi} \frac{d\theta}{2\pi} |\bb{a}| \cos \theta \,g(|\bb{a}| \cos \theta) = \frac{|\bb{a}|}{\pi} \int_{-1}^{1} \frac{dt}{\sqrt{1 - t^2}} t \, g(|\bb{a}| t). \label{etagetaa}
\end{equation}

The relationships presented in Eqs. \eqref{asp_g}, and \eqref{getaa}–\eqref{etagetaa} are specific to the two-dimensional unitary circle. 

However, they can be extended to \(d\)-dimensions. In this case, the map \(\bb{\eta}_i^\mu\) becomes a unit vector on the \(d\)-dimensional hyper-sphere \(\mathcal{S}^{d-1}\), where \(|\bb{\eta}| = 1\). In spherical coordinates, \(\bb{\eta}_i^\mu\) can be parameterized as a function of the angles \(\Omega = (\theta, \phi, \ldots)\), so that \(\bb{\eta}_i^\mu = \bb{\eta}_i^\mu(\Omega)\). The dot product of two maps, \(\bb{\eta}_i^\mu \cdot \bb{\eta}_j^\nu = \cos \gamma\), still represents the relative angle \(\gamma\) between two unit vectors. To preserve consistency, the kernel function of the model must be extended to \(d\)-dimensions. For this purpose, we define the volume element \(d\omega_{\mathcal{S}^{d-1}}\) on the sphere \(\mathcal{S}^{d-1}\), which in spherical coordinates takes the form:
\begin{equation}
d\omega_d = (\sin \theta)^{d-2} d\theta d\omega_{d-1}, \quad \theta \in [0, \pi].
\end{equation}

The expectation over the maps, originally given in Eq. \ref{asp_g}, can now be generalized to \(d\)-dimensions as:

\begin{equation}
\avg{g\lr{\bb{\eta}}}_{\bb{\eta} \in \mathcal{S}^{d-1}} = \int \prod_{i, \mu=1}^{N, K} \frac{d\omega_i^\mu}{|\mathcal{S}^{d-1}|} g(\bb{\eta}(\omega)),
\end{equation}

where \(|\mathcal{S}^{d-1}|\) is the volume of the hyper-sphere, computed by integrating the volume element:

\begin{equation*}
|\mathbb{S}^{d-1}| = \int d\omega_{\mathbb{S}^{d-1}} = \frac{2\pi^{d/2}}{\Gamma(d/2)}.
\end{equation*}

The relationships from Eqs. \ref{getaa}–\ref{etagetaa} can also be generalized to \(d\)-dimensions. The quenched averages are given by:

\begin{equation} \label{getaa_new}
\avg{g(\bb{\eta} \cdot \bb{a})}_{\bb{\eta}} = \frac{1}{|\mathcal{S}^{d-1}|} \int d\omega_d \, g(\bb{\eta} \cdot \bb{a}) = \Omega_d \int_{-1}^{1} dt \, (1 - t^2)^{\frac{d-3}{2}} g(|\bb{a}| t),
\end{equation}

\begin{equation} \label{etagetaa_new}
\avg{ \lr{\bb{\eta} \cdot \bb{a}} g(\bb{\eta} \cdot \bb{a})}_{\bb{\eta}} = |\bb{a}| \, \Omega_d \int_{-1}^{1} dt \, t (1 - t^2)^{\frac{d-3}{2}} g(|\bb{a}| t),
\end{equation}

where the change of variable \(t = \cos \theta\) was applied, and \(\Omega_d\) is given by:

\begin{equation}
\Omega_d = \frac{|\mathcal{S}^{d-2}|}{|\mathcal{S}^{d-1}|} = \frac{\Gamma(d/2)}{\sqrt{\pi} \Gamma((d-1)/2)}.
\end{equation}

For \(d = 2\), the factor \(\Omega_2 = 1/\pi\) restores the results obtained in Eqs. \ref{getaa}–\ref{etagetaa}. Finally, the series expansion of Eq. \ref{getaa_new} for small \(|\bb{a}|\) gives $
\avg{\exp(\bb{\eta} \cdot \bb{a})}_{\bb{\eta} \in \mathcal{S}^{d-1}} \sim 1 + \frac{|\bb{a}|^2}{2d} + \mathcal{O}(|\bb{a}|^4).
$
}

\begin{definition}[Replica symmetry] \label{RS}
Under the replica-symmetry assumption, in the thermodynamic limit the order parameters self-average around their mean values (denoted with a bar), i.e., their distributions get delta-peaked, independently of the replica considered, namely
\begin{align}
    \lim_{N \to \infty} \left\langle \left( \| \bb{x}_1\| - \| \bb{\overline{x}} \| \right)^2 \right\rangle = 0 
    &\quad \Rightarrow \quad \lim_{N \to \infty} \left\langle \| \bb{x}_1 \| \right\rangle = \| \bb{\overline{x}} \| \\
    \lim_{N \to \infty} \left\langle \left( q_{11} - \overline{q}_1 \right)^2 \right\rangle = 0 
    &\quad \Rightarrow \quad \lim_{N \to \infty} \left\langle q_{11} \right\rangle = \overline{q}_1 \\
    \lim_{N \to \infty} \left\langle \left( m - \overline{m} \right)^2 \right\rangle = 0 
    &\quad \Rightarrow \quad \lim_{N \to \infty} \left\langle m \right\rangle = \overline{m} \\
    \lim_{N \to \infty} \left\langle \left( q_{12} - \overline{q}_2 \right)^2 \right\rangle = 0 \label{q_1 bar}
    &\quad \Rightarrow \quad \lim_{N \to \infty} \left\langle q_{12} \right\rangle = \overline{q}_2 \\
    \lim_{N \to \infty} \left\langle \left( p_{11} - \overline{p}_1 \right)^2 \right\rangle = 0 
    &\quad \Rightarrow \quad \lim_{N \to \infty} \left\langle p_{11} \right\rangle = \overline{p}_1 \\
    \lim_{N \to \infty} \left\langle \left( p_{12} - \overline{p}_2 \right)^2 \right\rangle = 0 
    &\quad \Rightarrow \quad \lim_{N \to \infty} \left\langle p_{12} \right\rangle = \overline{p}_2
\end{align}

Note that, for the generic order parameter \( X \), the above concentration can be rewritten as 
\[
\left\langle \left( \Delta X \right)^2 \right\rangle \xrightarrow[N \to \infty]{} 0, \quad \text{where} \quad \Delta X := X - \overline{X},
\]
and, clearly, the RS approximation also implies that, in the thermodynamic limit, 
\[
\left\langle \Delta X \Delta Y \right\rangle = 0
\quad \text{for any generic pair of order parameters } X, Y,
\quad \text{as well as} \quad \left\langle \left( \Delta X \right)^k \right\rangle \to 0 \quad \text{for } k \geq 2.
\]
\end{definition}

\begin{lemma}
    The t derivative of interpolating free energy is given by

\begin{equation} \label{der_1}
    \begin{aligned}
        \der[]{t}\mathcal{A}(t) =& \beta \avg{\|\bb{x}_1\|^p} - \sum_{a=1}^d \psi_1^{(a)} \avg{x_1^{(a)}} - \beta \lr{\lambda-1}\avg{m^p} + \psi_2 \lr{\lambda-1}\avg{m} + \\
        & + \frac{\beta K}{N^{p/2} d^{p/4}} \sqrt{\frac{p!}{2}}  \avg{p_{11}q_{11}^{p/2}} - \lr{\frac{A^2}{2} + \frac{D}{2}} \avg{q_{11}}  - \lr{\frac{B^2 K}{2N} + \frac{CK}{2N}} \avg{p_{11}} +  \\
        & - \frac{\beta K}{N^{p/2} d^{p/4}} \sqrt{\frac{p!}{2}} \avg{p_{12}q_{12}^{p/2}} + \frac{A^2}{2} \avg{q_{12}}  + \frac{B^2 K}{2N} \avg{p_{12}} +  \\
        & - \frac{\beta K}{ N^{p/2} d^{p/4}} \sqrt{\frac{p!}{2}} \avg{q_{11}^{p/2}} - \beta \frac{\lambda-1}{N^{p-1}} \avg{q_{11}^{p/2}}.
    \end{aligned}
\end{equation}

\end{lemma}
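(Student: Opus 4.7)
The plan is to differentiate under the $\log$ and push $\partial_t$ inside the generalized Boltzmann--quenched average. Starting from $\mathcal{A}(t) = \lim_{N\to\infty}\frac{1}{N}\mathbb{E}_\eta\ln\mathcal{Z}(t)$, one immediately obtains
\begin{equation*}
    \der[]{t}\mathcal{A}(t) = \lim_{N\to\infty}\frac{1}{N}\,\mathbb{E}_\eta\,\omega_t\!\lr{\der[\Phi]{t}},
\end{equation*}
where $\Phi(t)$ is the exponent appearing in \eqref{Z}. I will split $\partial_t\Phi$ into two families: (i) the \emph{smooth} terms, whose $t$-dependence is polynomial and which contain no quenched Gaussian variable, and (ii) the \emph{noise} terms, which are proportional to $\sqrt{t}$ or $\sqrt{1-t}$ and are multiplied by a quenched Gaussian ($\eta^\mu_{i_1,\ldots,i_{p/2}}$, $J_i$, or $J_\mu$). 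The smooth family contributes straight away: differentiating $t\beta N\|\bb{x}_1\|^p$, $(1-t)N\sum_a\psi_1^{(a)}x_1^{(a)}$, $-t\beta N(\lambda-1)m^p$, $-(1-t)N\psi_2(\lambda-1)m$, $-t\beta(\lambda-1)N^{-(p-1)}\sum s_{i_1}^2\cdots s_{i_{p/2}}^2$, and the diagonal subtraction term $-t\beta N^{-(p-1)}d^{-p/4}\sqrt{p!/2}\sum_\mu\sum s_{i_1}^2\cdots s_{i_{p/2}}^2$, and identifying $\sum_{i_1,\ldots,i_{p/2}}s_{i_1}^2\cdots s_{i_{p/2}}^2=N^{p/2}q_{11}^{p/2}$, already yields the first four terms of the lemma together with the two ``diagonal'' contributions proportional to $\avg{q_{11}^{p/2}}$ on the last line.

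The noise family requires the Stein/Wick identity $\mathbb{E}[g\,F(g)]=\sigma^2\,\mathbb{E}[F'(g)]$ for each independent Gaussian. The key observation is that for any quenched Gaussian $g$ appearing in $\Phi$, $\partial_g\omega_t(f) = (\partial_g\Phi)\bigl[\omega_t(f\cdot u_g)-\omega_t(f)\omega_t(u_g)\bigr]$, where $u_g:=\partial_g\Phi/\partial_g\Phi$ selects the observable that couples to $g$; this produces the replica-symmetric ``diagonal minus off-diagonal'' structure. Apply this to $\eta^\mu_{i_1,\ldots,i_{p/2}}$ (variance $d^{-p/4}$): the $\partial_t$ of the noise term gives $\tfrac{\kappa}{2\sqrt{t}}\sum_{\mu,\vec{i}}\eta^\mu_{\vec{i}}\,s_{\vec{i}}z_\mu$ with $\kappa=\sqrt{(2\beta/N^{p-1})\sqrt{p!/2}}$; integrating by parts, the $1/\sqrt{t}$ cancels the $\sqrt{t}$ coming back from $\partial_{\eta}\Phi$, and the remaining sum factorizes through
\begin{equation*}
    \sum_{i_1,\ldots,i_{p/2}} s_{i_1}\cdots s_{i_{p/2}}\, s_{i_1}'\cdots s_{i_{p/2}}' = \Bigl(\sum_i s_i s_i'\Bigr)^{p/2} = N^{p/2}\,q_{ab}^{p/2},
\end{equation*}
with the diagonal choice $(s,s')=(s^a,s^a)$ giving $N^{p/2}q_{11}^{p/2}$ and the off-diagonal $(s^a,s^b)$ giving $N^{p/2}q_{12}^{p/2}$; summing over $\mu$ yields $Kp_{11}$ and $Kp_{12}$ respectively. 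Collecting factors reproduces exactly $+\frac{\beta K}{N^{p/2}d^{p/4}}\sqrt{p!/2}\,\avg{p_{11}q_{11}^{p/2}-p_{12}q_{12}^{p/2}}$.

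The same Wick step applied to the one-body Gaussian fields $J_i$ (unit variance, coupled with strength $\sqrt{1-t}A$) and $J_\mu$ (unit variance, coupled with strength $\sqrt{1-t}B$) produces, after the $1/\sqrt{1-t}\cdot\sqrt{1-t}$ cancellation, the contributions $-\frac{A^2}{2}\avg{q_{11}-q_{12}}$ and $-\frac{B^2 K}{2N}\avg{p_{11}-p_{12}}$. The explicit variance-subtracting terms $\tfrac{1-t}{2}D\sum s_i^2$ and $\tfrac{1-t}{2}C\sum z_\mu^2$ contribute $-\tfrac{D}{2}\avg{q_{11}}$ and $-\tfrac{CK}{2N}\avg{p_{11}}$ by direct differentiation and the same overlap identifications. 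Summing all pieces (smooth, noise, and one-body subtractions) and dividing by $N$ gives precisely the expression stated in the lemma.

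The only genuine obstacle is the careful combinatorial reduction of the \emph{signal} Gaussian integration by parts: one must verify that the tensorial product of $p/2$ spins factorizes into a single power of the scalar overlap $q_{ab}$, and that the combinatorial prefactor $\sqrt{p!/2}$ introduced in the Hubbard--Stratonovich step propagates consistently through both the diagonal ($a=b$) and off-diagonal ($a\neq b$) contractions. All other steps are mechanical applications of the product rule and Stein's identity, and no replica-symmetric hypothesis is needed at this stage since the lemma is stated at the level of the full Boltzmann--quenched averages $\avg{\cdot}$.
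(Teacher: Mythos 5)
Your proposal is correct and follows essentially the same route as the paper's proof: term-by-term differentiation of the interpolating exponent, Gaussian integration by parts (Stein's lemma) applied separately to the quenched variables $\eta^\mu_{i_1,\ldots,i_{p/2}}$, $J_i$ and $J_\mu$ with the $\sqrt{t}$ and $\sqrt{1-t}$ cancellations, and identification of the resulting connected correlations with the overlaps $q_{ab}$ and $p_{ab}$. The only cosmetic difference is that you organize the contributions into smooth/noise families rather than the paper's seven-term enumeration, and you are right that no replica-symmetric hypothesis is needed at this stage.
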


\begin{proof}
We differentiate $\mathcal{A}(t)$ with respect to $t$:
    \begin{equation}
    \begin{aligned}
    \der[]{t}\mathcal{A}\lr{t} =& 
    \frac{1}{N} \expect \frac{1}{\mathcal{Z}(t)} \sum_{\bb{s}} \int D\boldsymbol z\: \mathcal{B}(\bb{s}, \bb{z}; t) \Bigg[
    \beta N \|\bb{x}_1\|^p -N\sum_{a=1}^d \psi_1^{\lr{a}}x_1^{\lr{a}} + \\
     &- \beta N\lr{\lambda-1}m^p + N \psi_2 \lr{\lambda-1}m  - \beta \frac{\lambda-1}{N^{p-1}} \sum_{i_1,\ldots,i_{p/2}=1}^{N} s_{i_1}^2 \cdots s_{i_{p/2}}^2 + \\
    & + \frac{1}{2 \sqrt{t}} \sqrt{ \frac{2\beta}{N^{p-1}} \sqrt{\frac{p!}{2}} } \sum_{\mu=2}^K \sum_{i_1,\dots,i_{p/2}} \eta^\mu_{i_1,\dots,i_{p/2}} s_{i_1} \cdots s_{i_{p/2}} z_\mu + \\
    & - \frac{\beta}{N^{p-1} d^{p/4}} \sqrt{ \frac{p!}{2}} \sum_{\mu=2}^K \sum_{i_1,\dots,i_{p/2}} s_{i_1}^2 \cdots s_{i_{p/2}}^2 + \\
    & -\frac{1}{2\sqrt{1-t}} \lr{A\sum_{i=1}^N J_i s_i + B \sum_{\mu=2}^K J_\mu z_\mu} - \frac{1}{2}\lr{C \sum_{\mu=2}^K z_\mu^2 + D \sum_{i=1}^N s_i^2}
    \Bigg],
    \end{aligned}
\end{equation}
where \( \mathcal{B}(\bb{s}, \bb{z}; t) = \exp\left( -\beta \mathcal{H}(t) \right) \) is the Boltzmann weight associated with the interpolating Hamiltonian \( \mathcal{H}(t) \).

We now evaluate each term separately.

\begin{equation}
    (i) = \frac{1}{N} \expect \slr{\omega \lr{\beta N\|\bb{x}_1\|^p} } + \frac{1}{N} \expect \slr{\omega \lr{-N\sum_{a=1}^d \psi_1^{\lr{a}}x_1^{\lr{a}}} } = \beta \avg{\|\bb{x}_1\|^p} -\sum_{a=1}^d \psi_1^{\lr{a}} \avg{x_1^{\lr{a}}}.
\end{equation}

\begin{equation}
    (ii) = \frac{1}{N} \expect \slr{\omega \lr{- \beta N \lr{\lambda -1} m^p }} + \frac{1}{N} \expect \slr{ \omega \lr{ N \psi_2 \lr{\lambda -1}m } } = - \beta \lr{\lambda -1} \avg{m^p} + \psi_2 \lr{\lambda -1 } \avg{m} .
\end{equation}

\begin{equation}
    (iii) = \frac{1}{N} \expect \slr{ - \beta \frac{\lambda-1}{N^p} \sum_{i_1,\dots,i_{p/2}} \omega \lr{s_{i_1}^2 \cdots s_{i_{p/2}}^2 } } = - \beta \frac{\lambda -1}{N^{p/2}} \avg{q_{11}^{p/2}}.
\end{equation}

We aim to compute the contribution
\begin{equation*}
\begin{aligned}
(iv) &= \frac{1}{N} \expect \slr{\frac{1}{2 \sqrt{t}} \sqrt{ \frac{2\beta}{N^{p-1}} \sqrt{\frac{p!}{2}} } \sum_{\mu=2}^K \sum_{i_1,\dots,i_{p/2}} \eta^\mu_{i_1,\dots,i_{p/2}} \omega \lr{ s_{i_1} \cdots s_{i_{p/2}} z_\mu}} \\
& = \frac{1}{2 N \sqrt{t}} \sqrt{ \frac{2\beta}{N^{p-1}} \sqrt{\frac{p!}{2}} } \sum_{\mu=2}^K \sum_{i_1,\dots,i_{p/2}} \expect \slr{\eta^\mu_{i_1,\dots,i_{p/2}} \omega \lr{ s_{i_1} \cdots s_{i_{p/2}} z_\mu}}.
\end{aligned}
\end{equation*}

We apply Stein's lemma \footnote{%
\emph{Stein's lemma}.
Let \( X \sim \mathcal{N}(0, \sigma^2) \) and let \( f \colon \mathbb{R} \to \mathbb{R} \) be a differentiable function such that \( \mathbb{E}[|f'(X)|] < \infty \). 
Then, the following identity holds:
$
\expect[X f(X)] = \sigma^2 \expect[f'(X)].
$
} to obtain:
\begin{equation*}
    \begin{aligned}
        \expect \slr{ \eta^\mu_{i_1,\dots,i_{p/2}} \omega \lr{ s_{i_1} \cdots s_{i_{p/2}} z_\mu}}
        &= \expect \slr{ \lr{\eta^\mu_{i_1,\dots,i_{p/2}}}^2 } \expect \slr{ \pder[]{\eta^\mu_{i_1,\dots,i_{p/2}}} \omega \lr{s_{i_1} \cdots s_{i_{p/2}} z_\mu}} \\
        & = \frac{1}{d^{p/4}} \sqrt{t} \sqrt{\frac{2 \beta}{N^{p-1}} \sqrt{\frac{p!}{2}}} \slr{\omega\lr{\lr{s_{i_1} \cdots s_{i_{p/2}} z_\mu}^2} - \omega^2\lr{s_{i_1} \cdots s_{i_{p/2}} z_\mu}}.
    \end{aligned}
\end{equation*}

The expression above corresponds to a difference of overlaps under the interpolating measure. Recognizing the definitions of the replica overlaps (\ref{qab}), (\ref{pab}) we conclude:
\begin{equation}
(iii) = \frac{\beta K}{N^{p/2} d^{p/4}} \sqrt{\frac{p!}{2}} \lr{ \avg{q_{11}^{p/2} p_{11}} - \avg{q_{12}^{p/2} p_{12}}}.
\end{equation}

We now evaluate the contribution of the diagonal correction term:
\begin{equation}
\begin{aligned}
(v) &= \frac{1}{N} \mathbb{E} \slr{ - \frac{\beta}{N^{p-1} d^{p/4}} \sqrt{ \frac{p!}{2}} \sum_{\mu=2}^K \sum_{i_1,\dots,i_{p/2}} \omega \lr{ s_{i_1}^2 \cdots s_{i_{p/2}}^2}} \\
&=- \frac{\beta K}{ N^{p/2} d^{p/4}} \sqrt{\frac{p!}{2}} \avg{q_{11}^{p/2}}.
\end{aligned}
\end{equation}

We continue with the remaining terms:
\begin{equation*}
    \begin{aligned}
        (vi) &= \frac{1}{N} \expect \slr{ -\frac{A}{2 \sqrt{1-t}} \sum_{i=1}^N J_i \omega\lr{s_i} - \frac{B}{2\sqrt{1-t}} \sum_{\mu=2}^K J_\mu \omega \lr{z_\mu} } \\
        & = -\frac{A}{2 N \sqrt{1-t}} \sum_{i=1}^N \expect \slr{J_i \omega\lr{s_i}} -\frac{B}{2 N \sqrt{1-t}} \sum_{\mu=2}^K \expect \slr{J_\mu \omega\lr{z_\mu}}.
    \end{aligned}
\end{equation*}

We apply Stein's lemma to obtain:
\begin{equation*}
    \begin{aligned}
        & \expect \slr{J_i \omega\lr{s_i}} = \expect \slr{J_i^2} \expect \slr{ \pder[]{J_i} \omega \lr{s_i}} = A \sqrt{1-t} \slr{\omega \lr{s_i^2} - \omega^2\lr{s_i}}, \\
        & \expect \slr{J_\mu \omega\lr{z_\mu}} = \expect \slr{J_\mu^2} \expect \slr{ \pder[]{J_\mu} \omega \lr{z_\mu}} = A \sqrt{1-t} \slr{\omega \lr{z_\mu^2} - \omega^2\lr{z_\mu}},
    \end{aligned}
\end{equation*}
noting that $\expect \slr{J_i^2} = 1$, $\expect \slr{J_\mu^2} = 1$.
Again, this corresponds to a difference of overlaps under the interpolating measure. Recognizing the definitions of the replica overlaps, (\ref{qab}) and (\ref{pab}) we conclude:
\begin{equation}
    (vi) = -\frac{A^2}{2} \lr{\avg{q_{11}} - \avg{q_{12}}} - \frac{B^2 K}{2N} \lr{\avg{p_{11}} - \avg{p_{12}}}.
\end{equation}

Finally,
\begin{equation}
    \begin{aligned}
        (vii) &= \frac{1}{N} \expect \slr{ -\frac{C}{2} \sum_{\mu=2}^K \omega
         \lr{z_\mu^2} -\frac{D}{2} \sum_{i=1}^N \omega \lr{s_i^2} } \\
         &= - \frac{CK}{2N} \avg{p_{11}} - \frac{D}{2} \avg{q_{11}}.
    \end{aligned}
\end{equation}

\noindent
Collecting all the contributions $(i),...,(vii)$, we recover the expression for the derivative of the interpolating free energy as stated in the lemma, thus completing the proof.

\end{proof}

\begin{proposition}
Assuming replica symmetry, we define the following constants:
\begin{align}
    &\psi_1^{(a)} =  \beta p \| \bb{\overline{x}} \|^{p-2} \overline{x}^{a} \label{psi} , \\
    &\psi_2 = \beta p \overline{m}^{p-1} \label{psi2} , \\
    &A^2 = \frac{\beta K p}{N^{p/2} d^{p/4}} \sqrt{\frac{p!}{2}} \overline{p}_2 \overline{q}_2^{p/2-1}, \label{A^2}\\
    &B^2 = \frac{2\beta}{N^{p/2-1} d^{p/4}} \sqrt{\frac{p!}{2}}\overline{q}_2^{p/2}, \label{B^2} \\
    &C = \frac{2\beta}{N^{p/2-1} d^{p/4}} \sqrt{\frac{p!}{2}} \lr{\overline{q}_1^{p/2} - \overline{q}_2^{p/2}}, \label{C} \\
    &D = \frac{\beta K p}{N^{p/2} d^{p/4}} \sqrt{\frac{p!}{2}} \lr{\overline{p}_1 \overline{q}_1^{p/2-1} - \overline{p}_2 \overline{q}_2^{p/2-1}} \label{D}.
\end{align}

Then, the derivative of the interpolating  free energy simplifies to:
\begin{equation} \label{dA}
    \begin{aligned}
            \der[]{t}\mathcal{A}(t) =& \lr{1-p} \beta \| \bb{\overline{x}} \|^p - \beta \lr{\lambda - 1} \lr{1-p} \overline{m}^p + \\
            &- \frac{\beta K p}{2N^{p/2} d^{p/4}} \sqrt{\frac{p!}{2}} \lr{\overline{p}_1 \overline{q}_1^{p/2} - \overline{p}_2 \overline{q}_2^{p/2}} + \\
            & - \frac{\beta K}{N^{p/2} d^{p/4}} \sqrt{\frac{p!}{2}} \overline{q}_1^{p/2} - \beta \frac{\lambda -1}{N^{p/2}} \overline{q}_1^{p/2} .
    \end{aligned}
\end{equation}

\end{proposition}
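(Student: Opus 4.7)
The plan is to substitute the RS-concentration identities of Definition \ref{RS} into the expression for $\frac{d}{dt}\mathcal{A}(t)$ provided by the previous lemma (eq.~\eqref{der_1}), and then plug in the explicit choices of $\psi_1^{(a)}, \psi_2, A^2, B^2, C, D$ stated in the proposition. These constants are engineered precisely so that, once RS concentration is invoked, all the off-diagonal replica fluctuations generated by the Stein integrations in the lemma cancel pairwise, leaving only a deterministic function of $\bb{\overline x}, \overline m, \overline q_1, \overline q_2, \overline p_1, \overline p_2$; this is the standard feature of one-body Guerra interpolation that makes \eqref{interp} tractable. Operationally, I would organize the verification in three independent blocks corresponding to the structural groups of terms appearing in \eqref{der_1}.

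First, I would treat the Mattis/signal block: combining $\beta\langle\|\bb x_1\|^p\rangle$ with $-\sum_a \psi_1^{(a)}\langle x_1^{(a)}\rangle$ and using $\psi_1^{(a)} = \beta p\|\bb{\overline x}\|^{p-2}\overline{x}^{(a)}$, so that the contraction $\sum_a \overline{x}^{(a)} \overline{x}^{(a)} = \|\bb{\overline x}\|^2$ immediately produces $(1-p)\beta\|\bb{\overline x}\|^p$. The analogous inhibition pair $-\beta(\lambda-1)\langle m^p\rangle + \psi_2(\lambda-1)\langle m\rangle$ with $\psi_2 = \beta p\overline m^{p-1}$ yields $-\beta(\lambda-1)(1-p)\overline m^p$. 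Both are one-line computations once RS is invoked.

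Next, I would handle the overlap block. Under RS, $\langle q_{11}^{p/2}p_{11}\rangle\to \overline{q}_1^{p/2}\overline{p}_1$ and $\langle q_{12}^{p/2}p_{12}\rangle\to \overline{q}_2^{p/2}\overline{p}_2$, and similarly for the single-overlap averages $\langle q_{11}\rangle,\langle q_{12}\rangle,\langle p_{11}\rangle,\langle p_{12}\rangle$. Substituting $A^2$ and $D$ into the grid-cell combination $-(A^2/2+D/2)\overline{q}_1 + (A^2/2)\overline{q}_2$ collapses it to $-\frac{\beta K p}{2 N^{p/2}d^{p/4}}\sqrt{p!/2}\,(\overline{p}_1\overline{q}_1^{p/2}-\overline{p}_2\overline{q}_2^{p/2})$, while substituting $B^2$ and $C$ into the place-cell combination $-(B^2K/(2N)+CK/(2N))\overline{p}_1 + (B^2K/(2N))\overline{p}_2$ collapses it to $-\frac{\beta K}{N^{p/2}d^{p/4}}\sqrt{p!/2}\,(\overline{p}_1\overline{q}_1^{p/2}-\overline{p}_2\overline{q}_2^{p/2})$. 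Adding these two to the direct Stein contribution $+\frac{\beta K}{N^{p/2}d^{p/4}}\sqrt{p!/2}\,(\overline{p}_1\overline{q}_1^{p/2}-\overline{p}_2\overline{q}_2^{p/2})$ gives the net coefficient $1-1-p/2=-p/2$ in front of $\overline{p}_1\overline{q}_1^{p/2}-\overline{p}_2\overline{q}_2^{p/2}$, which is exactly the second line of \eqref{dA}. The residual diagonal contributions $-\frac{\beta K}{N^{p/2}d^{p/4}}\sqrt{p!/2}\,\overline{q}_1^{p/2}$ and $-\beta(\lambda-1)N^{-p/2}\overline{q}_1^{p/2}$ descend unchanged.

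The only genuine (but mild) obstacle is the algebraic bookkeeping inside the overlap block. Each of the $A^2,D$ and $B^2,C$ pairs produces a spurious cross-term of the form $\overline{p}_2\overline{q}_2^{p/2-1}\overline{q}_1$ (respectively $\overline{q}_2^{p/2}\overline{p}_1$), which cancels only after combining both contributions within the pair; this cancellation is the structural reason why the constants come packaged as \emph{differences} $\overline{p}_1\overline{q}_1^{p/2-1}-\overline{p}_2\overline{q}_2^{p/2-1}$ and $\overline{q}_1^{p/2}-\overline{q}_2^{p/2}$ rather than as single monomials, and it singles out \eqref{A^2}--\eqref{D} as the unique RS-consistent choice. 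A secondary subtlety is that replacing $\langle q_{11}^{p/2}p_{11}\rangle$ by $\overline{q}_1^{p/2}\overline{p}_1$ needs not only the vanishing of each second moment listed in Definition \ref{RS} but the accompanying corollary $\langle\Delta X\Delta Y\rangle\to 0$ explicitly noted there, which is what formally justifies the factorization of composite averages of order parameters in the thermodynamic limit.
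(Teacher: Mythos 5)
Your proposal is correct and follows essentially the same route as the paper: apply the RS concentration of Definition~\ref{RS} to each block of eq.~\eqref{der_1}, use the stated constants to cancel the linear terms, and recover the net $-p/2$ coefficient on $\overline{p}_1\overline{q}_1^{p/2}-\overline{p}_2\overline{q}_2^{p/2}$ together with the surviving diagonal terms. The only cosmetic difference is that the paper \emph{derives} $\psi_1,\psi_2,A^2,B^2,C,D$ by binomial-expanding the composite averages (e.g.\ $\langle p_{12}q_{12}^{p/2}\rangle$) around their means and demanding cancellation of the first-order fluctuation terms, whereas you take the constants as given and verify the algebra after replacing averages by means — the cancellation structure and the final bookkeeping are identical.
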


\begin{proof}
We apply the replica-symmetry (RS) assumption \eqref{RS} to each term in the derivative of the interpolating free energy \eqref{der_1}.

Let \( \| \bb{\overline{x}} \| \) denote the mean of \( \| \bb{x}_1 \| \) under RS. Applying a binomial expansion \footnote{Newton’s formula: $\lr{a+b}^n = \sum_{k=0}^n \binom{n}{k}a^k b^{n-k}$ } around the mean, we obtain:
\begin{equation*}
    \begin{aligned}
        \avg{\|\bb{x_1}\|^p} = \avg{\lr{\bb{x_1} \cdot \bb{x_1}}}^{p/2} &= \sum_{k=0}^{p/2} \binom{p/2}{k} \avg{\lr{\bb{x_1} \cdot \bb{x_1} - \bb{\overline{x}} \cdot \bb{\overline{x}}}^k} \lr{\bb{\overline{x}} \cdot \bb{\overline{x}}}^{p/2 - k} \\
        &= \lr{\bb{\overline{x}} \cdot \bb{\overline{x}}}^{p/2} + \frac{p}{2}\lr{\bb{\overline{x}} \cdot \bb{\overline{x}}}^{p/2 -1} \avg{\bb{x_1} \cdot \bb{x_1}} - \frac{p}{2} \lr{\bb{\overline{x}} \cdot \bb{\overline{x}}}^{p/2} + V_N\lr{\bb{\overline{x}}} \\
        &= \lr{1- \frac{p}{2}}\lr{\bb{\overline{x}} \cdot \bb{\overline{x}}}^{p/2} + \frac{p}{2}\lr{\bb{\overline{x}} \cdot \bb{\overline{x}}}^{p/2-1} \avg{ \sum_{a=1}^d \lr{x_1^{\lr{a}}}^2 } + V_N\lr{\bb{\overline{x}}}.
    \end{aligned}
\end{equation*}

Observe that, as \( N \to \infty \), that is, in the thermodynamic limit, the term \\
$V_N\lr{\bb{\overline{x}}} = \sum_{k=2}^{p/2} \binom{p/2}{k} \avg{\lr{\bb{x_1} \cdot \bb{x_1} - \bb{\overline{x}} \cdot \bb{\overline{x}}}^k} \lr{\bb{\overline{x}} \cdot \bb{\overline{x}}}^{p/2 - k} \to 0$.
Moreover, since $\bb{\overline{x}} = \lr{\overline{x}^{\lr{a}}}_{a=1}^d$, each $\overline{x}^{\lr{a}}$ represents the mean of $x_1^{\lr{a}}$ aunder the RS assumption; thus, in the thermodynamic limit, we have $\avg{\lr{x_1^{\lr{a}} - \overline{x}^{\lr{a}}}^2} \to 0$. It follows that $ \avg{ \sum_{a=1}^d \lr{x_1^{\lr{a}}}^2 } \to \sum_{a=1}^d \slr{- \lr{\overline{x}^{\lr{a}}}^2 + 2 \bb{\overline{x}}^{a} \avg{x_1^{\lr{a}}} } $.

Hence,
\begin{equation*}
    \begin{aligned}
        \avg{\|\bb{x_1}\|^p} = \avg{\lr{\bb{x_1} \cdot \bb{x_1}}}^{p/2} &= \lr{1- \frac{p}{2}}\lr{\bb{\overline{x}} \cdot \bb{\overline{x}}}^{p/2} + \frac{p}{2}\lr{\bb{\overline{x}} \cdot \bb{\overline{x}}}^{p/2-1} \sum_{a=1}^d \slr{- \lr{\overline{x}^{\lr{a}}}^2 + 2 \bb{\overline{x}}^{a} \avg{x_1^{\lr{a}}} } \\
        &= \lr{1- \frac{p}{2}}\lr{\bb{\overline{x}} \cdot \bb{\overline{x}}}^{p/2} - \frac{p}{2}\lr{\bb{\overline{x}} \cdot \bb{\overline{x}}}^{p/2} + p \lr{\bb{\overline{x}} \cdot \bb{\overline{x}}}^{p/2-1} \sum_{a=1}^d \overline{x}^{a} \avg{x_1^{a}} \\
        &= \lr{1-p} \| \bb{\overline{x}} \|^p + p \| \bb{\overline{x}} \|^{p-2} \sum_{a=1}^d \overline{x}^{a} \avg{x_1^{a}} .
    \end{aligned}
\end{equation*}

Therefore,
\begin{equation*}
    \beta \avg{\|\bb{x_1}\|^p} - \beta p \| \bb{\overline{x}} \|^{p-2} \sum_{a=1}^d \overline{x}^{a} \avg{x_1^{a}} = \beta \lr{1-p} \| \bb{\overline{x}} \|^p.
\end{equation*}

We thus define:
\begin{equation*}
    \psi_1^{(a)} =  \beta p \| \bb{\overline{x}} \|^{p-2} \overline{x}^{a}.
\end{equation*}

Let \( \overline{m}\) denote the RS mean of \(m\). Applying a binomial expansion, we get:

\begin{equation*}
    \begin{aligned}
        \avg{m^p} &= \sum_{k=0}^p \binom{p}{k} \avg{\lr{m-\overline{m}}^k} \overline{m}^{p - k} \\
        &= \lr{1-p} \overline{m}^p + p \overline{m}^{p-1} \avg{m} + V_N\lr{\overline{m}}.
    \end{aligned}
\end{equation*}

Observe that, as \( N \to \infty \), the term
$V_N\lr{\overline{m}} = \sum_{k=1}^{p} \binom{p}{k} \avg{\lr{m-\overline{m}}^k} \overline{m}^{p - k} \to 0$.

Therefore, 

\begin{equation*}
    - \beta \lr{\lambda-1}\avg{m^p} + \psi_2 \lr{\lambda-1}\avg{m} = - \beta \lr{\lambda-1} \lr{1-p} \overline{m}^p.
\end{equation*}

We define: 
\begin{equation*}
    \psi_2 = \beta p \overline{m}^{p-1}.
\end{equation*}

Let \( \overline{q}_2 \) and \( \overline{p}_2 \) denote the RS mean of \( q_{12} \) and \( p_{12} \), respectively. Applying a binomial expansion yields:
\begin{equation*}
    \begin{aligned}
        \avg{p_{12}q_{12}^{p/2}} &= \avg{\lr{p_{12}-\overline{p}_2 + \overline{p}_2} \lr{q_{12}-\overline{q}_2 + \overline{q}_2}^{p/2}} \\
        &= \avg{\lr{p_{12}-\overline{p}_2} \lr{q_{12}-\overline{q}_2 + \overline{q}_2}^{p/2}} + \overline{p}_2\avg{\lr{q_{12}-\overline{q}_2 + \overline{q}_2}^{p/2}} \\
        &= \sum_{k=0}^{p/2} \binom{p/2}{k} \avg{ \lr{p_{12}-\overline{p}_2} \lr{q_{12}-\overline{q}_2}^2 }\overline{q}_2^{p/2 - k} + \overline{p}_2 \sum_{k=0}^{p/2} \binom{p/2}{k} \avg{\lr{q_{12}-\overline{q}_2}^k}\overline{q}_2^{p/2-k} \\
        &= \overline{q}_2^{p/2} \avg{p_{12}} + \frac{p}{2} \overline{p}_2 \overline{q}_2^{p/2-1} \avg{q_{12}} - \frac{p}{2} \overline{p}_2 \overline{q}_2^{p/2} + V_N^{\lr{1}} \lr{\overline{p}_2,\overline{q}_2} + V_N^{\lr{2}}\lr{\overline{p}_2,\overline{q}_2}.
    \end{aligned}
\end{equation*}

In the thermodynamic limit, the term \( V_N^{(1)}(\overline{p}_2, \overline{q}_2) + V_N^{(2)}(\overline{p}_2, \overline{q}_2) \to 0 \). 
Therefore,
\begin{equation*}
    \begin{aligned}
        - \frac{\beta K}{N^{p/2} d^{p/4}}\sqrt{\frac{p!}{2}} \avg{p_{12}q_{12}^{p/2}} + \frac{\beta K}{N^{p/2} d^{p/4}}\sqrt{\frac{p!}{2}} \overline{q}_2^{p/2} \avg{p_{11}} + \frac{\beta Kp}{2N^{p/2} d^{p/4}}\sqrt{\frac{p!}{2}} \overline{p}_2 \overline{q}_2^{p/2-1} \avg{q_{12}} = \frac{\beta Kp}{2N^{p/2} d^{p/4}}\sqrt{\frac{p!}{2}} \overline{p}_2 \overline{q}_2^{p/2},
    \end{aligned}
\end{equation*}
Thus, we define:
\begin{equation*}
    \begin{aligned}
        &A^2 = \frac{\beta Kp}{N^{p/2} d^{p/4}}\sqrt{\frac{p!}{2}} \overline{p}_2 \overline{q}_2^{p/2-1}, \\
        &B^2 = \frac{2\beta}{N^{p/2-1} d^{p/4}}\sqrt{\frac{p!}{2}} \overline{q}_2^{p/2}.
    \end{aligned}
\end{equation*}

Similarly, let \( \overline{q}_1 \) and \( \overline{p}_1 \) denote the RS mean of \( q_{11} \) and \( p_{11} \), respectively. Expanding via the binomial formula gives:
\begin{equation*}
    \begin{aligned}
        \avg{p_{11}q_{11}^{p/2}} &= \avg{\lr{p_{11}-\overline{p}_1 + \overline{p}_1} \lr{q_{11}-\overline{q}_1 + \overline{q}_1}^{p/2}} \\
        &= \avg{\lr{p_{11}-\overline{p}_1} \lr{q_{11}-\overline{q}_1 + \overline{q}_1}^{p/2}} + \overline{p}_1\avg{\lr{q_{11}-\overline{q}_1 + \overline{q}_1}^{p/2}} \\
        &= \sum_{k=0}^{p/2} \binom{p/2}{k} \avg{ \lr{p_{11}-\overline{p}_1} \lr{q_{11}-\overline{q}_1}^2 }\overline{q}_1^{p/2 - k} + \overline{p}_1 \sum_{k=0}^{p/2} \binom{p/2}{k} \avg{\lr{q_{11}-\overline{q}_1}^k}\overline{q}_1^{p/2-k} \\
        &= \overline{q}_1^{p/2} \avg{p_{11}} + \frac{p}{2} \overline{p}_1 \overline{q}_1^{p/2-1} \avg{q_{11}} - \frac{p}{2} \overline{p}_1 \overline{q}_1^{p/2} + V_N^{\lr{1}} \lr{\overline{p}_1,\overline{q}_1} + V_N^{\lr{2}}\lr{\overline{p}_1,\overline{q}_1}.
    \end{aligned}
\end{equation*}

Again, in the thermodynamic limit, \( V_N^{(1)}(\overline{p}_1, \overline{q}_1) + V_N^{(2)}(\overline{p}_1, \overline{q}_1) \to 0 \). Therefore,
\begin{equation*}
    \begin{aligned}
        \frac{\beta K}{N^{p/2} d^{p/4}}\sqrt{\frac{p!}{2}} \avg{p_{11}q_{11}^{p/2}} - \frac{\beta K}{N^{p/2} d^{p/4}}\sqrt{\frac{p!}{2}} \overline{q}_1^{p/2} \avg{p_{11}} - \frac{\beta Kp}{2N^{p/2} d^{p/4}}\sqrt{\frac{p!}{2}} \overline{p}_1 \overline{q}_1^{p/2-1} \avg{q_{11}} = - \frac{\beta Kp}{2N^{p/2} d^{p/4}}\sqrt{\frac{p!}{2}} \overline{p}_1 \overline{q}_1^{p/2},
    \end{aligned}
\end{equation*}
We define:
\begin{equation*}
    \begin{aligned}
        &\frac{A^2}{2} + \frac{D}{2} = \frac{\beta Kp}{2N^{p/2} d^{p/4}}\sqrt{\frac{p!}{2}} \overline{p}_1 \overline{q}_1^{p/2-1}, \\
        &\frac{B^2 K}{2N} + \frac{CK}{2N} = \frac{\beta K}{N^{p/2} d^{p/4}}\sqrt{\frac{p!}{2}} \overline{q}_1^{p/2}.
    \end{aligned}
\end{equation*}

Recalling the definitions of \( A^2 \) \eqref{A^2} and \( B^2 \) \eqref{B^2}, we obtain:
\begin{equation*}
    \begin{aligned}
        &C = \frac{2\beta}{N^{p/2-1} d^{p/4}} \sqrt{\frac{p!}{2}} \lr{\overline{q}_1^{p/2} - \overline{q}_2^{p/2}}, \\
        &D = \frac{\beta K p}{N^{p/2} d^{p/4}} \sqrt{\frac{p!}{2}} \lr{\overline{p}_1 \overline{q}_1^{p/2-1} - \overline{p}_2 \overline{q}_2^{p/2-1}}.
    \end{aligned}
\end{equation*}

Finally, the term $- \frac{\beta K}{N^{p/2} d^{p/4}} \sqrt{\frac{p!}{2}} \overline{q}_1^{p/2} - \beta \frac{\lambda -1}{N^{p/2}} \overline{q}_1^{p/2} $ arises directly from the RS identity \eqref{q_1 bar}.

\end{proof}

We must now evaluate the one-body contribution \( \mathcal{A}(t=0)\). 

\begin{proposition}
The Cauchy condition $\mathcal{A}(t = 0)$ in the thermodynamic limit reads as:
\begin{equation} \label{A0}
    \begin{aligned}
        \mathcal{A}\lr{t=0} &= \expect_{\bb{\eta}} \int Dz \ln \bigg[1 + \exp \bigg(\beta p \| \bb{\overline{x}} \|^{p-2} \lr{ \bb{\overline{x}} \cdot \bb{\eta}} - \beta p \lr{\lambda-1}\overline{m}^{p-1} + \\
        & \qquad +\sqrt{p \overline{p}_2 \overline{q}_2^{p/2 -1}}z + \frac{p}{2} \lr{\overline{p}_1 \overline{q}_1^{p/2-1} - \overline{p}_2 \overline{q}_2^{p/2-1} } \bigg) \bigg]+ \\
        &\quad+ \frac{\beta K}{N^{p/2} d^{p/4}} \sqrt{\frac{p!}{2}} \overline{q}_1^{p/2} + \frac{\beta^2 K p!}{2 N^{p-1} d^{p/2}} \lr{\overline{q}_1^p - \overline{q}_2^p}.
    \end{aligned}
\end{equation}
\end{proposition}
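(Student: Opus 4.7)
The key observation is that at $t=0$ every coupling term in the interpolating Hamiltonian \eqref{Z} carries a prefactor of $t$ or $\sqrt{t}$ and therefore vanishes; what survives is a purely one-body theory in which the $s_i$'s feel only the deterministic fields $\psi_1^{(a)}, \psi_2$ together with the quenched Gaussian field $A J_i$ and the $s_i^2$ self-interaction $D/2$, while the $z_\mu$'s appear solely through $B J_\mu z_\mu + (C/2) z_\mu^2$ against the Gaussian prior $\exp(-z_\mu^2/2)$. My plan is then to evaluate $\mathcal{Z}(0)$ by exploiting this factorisation over sites and maps, and to match the resulting expression term-by-term with \eqref{A0}.

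First I would perform the $K-1$ Gaussian integrals over the $z_\mu$'s. Each integral is of the form
\begin{equation*}
\int \frac{dz_\mu}{\sqrt{2\pi}}\exp\!\lr{-\frac{1-C}{2}z_\mu^2 + B J_\mu z_\mu}
= (1-C)^{-1/2}\exp\!\lr{\frac{B^2 J_\mu^2}{2(1-C)}},
\end{equation*}
which is well defined provided $C<1$ (a regularity condition on RS). Taking the expectation over $J_\mu\sim\mathcal{N}(0,1)$ using $\mathbb{E}[J_\mu^2]=1$ yields a contribution to $\frac{1}{N}\mathbb{E}\ln\mathcal{Z}(0)$ equal to $\frac{K}{N}\bigl[-\tfrac12\ln(1-C)+\tfrac{B^2}{2(1-C)}\bigr]$. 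I would then expand $-\tfrac12\ln(1-C)+\tfrac{B^2}{2(1-C)}$ in powers of $C$, keeping the orders that survive the scaling $K = \tfrac{2 d^{p/2}}{p!}\alpha N^{p-1}$ together with the definitions \eqref{B^2} and \eqref{C}; after the dust settles these collapse into the two polynomial remainders $\frac{\beta K}{N^{p/2}d^{p/4}}\sqrt{p!/2}\,\overline{q}_1^{p/2}$ and $\frac{\beta^2 K p!}{2N^{p-1}d^{p/2}}(\overline{q}_1^p-\overline{q}_2^p)$ appearing in \eqref{A0}.

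Second I would handle the $s_i$ sums. Using $s_i\in\{0,1\}$, so that $s_i^2=s_i$, and recalling $x_1^{(a)}=\tfrac{1}{N}\sum_i \eta_i^{1,a} s_i$ and $m=\tfrac1N\sum_i s_i$, each site contributes an independent factor $1+\exp\!\bigl(\bb\psi_1\!\cdot\!\bb\eta_i^1 - (\lambda-1)\psi_2 + A J_i + D/2\bigr)$. Taking the log, dividing by $N$, and using the exchangeability of the sites together with the Gaussian identity $AJ_i \stackrel{d}{=} Az$ with $z\sim\mathcal{N}(0,1)$, the contribution reduces to
\begin{equation*}
\mathbb{E}_{\bb\eta}\int Dz \, \ln\!\Bigl[1+\exp\!\bigl(\bb\psi_1\!\cdot\!\bb\eta - (\lambda-1)\psi_2 + Az + D/2\bigr)\Bigr].
\end{equation*}
Plugging in \eqref{psi}, \eqref{psi2}, together with $A^2$ given by \eqref{A^2} and $D/2$ given by \eqref{D} (both proportional to the common prefactor that turns out to be absorbed into the square root via the $K$-scaling), the exponent reorganises into the form displayed inside the logarithm in \eqref{A0}.

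The main obstacle I anticipate is the bookkeeping of the $N$ and $K$ powers in the prefactors: the coefficients $A^2$, $B^2$, $C$, $D$ of the one-body interpolant all contain the awkward combination $\beta K/(N^{p/2} d^{p/4})\sqrt{p!/2}$, and one must verify that this prefactor is precisely what makes the naive expansions of the Gaussian log and of the $z_\mu$ quadratic action collapse into the compact polynomial remainders of \eqref{A0}, and likewise what turns $A$ into $\sqrt{p\,\overline{p}_2\,\overline{q}_2^{p/2-1}}$ inside the Gaussian integral. Once this bookkeeping is in place, terms that would otherwise diverge with $N$ in the high-load scaling $K\propto N^{p-1}$ are seen to be either absorbed into the claimed polynomial remainders or cancelled later by corresponding terms in $\int_0^1 \!ds\,\frac{d\mathcal{A}}{dt}\bigr|_{t=s}$ via \eqref{interp}, reproducing exactly \eqref{A0}.
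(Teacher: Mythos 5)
Your proposal is correct and follows essentially the same route as the paper: set $t=0$, split the one-body theory into the $\bb z$-sector (Gaussian integration over each $z_\mu$, average over $J_\mu$, then Taylor expansion of $-\tfrac12\ln(1-C)+\tfrac{B^2}{2(1-C)}$ in $C$ to recover the two polynomial remainders) and the $\bb s$-sector (site factorization plus the identity $AJ_i\stackrel{d}{=}Az$, then substitution of $\psi_1^{(a)},\psi_2,A,D$). The one piece you leave implicit is that matching $A$ to $\sqrt{p\,\overline{p}_2\,\overline{q}_2^{p/2-1}}$ and $D/2$ to $\tfrac{p}{2}(\overline{p}_1\overline{q}_1^{p/2-1}-\overline{p}_2\overline{q}_2^{p/2-1})$ is not automatic from the $K\propto N^{p-1}$ scaling but requires the explicit redefinition $\frac{\beta K}{N^{p/2}d^{p/4}}\sqrt{p!/2}\,\overline{p}_{1,2}\to\overline{p}_{1,2}$ (eq.~\eqref{risc}), which the paper performs as a separate rescaling step; with that made explicit your argument coincides with the published proof.
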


\begin{proof}
This follows from directly setting \( t=0 \) in equation \eqref{Z}. We obtain:
\begin{equation*}
    \begin{aligned}
            \mathcal{A}\lr{t=0} =& \frac{1}{N} \expect \ln \sum_{\bb{s}} \int D\boldsymbol z\: \exp \bigg( N\sum_{a=1}^d \psi_1^{\lr{a}} x_1^{\lr{a}} - N \psi_2 \lr{\lambda-1}m + \\
            & \quad + A \sum_{i=1}^N J_i s_i + B \sum_{\mu=2}^K J_\mu z_\mu + \frac{C}{2} \sum_{\mu=2}^K z_\mu^2 + \frac{D}{2} \sum_{i=1}^N s_i^2 \bigg).
    \end{aligned}
\end{equation*}

We separate the terms that depend on \( z_\mu \) from those that do not, leading to:
\begin{equation*}
    \begin{aligned}
        &\mathcal{A}_1 = \frac{1}{N} \expect \ln \sum_{\bb{s}} \exp \lr{ N\sum_{a=1}^d \psi_1^{\lr{a}} x_1^{\lr{a}} - N \psi_2 \lr{\lambda-1}m + A \sum_{i=1}^N J_i s_i + \frac{D}{2} \sum_{i=1}^N s_i^2 }, \\
        & \mathcal{A}_2 = \frac{1}{N} \expect \ln \int D\boldsymbol z\: \exp \lr{ + B \sum_{\mu=2}^K J_\mu z_\mu + \frac{C}{2} \sum_{\mu=2}^K z_\mu^2 }.
    \end{aligned}
\end{equation*}
Let us first analyze \( \mathcal{A}_1 \). Using the definitions \( x_1^{(a)} = \frac{1}{N} \sum_{i=1}^N \eta_i^{1, (a)} s_i \) and \( m = \frac{1}{N} \sum_{i=1}^N s_i \), we get:
\begin{equation*}
    \begin{aligned}
        \mathcal{A}_1 &= \frac{1}{N} \expect_{\bb{\eta}} \ln \sum_{\bb{s}} \exp \lr{\sum_{a=1}^d \psi_1^{\lr{a}} \sum_{i=1}^N \eta_i^{1, \lr{a}} s_i - \psi_2 \lr{\lambda-1} \sum_{i=1}^N s_i + A \sum_{i=1}^N J_i s_i + \frac{D}{2} \sum_{i=1}^N s_i^2 } \\
        &= \expect_{\bb{\eta}} \int Dz \ln \slr{1 + \exp \lr{ \sum_{a=1}^d \psi_1^{\lr{a}} \eta^{1,\lr{a}} - \psi_2 \lr{\lambda-1} +Az + \frac{D}{2} }} ,
    \end{aligned}
\end{equation*}
with \( z \sim \mathcal{N}(0, 1) \).

Substituting the definitions of \( \psi_1^{(a)} \), \(\psi_2\), \( A \), and \( D \), we obtain:
\begin{equation*}
    \begin{aligned}
        \mathcal{A}_1 =&  \expect_{\bb{\eta}} \int Dz \ln \Bigg[ 1 + \exp \bigg( \beta p \| \bb{\overline{x}} \|^{p-2} \lr{\bb{\overline{x}} \cdot \bb{\eta}} - \beta p \lr{\lambda-1} \overline{m}^{p-1} + \\
        & \quad +\sqrt{ \frac{\beta K p}{N^{p/2} d^{p/4}} \sqrt{\frac{p!}{2}}  \overline{p}_2 \overline{q}_2^{p/2 -1}}z  +\frac{\beta K p}{2N^{p/2} d^{p/4}} \sqrt{\frac{p!}{2}} \lr{\overline{p}_1 \overline{q}_1^{p/2-1} - \overline{p}_2 \overline{q}_2^{p/2-1} } \bigg) \Bigg].
    \end{aligned}
\end{equation*}
Now rescaling: 
\begin{equation} \label{risc}
    \frac{\beta K }{N^{p/2} d^{p/4}} \sqrt{\frac{p!}{2}} \overline{p}_1 \to \overline{p}_1 \qquad \text{and} \qquad \frac{\beta K}{N^{p/2} d^{p/4}} \sqrt{\frac{p!}{2}} \overline{p}_2 \to \overline{p}_2,
\end{equation} we obtain:
\begin{equation*}
    \mathcal{A}_1 = \expect_{\bb{\eta}} \int Dz \ln \slr{1 + \exp \lr{\beta p \| \bb{\overline{x}} \|^{p-2} \lr{\bb{\overline{x}} \cdot \bb{\eta}} - \beta p \lr{\lambda-1} \overline{m}^{p-1} + \sqrt{p \overline{p}_2 \overline{q}_2^{p/2 -1}}z + \frac{p}{2} \lr{\overline{p}_1 \overline{q}_1^{p/2-1} - \overline{p}_2 \overline{q}_2^{p/2-1} } } }.
\end{equation*}
We now consider \( \mathcal{A}_2 \). Using the Gaussian integral and recalling that \( J_\mu \sim \mathcal{N}(0,1) \), we find:

\begin{equation*}
    \begin{aligned}
        \mathcal{A}_2 &= \frac{1}{N} \expect \slr{ \ln \prod_{\mu>1} \int \exp \lr{-\frac{1-C}{2} z_\mu^2 + B J_\mu z_\mu} } \\
        &= - \frac{K}{2N} \ln \lr{1-C} + \frac{B^2 K}{2N\lr{1-C}} \expect_J\slr{J_\mu^2} \\
        &= - \frac{K}{2N} \slr{\ln \lr{1-C} + \frac{B^2}{\lr{1-C}}}.
    \end{aligned}
\end{equation*}

Expanding \( \ln (1-C) \) and \( \frac{1}{1-C} \) in Taylor series\footnote{
Taylor expansions:
\( \ln(1-C) = -C - \frac{C^2}{2} + \mathcal{O}(C^3), \quad \frac{1}{1-C} = 1 + C + C^2 + \mathcal{O}(C^3) \).
}, and inserting the definitions of \( C \) and \( B^2 \), we obtain:

\begin{equation*}
    \begin{aligned}
        \mathcal{A}_2 &= \frac{K}{2N} \Bigg[ \frac{2\beta}{N^{p/2-1} d^{p/4}} \sqrt{\frac{p!}{2}} \lr{ \overline{q}_1^{p/2} - \overline{q}_2^{p/2} } + \frac{2\beta^2}{N^{p-2} d^{p/2}} \frac{p!}{2} \lr{\overline{q}_1^p + \overline{q}_2^p -2\overline{q}_1^{p/2}\overline{q}_2^{p/2} } + \\
        &\quad \quad + \frac{2\beta}{N^{p/2-1} d^{p/4}} \sqrt{\frac{p!}{2}}\overline{q}_2^{p/2} + 
        \frac{2\beta}{N^{p/2-1} d^{p/4}} \sqrt{\frac{p!}{2}}\overline{q}_2^{p/2} \lr{ \frac{2\beta}{N^{p/2-1} d^{p/4}} \sqrt{\frac{p!}{2}} \lr{\overline{q}_1^{p/2} - \overline{q}_2^{p/2}} } \Bigg] \\
        &= \frac{K}{2N} \Bigg[ \frac{2\beta}{N^{p/2-1} d^{p/4}} \sqrt{\frac{p!}{2}} \overline{q}_1^{p/2} + \frac{2\beta^2}{N^{p-2} d^{p/2}} \frac{p!}{2} \overline{q}_1^p + \frac{2\beta^2}{N^{p-2} d^{p/2}} \frac{p!}{2} \overline{q}_2^p + \\ 
        &\quad \quad - \frac{4\beta^2}{N^{p-2} d^{p/2}} \frac{p!}{2}\overline{q}_1^{p/2}\overline{q}_2^{p/2}  +
        \frac{4\beta^2}{N^{p-2} d^{p/2}} \frac{p!}{2} \overline{q}_1^{p/2} \overline{q}_2^{p/2} - \frac{4\beta^2}{N^{p-2} d^{p/2}} \frac{p!}{2} \overline{q}_2^{p} \Bigg] \\
        &= \frac{\beta K}{N^{p/2} d^{p/4}} \sqrt{\frac{p!}{2}} \overline{q}_1^{p/2} + \frac{\beta^2 K}{N^{p-1} d^{p/2}} \frac{p!}{2} \lr{ \overline{q}_1^p -\overline{q}_2^p }.
    \end{aligned}
\end{equation*}

Therefore, we obtain Eq. \eqref{A0}.

\end{proof}

Applying eq. \eqref{interp}, we obtain the following result.

\begin{theorem}
In the thermodynamic limit, the replica-symmetric quenched free energy of the dense Battaglia-Treves model, which includes McCulloch-Pitts neurons as described in eq. \eqref{eq:ham}, for the $S^{d-1}$ embedding space, can be expressed in terms of the (mean values of the) order parameters $\| \bb{\overline{x}} \|, \overline{q}_1, \overline{q}_2, \overline{p}_1, \overline{p}_2$, and the control parameters $\alpha, \beta$, as follows:

\begin{equation} \label{ARS}
    \begin{aligned}
        A\lr{\alpha,\beta} &= \lr{1-p} \beta \| \bb{\overline{x}} \|^p - \beta \lr{\lambda-1}\lr{1-p}\overline{m}^p + \alpha \beta^2 \lr{ \overline{q}_1^p -\overline{q}_2^p } - \frac{p}{2} \lr{\overline{p}_1 \overline{q}_1^{p/2} - \overline{p}_2 \overline{q}_2^{p/2}} + \\
        & \quad +\expect_{\bb{\eta}} \int Dz \ln \bigg[1 + \exp \bigg(\beta p \| \bb{\overline{x}} \|^{p-2} \lr{\bb{\overline{x}} \cdot \bb{\eta}} - \beta p \lr{\lambda-1}\overline{m}^{p-1} + \\
        & \qquad + \sqrt{p \overline{p}_2 \overline{q}_2^{p/2 -1}}z + \frac{p}{2} \lr{\overline{p}_1 \overline{q}_1^{p/2-1} - \overline{p}_2 \overline{q}_2^{p/2-1} } \bigg) \bigg].
    \end{aligned}
\end{equation}

\end{theorem}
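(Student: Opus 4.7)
The plan is to combine the two preceding Propositions via the fundamental theorem of calculus \eqref{interp}. The key observation is that, under the replica symmetric ansatz, the right-hand side of \eqref{dA} no longer depends on any fluctuating quantity: it is expressed entirely in terms of the (deterministic) means $\|\bb{\overline{x}}\|, \overline{m}, \overline{q}_1, \overline{q}_2, \overline{p}_1, \overline{p}_2$. Consequently $\frac{d}{dt}\mathcal A(t)$ is constant in $t$, and the integral $\int_0^1 ds\,[\frac{d}{dt}\mathcal A(t)]_{t=s}$ reduces to a simple evaluation, so that $\mathcal A(\alpha,\beta)=\mathcal A(0)+\frac{d}{dt}\mathcal A(t)$.

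Next, I would add the Cauchy contribution \eqref{A0} to \eqref{dA} term by term and track the cancellations. In particular, the contribution $+\frac{\beta K}{N^{p/2}d^{p/4}}\sqrt{p!/2}\,\overline{q}_1^{p/2}$ appearing in $\mathcal A(0)$ cancels exactly the identically shaped term with opposite sign in \eqref{dA}, while the remaining $-\beta(\lambda-1)\overline{q}_1^{p/2}/N^{p/2}$ is sub-extensive and drops out in the thermodynamic limit for $p\geq 2$. The quadratic-in-$\beta$ block $\frac{\beta^2 K p!}{2 N^{p-1} d^{p/2}}(\overline{q}_1^p-\overline{q}_2^p)$ surviving from $\mathcal A(0)$ is then converted, via the definition $\alpha=\frac{p!}{2 d^{p/2}}\lim_{N\to\infty} K/N^{p-1}$, into the advertised term $\alpha\beta^2(\overline{q}_1^p-\overline{q}_2^p)$ in \eqref{ARS}.

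What remains is to recognize that the $\overline{p}$-dependent coefficients inside the logarithm in $\mathcal A(0)$ and the cross term $-\frac{\beta K p}{2N^{p/2}d^{p/4}}\sqrt{p!/2}(\overline{p}_1\overline{q}_1^{p/2}-\overline{p}_2\overline{q}_2^{p/2})$ inherited from the derivative both absorb the rescaling \eqref{risc}, namely $\frac{\beta K}{N^{p/2}d^{p/4}}\sqrt{p!/2}\,\overline{p}_{1,2}\to \overline{p}_{1,2}$, producing the clean prefactors $\frac{p}{2}$ and the square-root weight $\sqrt{p\,\overline{p}_2 \overline{q}_2^{p/2-1}}$ multiplying the standard Gaussian $z$ inside the log. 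Collecting all surviving pieces then yields precisely \eqref{ARS}.

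The main obstacle is purely bookkeeping: one has to verify the cancellation of the six intermediate $t=0$/derivative terms proportional to $\overline{q}_1^{p/2}$ and to the mixed $\overline{p}\,\overline{q}^{p/2-1}$ combinations, and to confirm that the sub-leading contributions from the $\lambda-1$ global-inhibition diagonals (of order $N^{-p/2+1}$) are indeed negligible in the $N\to\infty$ limit at the chosen scaling $K\sim N^{p-1}$. No new analytical ingredient beyond Stein's lemma, the binomial expansion under RS, and the Taylor expansion of $\ln(1-C)$ and $1/(1-C)$ already used in the two preceding Propositions is required, so the remainder is a consistency check rather than a genuinely new estimate.
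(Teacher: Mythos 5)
Your proposal is correct and follows essentially the same route as the paper: substitute the $t$-derivative \eqref{dA} and the Cauchy datum \eqref{A0} into the sum rule \eqref{interp} (noting the derivative is $t$-independent under RS), cancel the paired $\overline{q}_1^{p/2}$ terms, drop the sub-extensive $(\lambda-1)$ diagonal, apply the rescaling \eqref{risc} to the $\overline{p}$-dependent pieces, and convert $K\sim N^{p-1}$ into $\alpha$. The paper additionally phrases the last step as requiring $a\le p-1$ for finiteness and choosing $a=p-1$ as the maximal storage, but this is the same bookkeeping you describe.
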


\begin{proof}
The result follows directly by substituting eq.~\eqref{dA} and eq.~\eqref{A0} into eq.~\eqref{interp}, namely:
\begin{equation*}
    \begin{aligned}
        A\lr{\alpha,\beta} &= \beta \lr{1-p} \| \bb{\overline{x}} \|^p - \beta \lr{\lambda-1}\lr{1-p}\overline{m}^p - \beta \frac{\lambda-1}{N^{p/2}} \overline{q}_1^{p/2} + \\
        & - \frac{\beta K}{N^{p/2} d^{p/4}} \sqrt{\frac{p!}{2}} \overline{q}_1^{p/2} - \frac{\beta K p}{2N^{p/2} d^{p/4}} \sqrt{\frac{p!}{2}} \lr{\overline{p}_1 \overline{q}_1^{p/2} - \overline{p}_2 \overline{q}_2^{p/2}} + \\
        & \quad + \expect_{\bb{\eta}} \int Dz \ln \bigg[1 + \exp \bigg(\beta p \| \bb{\overline{x}} \|^{p-2} \lr{\bb{\overline{x}} \cdot \bb{\eta}} - \beta p \lr{\lambda-1} \overline{m^{p-1}} + \\
        & \qquad + \sqrt{p \overline{p}_2 \overline{q}_2^{p/2 -1}}z + \frac{p}{2} \lr{\overline{p}_1 \overline{q}_1^{p/2-1} - \overline{p}_2 \overline{q}_2^{p/2-1} } \bigg) \bigg] + \\ 
        &\quad+ \frac{\beta K}{N^{p/2} d^{p/4}} \sqrt{\frac{p!}{2}} \overline{q}_1^{p/2} + \frac{\beta^2 K p!}{2 N^{p-1} d^{p/2}} \lr{\overline{q}_1^p - \overline{q}_2^p}.
    \end{aligned}
\end{equation*}

\noindent Observe that, as \( N \to \infty \), the term \( - \beta \frac{\lambda-1}{N^{p/2}} \overline{q}_1^{p/2}\) vanishes. Moreover, applying the rescaling defined in eq.~\eqref{risc}, and substituting \( K = \frac{2 \alpha N^a d^{p/2}}{p!} \), we obtain:

\begin{equation*}
    \begin{aligned}
        A\lr{\alpha,\beta} &= \beta \lr{1-p} \| \bb{\overline{x}} \|^p - \beta \lr{\lambda-1} \lr{1-p} \overline{m}^p - \frac{p}{2} \lr{\overline{p}_1 \overline{q}_1^{p/2} - \overline{p}_2 \overline{q}_2^{p/2}} + \alpha N^{a-p+1} \beta^2 \lr{\overline{q}_1^p - \overline{q}_2^p} + \\
        & \quad + \expect_{\bb{\eta}} \int Dz \ln \bigg[1 + \exp \bigg(\beta p \| \bb{\overline{x}} \|^{p-2} \lr{\bb{\overline{x}} \cdot \bb{\eta}} - \beta p \lr{\lambda-1} \overline{m}^{p-1} + \\
        & \qquad + \sqrt{p \overline{p}_2 \overline{q}_2^{p/2 -1}}z + \frac{p}{2} \lr{\overline{p}_1 \overline{q}_1^{p/2-1} - \overline{p}_2 \overline{q}_2^{p/2-1} } \bigg) \bigg].
    \end{aligned}
\end{equation*}

To ensure that the free energy remains finite and well-defined in the limit \( N \to \infty \), it is necessary that \( a \leq p - 1 \). Therefore, in the thermodynamic limit, by setting \( a = p - 1 \) with even \( p \geq 4 \), we recover the desired expression.

\end{proof}

We now derive the self-consistency equations for all the order parameters involved in the replica-symmetric solution of the dense Battaglia-Treves model, including both the primary ones \( \| \bb{\overline{x}} \| \), \(\overline{m}\), \(\overline{q}_1\), \(\overline{q}_2\), and the auxiliary parameters \(\overline{p}_1\), \(\overline{p}_2\). Notably, the equations for \(\overline{p}_1\) and \(\overline{p}_2\) depend explicitly on \(\overline{q}_1\) and \(\overline{q}_2\). This allows us to eliminate the auxiliary variables by substituting their expressions back into the replica-symmetric free energy \(A(\alpha, \beta)\), given in eq.~\eqref{ARS}, so that the free energy depends solely on the primary order parameters \( \| \bb{\overline{x}} \| \), \(\overline{m}\), \(\overline{q}_1\), and \(\overline{q}_2\).

This reformulation simplifies the theoretical framework and allows us to construct the phase diagram in the space of control parameters \((\alpha, \beta)\). The diagram reveals the regions corresponding to different dynamical phases and identifies the critical thresholds for memory retrieval and storage.

In this way, the phase diagram offers a clear understanding of how the interplay between \(\alpha\) and \(\beta\) governs the collective dynamics and memory performance of the network in the high-storage regime.

\begin{theorem} 
In the thermodynamic limit, the replica-symmetric quenched free energy of the dense Battaglia-Treves model, equipped with McCulloch-Pitts neurons as described in eq. \eqref{eq:ham}, for the $S^{d-1}$ embedding space, can be expressed in terms of the (mean values of the) order parameters $\| \bb{\overline{x}}\|, \overline{m},  \overline{q}_1, \overline{q}_2$, and the control parameters $\alpha, \beta$, as follows:

\begin{equation} \label{Amean}
    \begin{aligned}
        A\lr{\alpha,\beta} &= \lr{1-p} \beta \| \bb{\overline{x}} \|^p - \beta \lr{\lambda-1} \lr{1-p} \overline{m}^p + \lr{1-p} \alpha \beta^2 \lr{ \overline{q}_1^p -\overline{q}_2^p } + \\
        & \quad +\expect_{\bb{\eta}} \int Dz \ln \bigg[1 + \exp \bigg(\beta p \| \bb{\overline{x}} \|^{p-2} \lr{\bb{\overline{x}} \cdot \bb{\eta}} - \beta p \lr{\lambda-1}\overline{m}^{p-1} + \\
        & \qquad + \alpha \beta^2 p \lr{ \overline{q}_1^{p-1} - \overline{q}_2^{p-1}} + \beta \sqrt{2 \alpha p \overline{q}_2^{p-1}} z \bigg) \bigg].
    \end{aligned}
\end{equation}

By extremizing the replica-symmetric free energy \( A(\alpha, \beta) \) with respect to the order parameters, we obtain the following self-consistency equations:

\begin{align}
    & \| \bb{\overline{x}} \| ^2 = \int D\boldsymbol z\:\avg{\lr{\bb{\overline{x}} \cdot \bb{\eta}} \sigma \lr{ \beta h\lr{z}}}_{\bb{\eta}}, \label{xbar} \\
    & \overline{m} = \overline{q}_1 = \int D\boldsymbol z\: \avg{\sigma \lr{\beta h\lr{z}}}_{\bb{\eta}}, \label{q1bar} \\
    & \overline{q}_2 = \int D\boldsymbol z\: \avg{\sigma^2 \lr{\beta h\lr{z}}}_{\bb{\eta}}, \label{q2bar}
\end{align}

where \(\sigma(t) = \frac{1}{1 + e^{-t}}\) is the sigmoid activation function, \(D\boldsymbol z\:\) represents the Gaussian measure for \(z \sim \mathcal{N}(0,1)\), and \(h(z)\) is the internal field acting on the neurons and reads as
\begin{equation}
    h\lr{z} = p \| \bb{\overline{x}} \|^{p-2} \lr{\bb{\overline{x}} \cdot \bb{\eta}} - p \lr{\lambda-1}\overline{m}^{p-1} + \alpha \beta p \lr{ \overline{q}_1^{p-1} - \overline{q}_2^{p-1}} + \sqrt{2 \alpha p \overline{q}_2^{p-1}} z.
\end{equation}
These equations describe the self-consistent relationships between the order parameters and the control parameters of the system (\(\alpha, \beta\)).
\end{theorem}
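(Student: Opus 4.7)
The strategy is to start from the replica-symmetric free energy eq.~\eqref{ARS} of the previous theorem, which still depends on the auxiliary overlaps $\overline{p}_1,\overline{p}_2$, and perform the saddle-point prescription $\nabla\mathcal{A}=0$ systematically over the full set of RS parameters. The crucial observation is that $\overline{p}_1$ and $\overline{p}_2$ play the role of Lagrange multipliers for $\overline{q}_1$ and $\overline{q}_2$ and admit explicit elimination: once eliminated, the free energy reduces to eq.~\eqref{Amean}, and the remaining extremizations yield the self-consistency equations \eqref{xbar}--\eqref{q2bar}.

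First I would differentiate eq.~\eqref{ARS} with respect to $\overline{q}_1$: inside the logarithm only the term $\tfrac{p}{2}\overline{p}_1\overline{q}_1^{p/2-1}$ contributes, and combining with the explicit prefactor $\alpha\beta^2\overline{q}_1^p - \tfrac{p}{2}\overline{p}_1\overline{q}_1^{p/2}$ and using the algebraic identity $\expect_{\bb{\eta}}\int Dz\,\sigma(\beta h)=\overline{q}_1$ (produced by $\partial\mathcal{A}/\partial\overline{p}_1=0$) yields the clean relation $\overline{p}_1 = 2\alpha\beta^2\overline{q}_1^{p/2}$. The companion extremization with respect to $\overline{q}_2$ is more delicate because the noise amplitude $\sqrt{p\overline{p}_2\overline{q}_2^{p/2-1}}\,z$ in the exponent also depends on $\overline{q}_2$: here I would invoke Stein's lemma to convert $\expect_{\bb{\eta}}\int Dz\,z\,\sigma(\beta h)$ into a Gaussian-derivative contribution producing $\sigma(1-\sigma)$, then use both $\expect_{\bb{\eta}}\int Dz\,\sigma=\overline{q}_1$ and $\expect_{\bb{\eta}}\int Dz\,\sigma^2=\overline{q}_2$ to observe that the explicit $\overline{q}_1$-dependent contributions cancel, leaving the symmetric relation $\overline{p}_2 = 2\alpha\beta^2\overline{q}_2^{p/2}$.

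Substituting $\overline{p}_i = 2\alpha\beta^2\overline{q}_i^{p/2}$ back into eq.~\eqref{ARS} collapses the prefactor $\alpha\beta^2(\overline{q}_1^p-\overline{q}_2^p) - \tfrac{p}{2}(\overline{p}_1\overline{q}_1^{p/2}-\overline{p}_2\overline{q}_2^{p/2})$ into $(1-p)\alpha\beta^2(\overline{q}_1^p-\overline{q}_2^p)$, while the stochastic term of the exponent becomes $\beta\sqrt{2\alpha p\,\overline{q}_2^{p-1}}\,z$ and the deterministic shift becomes $\alpha\beta^2 p(\overline{q}_1^{p-1}-\overline{q}_2^{p-1})$, producing exactly eq.~\eqref{Amean}. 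The self-consistency equations \eqref{q1bar}--\eqref{q2bar} then follow immediately from $\partial\mathcal{A}/\partial\overline{p}_1=0$ and $\partial\mathcal{A}/\partial\overline{p}_2=0$ evaluated on the reduced expression. Equation \eqref{xbar} is obtained by setting $\partial\mathcal{A}/\partial\overline{x}^{a}=0$ for each component $a=1,\dots,d$, which gives $\overline{x}^{a} = \expect_{\bb{\eta}}\int Dz\,\eta^{a}\,\sigma(\beta h)$; taking the inner product with $\bb{\overline{x}}$ returns \eqref{xbar}. Finally, $\overline{m}=\overline{q}_1$ is a triviality for Boolean units since $s_i=s_i^2$, and it is also consistent with $\partial\mathcal{A}/\partial\overline{m}=0$.

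The main obstacle is the careful Stein's-lemma manipulation in the $\overline{q}_2$ (equivalently $\overline{p}_2$) extremization: the noise amplitude depends on both $\overline{p}_2$ and $\overline{q}_2$ under a square root, giving rise to several cross-terms that only upon Gaussian integration by parts and use of the concurrent saddle-point identities for $\overline{q}_1$ and $\overline{q}_2$ conspire combinatorially to produce the clean relation $\overline{p}_i=2\alpha\beta^2\overline{q}_i^{p/2}$ and, hence, the reduced form \eqref{Amean}. All remaining manipulations are purely algebraic.
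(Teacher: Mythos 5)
Your proposal follows essentially the same route as the paper's own proof: extremize the auxiliary free energy \eqref{ARS} over all RS order parameters, use Stein's lemma (Gaussian integration by parts) to handle the $z$-dependent noise amplitude in the $\overline{q}_2$/$\overline{p}_2$ stationarity conditions, obtain $\overline{p}_i=2\alpha\beta^2\overline{q}_i^{p/2}$, and substitute back to collapse \eqref{ARS} into \eqref{Amean} with the self-consistency equations \eqref{xbar}--\eqref{q2bar} read off from the remaining saddle-point conditions. The only cosmetic difference is that you justify $\overline{m}=\overline{q}_1$ via the Boolean identity $s_i=s_i^2$ in addition to the $\partial\mathcal{A}/\partial\overline{m}=0$ condition the paper uses, which is consistent and changes nothing.
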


\begin{proof}
Let \( f(z) \) denote the argument of the exponential in Eq.~\eqref{ARS}, namely:
\begin{equation} \label{f}
    f(z) = \beta p \| \bb{\overline{x}} \|^{p-2} \lr{\bb{\overline{x}} \cdot \bb{\eta}} - \beta p \lr{\lambda-1}\overline{m}^{p-1} + \frac{p}{2} \lr{\overline{p}_1 \overline{q}_1^{p/2-1} - \overline{p}_2 \overline{q}_2^{p/2-1} } + \sqrt{p \overline{p}_2 \overline{q}_2^{p/2 -1}}z.
\end{equation}

We begin by extremizing Eq.~\eqref{ARS} with respect to \( \bb{\overline{x}} \):

\begin{equation} \label{x}
    \nabla_{ \bb{\overline{x}} } A \lr{\alpha, \beta} = 0 \Leftrightarrow \beta \lr{p-1} \nabla_{\bb{\overline{x}}} \| \bb{\overline{x}} \|^p = \expect_{\bb{\eta}} \int Dz \frac{\nabla_{\bb{\overline{x}}} \exp \slr{ f\lr{z}} }{1 + \exp \slr{f\lr{z}}}
\end{equation}

For the left-hand side we have:
\begin{equation} \label{x_M1}
    \beta \lr{p-1} \nabla_{\bb{\overline{x}}} \| \bb{\overline{x}} \|^p = \beta p \lr{p-1} \| \bb{\overline{x}} \|^{p-2} \bb{\overline{x}}.
\end{equation}

On the right-hand side:
\begin{equation} \label{x_M2}
    \nabla_{\bb{\overline{x}}} \exp \slr{ f\lr{z}} = \nabla_{\bb{\overline{x}}} f\lr{z} \cdot \exp \slr{ f\lr{z}} = \beta p \nabla_{\bb{\overline{x}}} \slr{ \| \bb{\overline{x}} \|^{p-2} \lr{\bb{\overline{x}} \cdot \bb{\eta} } } \cdot \exp \slr{ f\lr{z}},
\end{equation}
with
\begin{equation*}
    \begin{aligned}
            \nabla_{\bb{\overline{x}}} \slr{ \| \bb{\overline{x}} \|^{p-2} \lr{\bb{\overline{x}} \cdot \bb{\eta} } } &= \nabla_{\bb{\overline{x}}} \slr{ \| \bb{\overline{x}} \|^{p-2} } \lr{\bb{\overline{x}} \cdot \bb{\eta} } + \| \bb{\overline{x}} \|^{p-2} \nabla_{\bb{\overline{x}}} \slr{ \lr{\bb{\overline{x}} \cdot \bb{\eta} } } \\
            &= \lr{p-2} \| \bb{\overline{x}} \|^{p-4} \bb{\overline{x}} \lr{\bb{\overline{x}} \cdot \bb{\eta}} + \| \bb{\overline{x}} \|^{p-2} \bb{\eta}.
    \end{aligned}
\end{equation*}

Substituting into Eq. \eqref{x_M2}, we obtain:
\begin{equation*}
    \begin{aligned}
        \nabla_{\bb{\overline{x}}} \exp \slr{ f\lr{z}} = \beta p \slr{\lr{p-2} \| \bb{\overline{x}} \|^{p-4} \bb{\overline{x}} \lr{\bb{\overline{x}} \cdot \bb{\eta}} + \| \bb{\overline{x}} \|^{p-2} \bb{\eta}} \exp \slr{ f\lr{z}}.
    \end{aligned}
\end{equation*}

It follows that:
\begin{equation} \label{sigma_x}
    \frac{\nabla_{\bb{\overline{x}}} \exp \slr{ f\lr{z}} }{1 + \exp \slr{f\lr{z}}} = \beta p \slr{\lr{p-2} \| \bb{\overline{x}} \|^{p-4} \bb{\overline{x}} \lr{\bb{\overline{x}} \cdot \bb{\eta}} + \| \bb{\overline{x}} \|^{p-2} \bb{\eta}} \sigma \slr{ f\lr{z}}.
\end{equation}

Substituting Eqs. \eqref{x_M1} and \eqref{sigma_x} into Eq. \eqref{x}, we get:
\begin{equation*}
    \begin{aligned}
        \nabla_{ \bb{\overline{x}} } A \lr{\alpha, \beta} = 0 & \Leftrightarrow \beta p \lr{p-1} \| \bb{\overline{x}} \|^{p-2} \bb{\overline{x}} = \expect_{\bb{\eta}} \int Dz \beta p \slr{\lr{p-2} \| \bb{\overline{x}} \|^{p-4} \bb{\overline{x}} \lr{\bb{\overline{x}} \cdot \bb{\eta}} + \| \bb{\overline{x}} \|^{p-2} \bb{\eta}} \sigma \slr{ f\lr{z}} \\
        & \Leftrightarrow \bb{\overline{x}} = \frac{1}{p-1} \expect_{\bb{\eta}} \int Dz \slr{\lr{p-2} \frac{\bb{\overline{x}} \lr{\bb{\overline{x}} \cdot \bb{\eta}}}{\| \bb{\overline{x}} \|^2 } + \bb{\eta}} \sigma \slr{ f\lr{z}}.
    \end{aligned}
\end{equation*}

Multiplying both sides by \( \bb{\overline{x}} \), we find:
\begin{equation*}
    \| \bb{\overline{x}} \|^2 = \frac{1}{p-1} \expect_{\bb{\eta}} \int Dz \slr{\lr{p-2} \lr{\bb{\overline{x}} \cdot \bb{\eta}} + \bb{\overline{x}} \cdot \bb{\eta}} \sigma \slr{ f\lr{z}}.
\end{equation*}

Therefore:
\begin{equation}
    \| \bb{\overline{x}} \|^2 = \expect_{\bb{\eta}} \int Dz \lr{\bb{\overline{x}} \cdot \bb{\eta}} \sigma \slr{ f\lr{z}}.
\end{equation}

Next, we extremize Eq.~\eqref{ARS} with respect to \( \overline{m} \):
\begin{equation} \label{mbar}
    \begin{aligned}
        \pder[]{\overline{m}} A \lr{\alpha, \beta} = 0 & \Leftrightarrow \beta \lr{\lambda-1}\lr{1-p}p\,\overline{m}^{p-1} = \expect_{\bb{\eta}} \int Dz \frac{ \pder[]{\overline{m}} \exp \slr{ f\lr{z}} }{1 + \exp \slr{f\lr{z}}} \\
        & \Leftrightarrow \beta \lr{\lambda-1}\lr{1-p}p\,\overline{m}^{p-1} = \expect_{\bb{\eta}} \int Dz \, \beta \lr{\lambda-1} \lr{1-p} p\,\overline{m}^{p-2} \sigma \slr{f\lr{z}} \\
        & \Leftrightarrow \overline{m} = \expect_{\bb{\eta}} \int Dz \, \sigma \slr{f\lr{z}}.
    \end{aligned}
\end{equation}

Now, we extremize Eq.~\eqref{ARS} with respect to \( \overline{p}_1 \):
\begin{equation} \label{p1}
    \begin{aligned}
        \pder[]{\overline{p}_1} A \lr{\alpha, \beta} = 0 & \Leftrightarrow \frac{p}{2} \overline{q}_1^{p/2} \pder[]{\overline{p}_1} \overline{p}_1 = \expect_{\bb{\eta}} \int Dz \frac{ \pder[]{\overline{p}_1} \exp \slr{ f\lr{z}} }{1 + \exp \slr{f\lr{z}}} \\
        & \Leftrightarrow \frac{p}{2} \overline{q}_1^{p/2} = \expect_{\bb{\eta}} \int Dz \, \frac{p}{2} \overline{q}_1^{p/2 - 1} \sigma \slr{f\lr{z}} \\
        & \Leftrightarrow \overline{q}_1 = \expect_{\bb{\eta}} \int Dz \, \sigma \slr{f\lr{z}}.
    \end{aligned}
\end{equation}

Similarly, extremizing with respect to \( \overline{p}_2 \), we obtain:
\begin{equation} \label{p2}
    \begin{aligned}
        \pder[]{\overline{p}_2} A \lr{\alpha, \beta} = 0 & \Leftrightarrow \frac{p}{2} \overline{q}_2^{p/2} \pder[]{\overline{p}_2} \overline{p}_2 = - \expect_{\bb{\eta}} \int Dz \frac{ \pder[]{\overline{p}_2} \exp \slr{ f\lr{z}} }{1 + \exp \slr{f\lr{z}}} \\
        & \Leftrightarrow \frac{p}{2} \overline{q}_2^{p/2} = \expect_{\bb{\eta}} \int Dz \, \slr{ \frac{p}{2} \overline{q}_2^{p/2 - 1} - \frac{p \overline{q}_2^{p/2 -1} z}{2 \sqrt{p \overline{p}_2 \overline{q}_2^{p/2 -1}} } } \sigma \slr{f\lr{z}} \\
        & \Leftrightarrow \overline{q}_2 = \expect_{\bb{\eta}} \int Dz \, \sigma \slr{f\lr{z}}  - \frac{1}{\sqrt{p \overline{p}_2 \overline{q}_2^{p/2 -1}}} \, \expect_{\bb{\eta}} \int Dz \,  z  \sigma \slr{f\lr{z}}.
    \end{aligned}
\end{equation}

Using Wick’s theorem and the identity \( \mathbb{E}[z^2] = 1 \), we have\footnote{ \( \der[]{x} \sigma \lr{f\lr{x}} = f'\lr{x} \slr{1 - \sigma \lr{f\lr{x}}} \sigma \lr{f\lr{x}} \) }:
\begin{equation} \label{wick}
    \begin{aligned}
        \expect_{\bb{\eta}} \int Dz \,  z  \sigma \slr{f\lr{z}} &= \expect_{\bb{\eta}, z} \slr{z^2} \expect_{\bb{\eta}, z} \slr{ \pder[]{z} \sigma \slr{f\lr{z}} } \\
        &= \expect_{\bb{\eta}, z} \sqrt{p \overline{p}_2 \overline{q}_2^{p/2 -1}} \slr{1 - \sigma \slr{f\lr{z}} } \sigma \slr{f\lr{z}}.
    \end{aligned}
\end{equation}

Substituting into Eq.~\eqref{p2}, we get:
\begin{equation}
    \overline{q}_2 = \expect_{\bb{\eta}} \int Dz \, \sigma^2 \slr{h\lr{z}}.
\end{equation}

Extremizing \( A(\alpha, \beta) \) with respect to \( \overline{q}_1 \), we obtain:
\begin{equation} \label{q1}
    \begin{aligned}
        \pder[]{\overline{q}_1} A \lr{\alpha, \beta} = 0 & \Leftrightarrow \alpha \beta^2 \pder[]{\overline{q}_1} \lr{\overline{q}_1^p - \overline{q}_2^p} -  \frac{p}{2} \pder[]{\overline{q}_1} \lr{\overline{p}_1\overline{q}_1^{p/2} - \overline{p}_2\overline{q}_2^{p/2}} + \expect_{\bb{\eta}} \int Dz \, \frac{ \pder[]{\overline{q}_1} \exp \slr{ f\lr{z}} }{1 + \exp \slr{f\lr{z}}} = 0 \\
        & \Leftrightarrow \alpha \beta^2 p \overline{q}_1^{p-1} - \frac{p^2}{4} \overline{q}_1^{p/2 -1} + \expect_{\bb{\eta}} \int Dz \, \slr{ \frac{p}{2} \overline{p}_1 \lr{\frac{p}{2} -1} \overline{q}_1^{p/2 - 2} } \sigma \slr{f\lr{z}} = 0 \\
        & \Leftrightarrow 2 \alpha \beta^2 \overline{q}_1^{p-1} - \frac{p}{2} \overline{q}_1^{p/2 -1} + \slr{ \overline{p}_1 \lr{\frac{p}{2} -1} \overline{q}_1^{p/2 - 2} } \expect_{\bb{\eta}} \int Dz \, \sigma \slr{f\lr{z}} = 0 \\
        & \Leftrightarrow 2 \alpha \beta^2 \overline{q}_1^{p-1} - \frac{p}{2} \overline{q}_1^{p/2 -1} + \overline{p}_1 \lr{\frac{p}{2} -1} \overline{q}_1^{p/2 - 1} = 0 \\
        & \Leftrightarrow \overline{p}_1 = 2 \alpha \beta^2 \overline{q}_1^{p/2}.
    \end{aligned}
\end{equation}

Extremizing with respect to \( \overline{q}_2 \), similarly:
\begin{equation} \label{q2}
    \begin{aligned}
        \pder[]{\overline{q}_2} A \lr{\alpha, \beta} = 0 &\Leftrightarrow \alpha \beta^2 \pder[]{\overline{q}_2} \lr{\overline{q}_1^p - \overline{q}_2^p} -  \frac{p}{2} \pder[]{\overline{q}_2} \lr{\overline{p}_1\overline{q}_1^{p/2} - \overline{p}_2\overline{q}_2^{p/2}} + \expect_{\bb{\eta}} \int Dz \, \frac{ \pder[]{\overline{q}_2} \exp \slr{ f\lr{z}} }{1 + \exp \slr{f\lr{z}}} = 0 \\
        & \Leftrightarrow - \alpha \beta^2 p \overline{q}_2^{p-1} + \frac{p^2}{4} \overline{q}_2^{p/2 -1} + \\
        & \quad + \expect_{\bb{\eta}} \int Dz \, \slr{ \frac{p}{2} \frac{\overline{p}_2 \lr{\frac{p}{2} -1} \overline{q}_2^{p/2 - 2} }{\sqrt{p \overline{p}_2 \overline{q}_2^{p/2 - 1}}}z - \frac{p}{2} \overline{p}_2 \lr{\frac{p}{2} -1} \overline{q}_2^{p/2 - 2} } \sigma \slr{f\lr{z}} = 0 \\
        & \Leftrightarrow - 2 \alpha \beta^2 \overline{q}_2^{p-1} + \frac{p}{2} \overline{q}_2^{p/2 -1} + \frac{\overline{p}_2 \lr{\frac{p}{2} -1} \overline{q}_2^{p/2 - 2} }{\sqrt{p \overline{p}_2 \overline{q}_2^{p/2 - 1}}} \expect_{\bb{\eta}} \int Dz \,  z \sigma \slr{f\lr{z}} + \\
        & \quad - \overline{p}_2 \lr{\frac{p}{2} -1} \overline{q}_2^{p/2 - 2} \, \expect_{\bb{\eta}} \int Dz \,  \sigma \slr{f\lr{z}} = 0
    \end{aligned}
\end{equation}

Using Stein lemma (i.e. Wick’s theorem), the identity \( \mathbb{E}[z^2] = 1 \) and substituting \eqref{wick} into Eq. \eqref{q2}, we get:

\begin{equation}
    \begin{aligned}
        \pder[]{\overline{q}_2} A \lr{\alpha, \beta} = 0 &\Leftrightarrow - 2 \alpha \beta^2 \overline{q}_2^{p-1} + \frac{p}{2} \overline{q}_2^{p/2 -1} + \overline{p}_2 \lr{\frac{p}{2} -1} \overline{q}_2^{p/2 - 2} \expect_{\bb{\eta}} \int Dz \,  \slr{1 - \sigma \slr{f\lr{z}} } \sigma \slr{f\lr{z}} + \\
        & \quad - \overline{p}_2 \lr{\frac{p}{2} -1} \overline{q}_2^{p/2 - 2} \, \expect_{\bb{\eta}} \int Dz \,  \sigma \slr{f\lr{z}} = 0 \\
        &\Leftrightarrow - 2 \alpha \beta^2 \overline{q}_2^{p-1} + \frac{p}{2} \overline{q}_2^{p/2 -1} - \overline{p}_2 \lr{\frac{p}{2} -1} \overline{q}_2^{p/2 - 2} \, \expect_{\bb{\eta}} \int Dz \,  \sigma^2 \slr{f\lr{z}} = 0 \\
        &\Leftrightarrow - 2 \alpha \beta^2 \overline{q}_2^{p-1} + \frac{p}{2} \overline{q}_2^{p/2 -1} - \overline{p}_2 \lr{\frac{p}{2} -1} \overline{q}_2^{p/2 - 1} = 0 \\
        &\Leftrightarrow \overline{p}_2 = 2 \alpha \beta^2 \overline{q}_2^{p/2}.
    \end{aligned}
\end{equation}

Substituting
\begin{equation*}
    \overline{p}_1 = 2 \alpha \beta^2 \overline{q}_1^{p/2}, \qquad
    \overline{p}_2 = 2 \alpha \beta^2 \overline{q}_2^{p/2},
\end{equation*}
into the expression \eqref{ARS} of the free energy , we recover Eq. \eqref{Amean}.

Substituting \( \overline{p}_1 \) and \( \overline{p}_2 \) into Eq. \eqref{f}, we get:
\begin{equation}\label{f_new}
        f(z) = \beta \slr{p \| \bb{\overline{x}} \|^{p-2} \lr{\bb{\overline{x}} \cdot \bb{\eta}} - p \lr{\lambda-1}\overline{m}^{p-1} + \alpha \beta p \lr{ \overline{q}_1^{p-1} - \overline{q}_2^{p-1} } + \sqrt{2 \alpha p \overline{q}_2^{p -1}}z},
\end{equation}
from which the internal field \( h(z) \) is identified.

Finally, inserting Eq.~\eqref{f_new} into the previously derived self-consistency conditions yields the Eqs. \eqref{xbar}, \eqref{q1bar} and \eqref{q2bar}.

\end{proof}

\begin{corollary}
To facilitate the numerical solution of the self-consistent equations \eqref{xbar}–\eqref{q2bar}, it is helpful to perform a change of variables. The equations can then be rewritten as:
\begin{align} 
&\overline{x} = \frac{1}{\pi} \int Dz \int_{0}^{\pi} d\theta \, \cos(\theta) \ \sigma(\beta h(z, \theta)), \label{xbar_new} \\ 
&\overline{m} = \overline{q}_1 = \frac{1}{\pi} \int Dz \int_{0}^{\pi} d\theta \ \sigma(\beta h(z, \theta)), \label{q1bar_new} \\ 
&\overline{q}_2 = \frac{1}{\pi} \int Dz \int_{0}^{\pi} d\theta \ \sigma^2(\beta h(z, \theta)), \label{q2bar_new}
\end{align}

where the function \(h(z, t)\) is defined as:
\begin{equation} \label{campo}
h(z,\theta) = p \overline{x}^{p-1}\cos\theta - p \lr{\lambda-1}\overline{m}^{p-1} + \alpha \beta p \lr{ \overline{q}_1^{p-1} - \overline{q}_2^{p-1}} + \sqrt{2 \alpha p \overline{q}_2^{p-1}} z.
\end{equation}
\end{corollary}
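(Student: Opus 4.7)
The plan is to observe that in \eqref{xbar}--\eqref{q2bar} the random vector $\bb{\eta}$ enters the integrands only through the scalar product $\bb{\overline{x}}\cdot\bb{\eta}$, because the internal field $h(z)$ given in \eqref{eq:hfield} depends on $\bb{\eta}$ solely through the term $p\|\bb{\overline{x}}\|^{p-2}(\bb{\overline{x}}\cdot\bb{\eta})$, while the inhibition contribution $-p(\lambda-1)\overline{m}^{p-1}$, the reaction correction $\alpha\beta p(\overline{q}_1^{p-1}-\overline{q}_2^{p-1})$, and the Gaussian noise term $\sqrt{2\alpha p\,\overline{q}_2^{p-1}}\,z$ are all $\bb{\eta}$-independent. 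This reduction brings the three quenched averages directly into the form covered by the one-dimensional identities \eqref{eq:general_qav} and \eqref{eq:x_qav} already derived in Appendix~0.

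The derivation then proceeds in three steps. First, I align the spherical coordinates on $\mathcal{S}^{d-1}$ so that $\bb{\overline{x}}$ is the polar axis; by isotropy of the uniform prior on $\bb{\eta}$ the $d-2$ azimuthal angles can be integrated out, leaving only the polar angle $\theta$ with $\bb{\overline{x}}\cdot\bb{\eta}=\overline{x}\cos\theta$, where $\overline{x}:=\|\bb{\overline{x}}\|$. For $d=2$ the induced measure on $\theta\in[0,\pi]$ is precisely $d\theta/\pi$, as stated by \eqref{eq:general_qav}, and the sigmoid's argument becomes exactly the field $h(z,\theta)$ defined in \eqref{campo}. Second, applying \eqref{eq:general_qav} to \eqref{q1bar} and \eqref{q2bar} immediately yields \eqref{q1bar_new} and \eqref{q2bar_new}, since the $\bb{\eta}$-average of $\sigma(\beta h(z))$ and $\sigma^2(\beta h(z))$ is recast as a $\theta$-integral against $d\theta/\pi$. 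Third, for \eqref{xbar} the extra prefactor $\bb{\overline{x}}\cdot\bb{\eta}$ inside the quenched bracket invokes identity \eqref{eq:x_qav}, producing $\overline{x}^2=\overline{x}\int Dz\int_0^\pi\frac{d\theta}{\pi}\cos\theta\,\sigma(\beta h(z,\theta))$; dividing by $\overline{x}$, which is legitimate on the retrieval branch where $\overline{x}\neq 0$, gives \eqref{xbar_new}. The trivial branch $\overline{x}=0$ is handled separately and corresponds to the paramagnetic and spin-glass solutions, where \eqref{xbar_new} holds as the degenerate identity $0=0$.

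There is no serious obstacle beyond bookkeeping: the substance of the corollary is the dimensional collapse of a $d$-dimensional quenched expectation over $\mathcal{S}^{d-1}$ to a two-dimensional $(z,\theta)$ numerical integration, which is exactly what the identities \eqref{eq:general_qav}--\eqref{eq:x_qav} deliver. The only mild subtlety worth flagging is that the measure $d\theta/\pi$ is specific to $d=2$; for higher-dimensional embedding spheres the appropriate angular weight would be $\Omega_d(1-\cos^2\theta)^{(d-3)/2}d\theta$ following the generalized identity at the end of Appendix~0, but the structural form of \eqref{xbar_new}--\eqref{q2bar_new} and of the field \eqref{campo} remain unchanged, so the same strategy applies verbatim after replacing the uniform angular measure by its $d$-dependent analogue.
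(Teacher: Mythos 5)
Your proposal is correct and follows essentially the same route as the paper: both apply the quenched-average identities \eqref{eq:general_qav}--\eqref{eq:x_qav} to collapse the $\bb{\eta}$-average to a $\theta$-integral against $d\theta/\pi$, and both obtain \eqref{xbar_new} by dividing $\|\bb{\overline{x}}\|^2 = \overline{x}\int Dz\int_0^\pi \frac{d\theta}{\pi}\cos\theta\,\sigma(\beta h(z,\theta))$ through by $\|\bb{\overline{x}}\|$. Your extra remarks on the $\overline{x}=0$ branch and the $d$-dependent angular weight are sensible additions but do not change the argument.
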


\begin{proof}
The equations \eqref{xbar_new}-\eqref{q2bar_new} can be explicitly formulated by applying the relations \eqref{eq:general_qav}-\eqref{eq:x_qav} to evaluate the expectations over the map realizations.

Let us focus on equation \eqref{xbar_new}. Starting from equation \eqref{xbar} and applying the identity \eqref{eq:x_qav}, we obtain:
\begin{equation}
     \| \bb{\overline{x}} \|^2 = \frac{1}{\pi} \int Dz \int_{0}^{\pi} d\theta \ \| \bb{\overline{x}}\| \cos\theta \  \sigma \lr{\beta h \lr{z,\theta}}.
\end{equation}

Dividing both sides by \( \| \bb{\overline{x}} \| \) and setting \( \| \bb{\overline{x}} \| = \overline{x} \), we recover equation \eqref{xbar_new}.
\end{proof}

After proving the corollary, we proceed with the numerical resolution of equations \eqref{xbar_new}--\eqref{q2bar_new}, using the internal field defined in \eqref{campo}, in order to construct the phase diagram(s) of the model reported in the main text.

%Under RS assumption, the maximum storage capacity of the model can be estimated by numerically analyzing the behavior of the self-consistent equations.

%\noindent We solve the equations for \( p = 4 \) and for a one-dimensional manifold embedded in \( d = 2 \) (i.e., the unit circle), varying the parameter \( \lambda \). We then determine the value of \( \lambda \) that yields the maximum critical capacity.

%\begin{figure}
%    \centering
 %   \includegraphics[width=0.5\linewidth]{df2.png}
  %  \caption{Phase diagram of the model for \( p = 4 \), constructed in \( d = 2 \), represented in the plane of storage load \(\alpha\) and temperature \(T\), for \(\lambda=1\). Two main phases can be identified: \textit{Spin Glass} (SG) characterized by the absence of coherent states (\(\overline{x} = 0\)) and the presence of replica correlations, with \((\overline{q}_1)^2 \ll \overline{q}_2 \leq \overline{q}_1\); \textit{Ferromagnetic} (FM) where coherent states emerge (\(\overline{x} > 0\)) and replicas are fully aligned, with \(\overline{q}_1 = \overline{q}_2\). In the FM phase, the system retains memory of specific maps, with \(\overline{x} \approx 1/\pi\). The transition between the SG phase and the ergodic regime (\(\overline{q}_2 = \overline{q}_1^2\)) is not shown for visual clarity.}
 %   \label{fig:df}
%\end{figure}

\refstepcounter{section}
\section*{Appendix Two: Replica trick for dense networks of place cells}\label{sec:replica}

In a nutshell, the replica trick allows to compute the free energy by the formula
\begin{equation}\label{eq:RT}
   \mathcal{A}(\alpha,\beta)=\lim_{N\to \infty} N^{-1}\expect \ln Z= \lim_{N\to \infty} \lim_{n \to 0} \frac{\expect Z^n -1}{nN},
\end{equation}
that is, it allows to avoid the computation of the logarithm of the partition function by dealing with the quantity $\expect Z^n$, that  reads accordingly to the next
\begin{proposition}
As different replicas are coupled together trough the product space of their relative Boltzmann measures, the quenched expectation of the $n$ moments of the partition function $\expect Z^n$ can be written as  
\begin{align}
    \expect Z^n=\expect \lr{\prod_{a=1}^n\sum_{\bb s_a}}\exp{\slr{\frac{\beta p!}{N^{p-1}} \sum_{\mu=1}^K\sum_{a=1}^n\sum_{i_1<..<i_p=1}^N\lr{\boldsymbol\eta^\mu_{i_1}\cdot \boldsymbol\eta^\mu_{i_2}}\dots\lr{\boldsymbol\eta^\mu_{i_{p-1}}\cdot \boldsymbol\eta^\mu_{i_p}}\  s^a_{i_1}\dots s^a_{i_p}}}.
\end{align}
\end{proposition}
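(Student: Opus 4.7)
The plan is a direct algebraic unfolding: the statement is essentially a bookkeeping identity whose only substantive content is that the same quenched disorder $\bb{\eta}$ is shared across all $n$ replicas. First I would write $Z^n = \prod_{a=1}^n Z(\beta,\bb{\eta})$, rename the spin variable in the $a$-th factor as $\bb{s}_a$, and use Fubini to exchange the order of the $n$ finite sums, turning the product of exponentials into the exponential of the sum:
\begin{equation*}
Z^n = \lr{\prod_{a=1}^n \sum_{\bb{s}_a}} \exp\lr{-\beta \sum_{a=1}^n H_N^{(p)}(\bb{s}_a|\bb{\eta})}.
\end{equation*}

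Next I would insert the explicit Hamiltonian \eqref{eq:ham}, working at the choice $\lambda=1$ fixed throughout the paper so that the homogeneous inhibition contribution drops out. This returns precisely the exponent advertised in the statement, with the combinatorial pre-factor $\beta p!/N^{p-1}$ and the ordered multi-index sum $i_1<\dots<i_p$ inherited directly from \eqref{eq:ham}, while the replica index $a$ appears only on the spin variables $s^a_{i_1}\dots s^a_{i_p}$.

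Finally, since $\mathbb{E}_{\bb{\eta}}$ is a bounded linear operator (an integral with respect to the product uniform measure over the hyperspheres $\mathcal{S}^{d-1}$), and the replica sums $\prod_{a=1}^n \sum_{\bb{s}_a}$ are finite and $\bb{\eta}$-independent, Fubini allows us to interchange $\mathbb{E}_{\bb{\eta}}$ with $\prod_{a} \sum_{\bb{s}_a}$, which yields the claimed formula. The identity is essentially mechanical, and I do not expect any real obstacle; the conceptual point worth emphasizing — what the phrase \emph{coupled through their Boltzmann measures} refers to — is that a single realization $\bb{\eta}$ enters all $n$ summands in the exponent and is not replicated. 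This is precisely the fact that will prevent $\mathbb{E}_{\bb{\eta}}$ of the exponential from factorizing across replicas once the products of $\eta$-scalar products are linearized (à la the $p/2$-fold Gaussian approximation of Appendix One) and the disorder is integrated out, thereby producing the genuine replica coupling through the overlaps $q_{ab}$ that drives the remainder of the replica-trick computation.
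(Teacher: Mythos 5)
Your proposal is correct and matches the paper's (implicit) treatment: the paper states this proposition without proof, treating it as the immediate bookkeeping identity $Z^n=\prod_a Z$ with relabelled spins, the Hamiltonian \eqref{eq:ham} at $\lambda=1$, and the interchange of $\mathbb{E}_{\bb\eta}$ with the finite spin sums. Your closing remark correctly identifies the content of the phrase ``coupled through the product space of their Boltzmann measures,'' namely that the single shared realization of $\bb\eta$ is what later generates the overlap coupling $q_{ab}$.
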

As stated, we now divide the contribution of the signal term (chosen by $\mu=1$ with no loss of generality) from the noise term given by all the remaining $\mu>1$ maps as stated by the next  
\begin{proposition}
The signal-to-noise distinction among the various contribution to the Cost function of the dense network results in a signal  that reads as
\begin{align}
    \mathcal S: \:\: \sum_{i_1<..<i_p} \lr{\boldsymbol\eta^1_{i_1}\cdot \boldsymbol\eta^1_{i_2}}\dots\lr{\boldsymbol\eta^1_{i_{p-1}}\cdot \boldsymbol\eta^1_{i_p}} s^a_{i_1}\dots s^a_{i_p} \sim \frac{1}{p!}\sum_{i_1,..,i_p} \lr{\boldsymbol\eta^1_{i_1}\cdot \boldsymbol\eta^1_{i_2}}\dots\lr{\boldsymbol\eta^1_{i_{p-1}}\cdot \boldsymbol\eta^1_{i_p}} s^a_{i_1}\dots s^a_{i_p}+\mathcal O(N^{p-1})
\end{align}
and where we approximated $\sum_{i_1<..<i_p}\sim\frac{1}{p!}\sum_{i_1,..,i_p}$ plus contributions that vanish in the thermodynamic limit. As a result, overall, the contribution of the signal to $\expect Z^n$ is 
\begin{align}
    \expect_{\boldsymbol \eta^1} \int \slr{\prod_{a=1}^n\frac{d^d x^1_a d^d \hat x^1_a}{2\pi/N}}\exp\lr{iN\sum_a \bb x^1_a\cdot \bb{\hat x}^1_a - i\sum_{a,i}\bb{\eta}^1_i\cdot \bb{\hat x}^1_a \:s_i^a + \beta N\sum_a\lr{\bb{x}^1_a}^p},
\end{align}
where $\lr{\bb{x}^1_a}^p=\bb{x}^1_a \cdot \bb{x}^1_a \dots \bb{x}^1_a \cdot \bb{x}^1_a$ is the dot product of $p$ terms (we remind that $p$ is even).

\bigskip

The signal-to-noise distinction  among the various contribution to the Cost function of the dense network gives rise to a slow-noise contribution $\mathcal N$ to $\expect Z^n$ that reads as
\begin{align}
    \nonumber
    \expect &\exp\lr{\frac{\beta p!}{N^{p-1}}\sum_{a,\mu>1} \sum_{i_1<..<i_p}\boldsymbol\eta^\mu_{i_1}\cdot \boldsymbol\eta^\mu_{i_2}\dots\boldsymbol\eta^\mu_{i_{p-1}}\cdot \boldsymbol\eta^\mu_{i_p}\:\:s^a_{i_1}\dots s^a_{i_p}}\sim\\\nonumber
    &\sim 1+\frac{\beta p!}{N^{p-1}}\sum_{a,\mu>1} \sum_{i_1<..<i_p}\expect\slr{\boldsymbol\eta^\mu_{i_1}\cdot \boldsymbol\eta^\mu_{i_2}\dots\boldsymbol\eta^\mu_{i_{p-1}}\cdot \boldsymbol\eta^\mu_{i_p}}\:\:s^a_{i_1}\dots s^a_{i_p} +\\
    &+\frac{1}{2}\lr{\frac{\beta p!}{N^{p-1}}}^2\expect\slr{\lr{\sum_{a,\mu>1} \sum_{i_1<..<i_p}\boldsymbol\eta^\mu_{i_1}\cdot \boldsymbol\eta^\mu_{i_2}\dots\boldsymbol\eta^\mu_{i_{p-1}}\cdot \boldsymbol\eta^\mu_{i_p}\:\:s^a_{i_1}\dots s^a_{i_p}}^2}.\label{eq:quadterm}
\end{align}
\end{proposition}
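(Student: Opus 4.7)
The plan is to prove the two decompositions separately and keep the manipulations at the level of the formal replica-trick algebra. For the \emph{signal} identity, I would first justify the symmetrization $\sum_{i_1<\ldots<i_p}\simeq \tfrac{1}{p!}\sum_{i_1,\ldots,i_p}$: the difference consists of diagonal contributions with at least two coincident indices, which involve at most $N^{p-1}$ independent sums and therefore, once rescaled by the prefactor $N^{-(p-1)}$ appearing in the Hamiltonian \eqref{eq:ham}, give an $O(1/N)$ correction that vanishes in the thermodynamic limit. Next I would exploit the factorization
\begin{equation*}
\sum_{i_1,\ldots,i_p}\prod_{k=1}^{p/2}\bigl(\bb\eta^1_{i_{2k-1}}\!\cdot\bb\eta^1_{i_{2k}}\bigr)\,s^a_{i_1}\cdots s^a_{i_p} \;=\; N^{p}\,\|\bb x^1_a\|^{p},
\end{equation*}
where $\bb x^1_a=\frac{1}{N}\sum_i \bb\eta^1_i s^a_i$, and then enforce this very definition via the $d$-dimensional Dirac delta $\delta^{(d)}\!\bigl(N\bb x^1_a-\sum_i \bb\eta^1_i s^a_i\bigr)$, represented through its Fourier integral with conjugate variable $\bb{\hat x}^1_a$. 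Inserting one such delta per replica and per spatial component produces exactly the target integral representation, with the linear coupling $-i\sum_{a,i}\bb\eta^1_i\cdot\bb{\hat x}^1_a s^a_i$ arising from the delta and the self-term $iN\sum_a \bb x^1_a\cdot\bb{\hat x}^1_a$ from its normalization, while the $\beta N \|\bb x^1_a\|^p$ piece follows directly from the substitution above.

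For the \emph{noise} expansion, I would treat $\beta p!/N^{p-1}$ as the small parameter and write the exponential as a formal power series. The zeroth order is trivially $1$. The first order reduces, after the quenched average over the i.i.d.\ unit vectors $\bb\eta^\mu_i$ (whose components have zero mean and variance $1/d$), to combinations of diagonal contractions of the scalar products $\bb\eta^\mu_{i}\cdot\bb\eta^\mu_{j}$: by the orthogonality relation $\mathbb{E}[\bb\eta^\mu_i\cdot\bb\eta^\nu_j]=\delta_{ij}\delta^{\mu\nu}$, only configurations with paired indices survive, yielding the first summand in \eqref{eq:quadterm} (up to diagonal corrections of the same type as those handled in Appendix One). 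The second-order contribution produces the squared sum in \eqref{eq:quadterm} and, upon contraction of two independent copies of $\bb\eta^\mu$, generates the quadratic form in $s^a_i s^b_j$ that encodes the replica overlaps \eqref{qab}. Higher-order cumulants involve products of an increasing number of $\bb\eta^\mu$ factors whose contractions force additional index coincidences; each such coincidence costs a factor $N^{-1}$ relative to the extensive scaling, so the truncation at quadratic order is exact as $N\to\infty$ at fixed $\alpha$.

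The hard part will be the rigorous control of this Gaussian closure. The vectors $\bb\eta^\mu_i$ are not Gaussian but uniform on the sphere $S_{d-1}$, so the product $(\bb\eta^\mu_{i_1}\cdot\bb\eta^\mu_{i_2})\cdots(\bb\eta^\mu_{i_{p-1}}\cdot\bb\eta^\mu_{i_p})$ is not automatically factorable into two independent Gaussian multilinear forms. The cleanest route is to adopt the same Gaussian substitution used in Appendix One, namely to replace the product of $p/2$ scalar products by a product of two standard Gaussians with variance $1/d^{p/4}$, and to track the combinatorial prefactor $\sqrt{p!/2}$ together with the diagonal counterterm $\propto \sum s^2_{i_1}\cdots s^2_{i_{p/2}}$. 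One should then verify that the error of this substitution is controlled by the variance of a sum of $K=O(N^{p-1})$ weakly correlated terms and vanishes in the thermodynamic limit at fixed $\alpha=K/N^{p-1}$. To close the argument, I would match the resulting quadratic form with the one obtained via Guerra interpolation in Appendix One: the agreement between the two derivations is the natural internal consistency check that the truncation at \eqref{eq:quadterm} captures the full replica-symmetric contribution of the noise.
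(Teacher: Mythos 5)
Your proposal follows essentially the same route as the paper: the signal term is symmetrized, rewritten through the population vector $\bb x^1_a$, and enforced via Dirac deltas in Fourier representation (which is exactly where the $iN\sum_a \bb x^1_a\cdot\bb{\hat x}^1_a$ and $-i\sum_{a,i}\bb\eta^1_i\cdot\bb{\hat x}^1_a s^a_i$ terms come from), while the noise term is handled by a second-order expansion of the exponential with the linear term killed by the independence of the charts and higher cumulants suppressed by index coincidences. Your caveat about the non-Gaussianity of the spherical charts is well placed and corresponds to the paper's leading-order moment-matching $\expect[\eta^{t_1}_i\eta^{t_2}_i]=\delta_{t_1 t_2}/d$ with the pairing count giving $p!/d^{p/2}$, which the authors validate numerically rather than via the Appendix One substitution.
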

The linear term in the expansion vanishes since the maps have different indices (given the condition on the sum $i_1<..<i_p$). Let us now focus on the quadratic term. The expectation that appears in it can be expanded as follows
\begin{align*}
    \sum_{a,b}\sum_{\mu,\nu>1}\sum_{i_1<..<i_p}\sum_{j_1<..<j_p}\expect\slr{ \boldsymbol\eta^\mu_{i_1}\dots\boldsymbol\eta^\mu_{i_p} \:\boldsymbol\eta^\nu_{j_1}\dots\boldsymbol\eta^\nu_{j_p}}\:\:s^a_{i_1}\dots s^a_{i_p}\:\:s^b_{j_1}\dots s^b_{j_p}\sim\\
    \sim \frac{1}{(p!)^2} \sum_{a,b}\sum_{\mu,\nu>1}\sum_{i_1,..,i_p}\sum_{j_1,..,j_p}\expect\slr{ \boldsymbol\eta^\mu_{i_1}\dots\boldsymbol\eta^\mu_{i_p} \:\boldsymbol\eta^\nu_{j_1}\dots\boldsymbol\eta^\nu_{j_p}}\:\:s^a_{i_1}\dots s^a_{i_p}\:\:s^b_{j_1}\dots s^b_{j_p}.
\end{align*}
Now, let us compute the expectation $\expect\slr{ \boldsymbol\eta^\mu_{i_1}\dots\boldsymbol\eta^\mu_{i_p} \:\boldsymbol\eta^\nu_{j_1}\dots\boldsymbol\eta^\nu_{j_p}}$. We consider the case $p=4$ specifically now: we write
\begin{align}\label{eq:expect1}
    \expect\slr{ \boldsymbol\eta^\mu_{i_1}\cdot\boldsymbol\eta^\mu_{i_2} \: \boldsymbol\eta^\mu_{i_3}\cdot\boldsymbol\eta^\mu_{i_4} \boldsymbol\eta^\mu_{j_1}\cdot\boldsymbol\eta^\mu_{j_2} \: \boldsymbol\eta^\mu_{j_3}\cdot\boldsymbol\eta^\mu_{j_4}}=\sum_{t_1,t_2,t_3,t_4=1}^d \expect\slr{ \eta^{\mu,t_1}_{i_1} \eta^{\mu,t_1}_{i_2} \eta^{\mu,t_2}_{i_3} \eta^{\mu,t_2}_{i_4} 
    \eta^{\nu,t_3}_{j_1} \eta^{\nu,t_3}_{j_2} \eta^{\nu,t_4}_{j_3} \eta^{\nu,t_4}_{j_4} },
\end{align}
where we have expanded the dot products by writing them explicitly by means of the indices $t_1,t_2,t_3,t_4=1,..,d$ that run over the components of the vectors. The leading terms in $N$ are those for $i_1\neq i_2 \neq i_3 \neq i_4$ and $j_1\neq j_2 \neq j_3 \neq j_4$. Focusing on these terms, we can rewrite eq. \eqref{eq:expect1} as
\begin{align}\label{eq:expect2}
    4!\sum_{t_1,t_2,t_3,t_4=1}^d \expect\slr{ \eta^{\mu,t_1}_{i_1} \eta^{\nu,t_3}_{j_1}}\expect\slr{ \eta^{\mu,t_1}_{i_2} \eta^{\nu,t_3}_{j_2}}\expect\slr{ 
    \eta^{\mu,t_2}_{i_3} \eta^{\nu,t_4}_{j_3}}\expect\slr{ \eta^{\mu,t_2}_{i_4} \eta^{\nu,t_4}_{j_4} },
\end{align}
where the $4!$ term accounts for the number of possible pairings among the two set of indices $\{i_1,..i_4\}$ and $\{j_1,..,j_4\}$. Each expectation appearing in the latter expression is non-zero only when $\mu=\nu$ and the site-indices $i_1,..,i_4, j_1,..,j_4$ are equal in pairs: $i_1=j_1, i_2=j_2, i_3=j_3, i_4=j_4$. We can write (neglecting the irrelevant index $\mu$ for clarity):
\begin{align}\label{eq:expect3}
    4!\delta_{\mu,\nu}\delta_{i_1,j_1}\delta_{i_2,j_2}\delta_{i_3,j_3}\delta_{i_4,j_4}\sum_{t_1,t_2,t_3,t_4=1}^d \expect\slr{ \eta^{t_1}_{i_1} \eta^{t_3}_{i_1}}\expect\slr{ \eta^{t_1}_{i_2} \eta^{t_3}_{i_2}}\expect\slr{ 
    \eta^{t_2}_{i_3} \eta^{t_4}_{i_3}}\expect\slr{ \eta^{t_2}_{i_4} \eta^{t_4}_{i_4} }.
\end{align}
Now each expectation appearing in the latter expression is just the covariance of the vectors $\bb \eta_i$ over the space of their components $t=1,..,d$. In the thermodynamic limit we can approximate such covariance with the $d-$identity matrix $\frac{1}{d}\bb 1_d$, or in other words we can assume that, at leading order in $N$ we have $\expect\slr{ \eta^{t_1}_{i} \eta^{t_2}_{i}}=\frac{1}{d}\delta_{t_1,t_2}$, where the factor $d^{-1}$ in front is there to account for the normalization of the maps $\bb \eta$, \emph{i.e.} $\bb \eta_i \cdot \bb \eta_i=1$. The only non-zero contributions (in the thermodynamic limit) in eq. \ref{eq:expect3} are those for which $t_1=t_2=t_3=t_4$ and $t_1=t_3, t_2=t_4$, hence the latter equation has two distinct terms that read (apart from the delta's in front which we omit for clarity):
\begin{align}\label{eq:expect4}
    \nonumber
    4!\sum_{t=1}^d \expect\slr{ \eta^{t}_{i_1} \eta^{t}_{i_1}}\expect\slr{ \eta^{t}_{i_2} \eta^{t}_{i_2}}\expect\slr{ 
    \eta^{t}_{i_3} \eta^{t}_{i_3}}\expect\slr{ \eta^{t}_{i_4} \eta^{t}_{i_4} }+&4!
    \sum_{t_1\neq t_2}\expect\slr{ \eta^{t_1}_{i_1} \eta^{t_1}_{i_1}}\expect\slr{ \eta^{t_1}_{i_2} \eta^{t_1}_{i_2}}\expect\slr{ 
    \eta^{t_2}_{i_3} \eta^{t_2}_{i_3}}\expect\slr{ \eta^{t_2}_{i_4} \eta^{t_2}_{i_4} }=\\
    &=4!\lr{\frac{d}{d^4}+\frac{d(d-1)}{d^4}}=\frac{4!}{d^2}
\end{align}
It can be shown that, for a generic even $p$, the magnitude of the expectation in eq. \ref{eq:expect1} generalizes to $\frac{p!}{d^{p/2}}$. Hence, we have that the noise contribution (eq. \ref{eq:quadterm}) can be rewritten as
\begin{align}
    \mathcal N: \:\:1+\frac{p!K}{2d^{p/2}}\lr{\frac{\beta}{N^{p-1}}}^2\sum_{a,b}\lr{\sum_{i} s^a_{i}s^b_{i}}^p + \mathcal{O}\lr{N^{(3-p)/2}}\sim \exp\slr{\frac{p!K}{2d^{p/2}}\lr{\frac{\beta}{N^{p-1}}}^2\sum_{a,b}\lr{\sum_{i} s^a_{i}s^b_{i}}^p}.
\end{align}
The overall contribution of the noise to $\expect Z^n$ is then
\begin{align}
    \int \slr{\prod_{a,b=1}^{n,n}\frac{dq_{ab}d\hat{q}_{ab}}{2\pi/N}}\exp\slr{iN\sum_{ab}q_{ab}\hat{q}_{ab}-i\sum_{ab}\hat{q}_{ab}\sum_i s^a_i s^b_i + N\beta^2 \alpha \sum_{ab}\lr{q_{ab}}^p},
\end{align}
where 
\begin{equation}\label{AppendiStorage}
    \alpha=\frac{p!}{2d^{p/2}}\frac{K}{N^{p-1}}
\end{equation}
is the load of the model.
\begin{figure}
    \centering
    \includegraphics[width=0.35\linewidth]{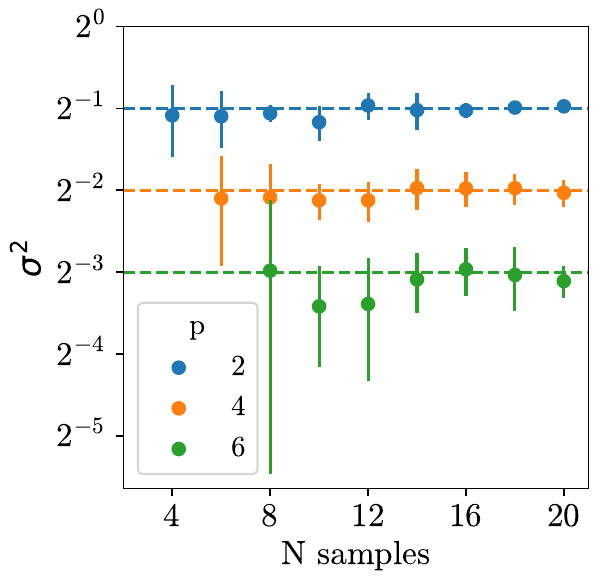}
    \includegraphics[width=0.35\linewidth]{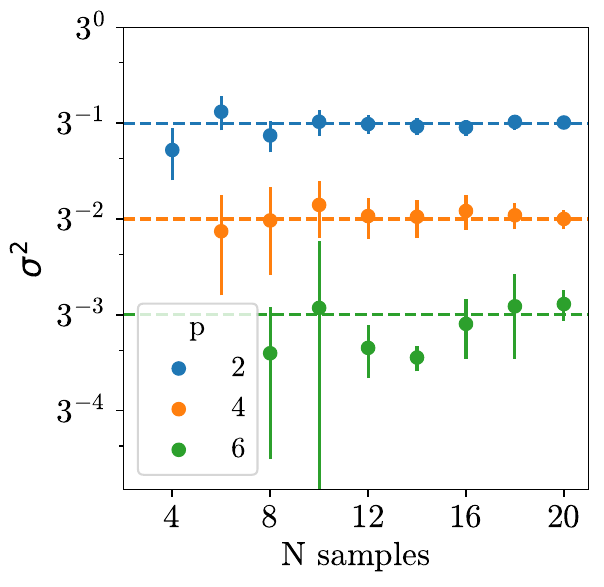}
    \caption{The variance of the product $\sigma^2=\text Var \lr{ \boldsymbol\eta_{i_1}\dots\boldsymbol\eta_{i_p}}$ for $i_1<..<i_p$ is confronted numerically with the expected value of $\frac{1}{d^{p/2}}$ for different sizes of $N$ (x-axis). In particular, we tested the cases $d=2$ (\emph{left}) and $d=3$ (\emph{right}) for $p=2,4,6$.}
    \label{fig:var}
\end{figure}
\begin{remark}
We stress that the storage of patterns in the synaptic coupling is the maximal allowed, even for such a dense network, supporting a supra-linear scaling $K \propto N^{p-1}$.    
\end{remark}
Now we are able to write the quantity $\expect Z^n$, that, after some manipulations, reads
\begin{align}\nonumber
    \expect Z^n=\int &\glr{d\bb x^1 d\bb{\hat x^1}d\bb q d\bb{\hat q}}\exp\Bigg[iN\sum_a \bb x^1_a\cdot \bb{\hat x}^1_a + \beta N\sum_a\lr{\bb{x}^1_a}^p + iN\sum_{ab}q_{ab}\hat{q}_{ab}+ N\beta^2 \alpha \sum_{ab}\lr{q_{ab}}^p +\\
    &+N\ln \sum_{\{\bb s_a\}}\expect_{\boldsymbol \eta^1} \lr{ - i\sum_{a}\bb{\eta}^1\cdot \bb{\hat x}^1_a \:s^a-i\sum_{ab}\hat{q}_{ab}s^a s^b}\Bigg]\equiv \int \glr{d\bb x^1 d\bb{\hat x^1}d\bb q d\bb{\hat q}}e^{N\phi(\bb x^1,\bb{\hat x^1},\bb q,\bb{\hat q})}
\end{align}
where $$\glr{d\bb x^1 d\bb{\hat x^1}d\bb q d\bb{\hat q}}\equiv \slr{\prod_{a=1}^n\frac{d^d x^1_a d^d \hat x^1_a}{2\pi/N}}\slr{\prod_{a,b=1}^{n,n}\frac{dq_{ab}d\hat{q}_{ab}}{2\pi/N}}$$ is the integral measure in short form and $$\sum_{\{\bb s_a\}}\equiv \prod_{a=1}^n \sum_{s_a=\{0,1\}}$$ is the partition sum over the replicated neurons $s_a$.\\
As standard in replica calculations, we exchange the two limits appearing in eq. \ref{eq:RT} and perform the thermodynamic limit first. The saddle point conditions $\pder[\phi]{x^1_a}=0$ and $\pder[\phi]{q_{ab}}=0$ give:
\begin{align}
    &\hat x^1_a=i\beta \:p\:(x^1_a)^{p-1}\\
    &\hat q_{ab}=i\alpha\beta^2 \:p\:(q_{ab})^{p-1}
\end{align}
which allow us to write
\begin{align}\nonumber
    \phi(\bb x^1,\bb q)=(1-p)\beta \sum_a(\bb x^1_a)^p+&(1-p)\alpha\beta^2\sum_{ab}(q_{ab})^p+\\
    &+\ln \expect_{\bb{\eta}}\sum_{\{\bb s_a\}} \exp\lr{\beta p\sum_a \bb{\eta}\cdot (\bb x^1_a)^{p-1}\:s_a + \alpha\beta^2p\sum_{ab}(q_{ab})^{p-1}s_a s_b}.
\end{align}
Under RS assumption we have
\begin{align}
    \bb x^1_a=\overline{\bb x},\qquad q_{ab}=\overline{q}_1\delta_{ab}+\overline{q}_2(1-\delta_{ab}),
\end{align}
thus we are now ready to state the main proposition of this Appendix, namely
\begin{proposition}
In the thermodynamic limit, the replica-symmetric quenched free energy of the dense Battaglia-Treves model, equipped with McCulloch $\&$ Pitts neurons as described by eq. \eqref{eq:ham}, for the $S^{d-1}$ embedding space, can be expressed in terms of the (mean values of the) order parameters $\overline{\bb x}$, $\overline{q}_1$,  $\overline{q}_2$ and the control parameters $\alpha$, $\beta$ (keeping $\lambda=1$), as follows:
\begin{align}\nonumber
    \mathcal{A}(\alpha,\beta,\lambda=1)=&(1-p)\beta \overline{\bb x}^p+(1-p)\alpha\beta^2 (\overline{q}_1^p-\overline{q}_2^p)+\\
    &+\expect_{\bb{\eta}}\int Dz\ln\lr{1+ \exp\lr{\beta p \:\bb{\eta}\cdot \overline{\bb x}^{p-1} + \alpha\beta^2p \:(\overline{q}_1^{p-1}-\overline{q}_2^{p-1})+\beta\sqrt{2\alpha p\:\overline{q}_2^{p-1}}\:z }},
\end{align}
where the order parameters must assume values that extremize the above expression to ensure Thermodynamics prescriptions to hold.
\end{proposition}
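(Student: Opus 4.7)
The plan is to complete the calculation already set up in the excerpt, where the quenched moments $\mathbb{E}Z^n$ have been reduced to a saddle-point integral with exponent $\phi(\boldsymbol x^1,\boldsymbol q)$. Since the extremizations over the conjugate variables $\hat{\boldsymbol x}^1$ and $\hat{\boldsymbol q}$ have already been carried out, what remains is to (i) impose the replica-symmetric ansatz, (ii) decouple replicas via a Hubbard--Stratonovich transformation so the sum over $\{\boldsymbol s_a\}$ factorizes, and (iii) extract the coefficient of $n$ as $n\to 0$ according to \eqref{eq:RT}.

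First I would substitute the RS ansatz $\boldsymbol x^1_a=\overline{\boldsymbol x}$ and $q_{ab}=\overline{q}_1\delta_{ab}+\overline{q}_2(1-\delta_{ab})$ into the three terms of $\phi$. The signal term gives $(1-p)\beta n\,\overline{\boldsymbol x}^p$, while the overlap contribution yields
\begin{equation*}
(1-p)\alpha\beta^2\sum_{a,b}(q_{ab})^p=(1-p)\alpha\beta^2\bigl(n\,\overline{q}_1^{\,p}+n(n-1)\,\overline{q}_2^{\,p}\bigr),
\end{equation*}
which, once divided by $n$ and sent to $n\to 0$, produces exactly $(1-p)\alpha\beta^2(\overline{q}_1^{\,p}-\overline{q}_2^{\,p})$.

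Next, in the entropy contribution, I would rewrite the quadratic form in the replicated spins exploiting the fact that for McCulloch--Pitts neurons $s_a^2=s_a$:
\begin{equation*}
\alpha\beta^2 p\sum_{a,b}q_{ab}^{\,p-1}s_as_b=\alpha\beta^2 p\bigl(\overline{q}_1^{\,p-1}-\overline{q}_2^{\,p-1}\bigr)\sum_a s_a+\alpha\beta^2 p\,\overline{q}_2^{\,p-1}\Bigl(\sum_a s_a\Bigr)^{\!2}.
\end{equation*}
The square in the last term is then linearized via a Hubbard--Stratonovich transformation with a standard Gaussian $z\sim\mathcal N(0,1)$, producing an additional one-body field $\beta\sqrt{2\alpha p\,\overline{q}_2^{\,p-1}}\,z$ coupled to $\sum_a s_a$. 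After this step the replicas are fully decoupled: the sum over $\{s_a=0,1\}^n$ factorizes as $\bigl(1+e^{H(\boldsymbol\eta,z)}\bigr)^n$ with
\begin{equation*}
H(\boldsymbol\eta,z)=\beta p\,\boldsymbol\eta\cdot\overline{\boldsymbol x}^{\,p-1}+\alpha\beta^2 p\bigl(\overline{q}_1^{\,p-1}-\overline{q}_2^{\,p-1}\bigr)+\beta\sqrt{2\alpha p\,\overline{q}_2^{\,p-1}}\,z.
\end{equation*}

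Finally I would take the logarithm, commute it with the quenched/Gaussian averages (which is legitimate since the replica limit extracts the linear-in-$n$ coefficient), and obtain $n\,\mathbb E_{\boldsymbol\eta}\!\int Dz\,\ln(1+e^{H})$. Dividing the full $\phi$ by $n$ and sending $n\to 0$ assembles the three pieces into the claimed expression for $\mathcal A(\alpha,\beta,\lambda=1)$. The most delicate step is the $n\to 0$ limit: the apparently benign identity $(n-1)\to -1$ in the overlap term relies on analytic continuation in $n$, and one must ensure that the HS linearization (which requires $\overline{q}_2^{\,p-1}\ge 0$, automatic for even $p$) and the commutation of $\ln$ with the expectations are compatible with this continuation. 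The replica-symmetric hypothesis stated in Definition \ref{RS} is what ultimately justifies both manipulations and yields a well-defined expression whose extremization reproduces the self-consistency equations already derived in the interpolation appendix.
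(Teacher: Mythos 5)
Your proposal is correct and follows essentially the same route as the paper: it completes the replica computation exactly where the paper leaves off, filling in the standard steps (RS substitution, the $s_a^2=s_a$ rewriting and Hubbard--Stratonovich decoupling of the entropy term, and the $n\to 0$ extraction of the linear coefficient) that the paper states implicitly between the RS ansatz and the final proposition. All intermediate expressions, including the field $\beta\sqrt{2\alpha p\,\overline{q}_2^{\,p-1}}\,z$ and the $(1-p)\alpha\beta^2(\overline{q}_1^{\,p}-\overline{q}_2^{\,p})$ term from the $n(n-1)$ combinatorics, match the paper's result.
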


%There is no need to re-obtain the self-consistency equations because, from now no, there is no longer distinction among interpolation method and replica trick.
%\begin{remark}
%We stress that the above expression of the quenched free energy, achieved via the replica trick, sharply coincides with the previous one obtained via interpolation technique.
%\end{remark}

\begin{remark}
In the replica derivation we adopt a slight abuse of notation; accordingly, powers such as $\overline{\bb x}^{\,p-1}$ are understood as $\|\overline{\bb x}\|^{\,p-2}\overline{\bb x}$, so that 
$\bb{\eta}\!\cdot\!\overline{\bb x}^{\,p-1}=\|\overline{\bb x}\|^{\,p-2}(\overline{\bb x} \!\cdot\!\bb{\eta})$ and $\overline{\bb x}^{\,p}=\|\overline{\bb x}\|^{\,p}$.
Under this convention, the replica-symmetric free energy reported above exactly coincides with the one previously obtained via the interpolation method (Eq.~\eqref{ARS} for $\lambda=1$). 
Therefore, it is not necessary to re-derive the self-consistency equations, as they have already been obtained within the interpolation framework and hold unchanged here.

\end{remark}

%The analysis reveals the presence of three distinct phases. In the \textit{paramagnetic} phase, the network does not retrieve any stored map, resulting in \(x_\mu = 0\) for all \(\mu = 1, \ldots, k\). In this regime, replicas are completely uncorrelated, the signal is suppressed by noise, and the network dynamics is fully ergodic. As the noise level decreases, ergodicity breaks down, and the network may transition either to a \textit{spin glass} phase (if the number of stored maps exceeds the network's capacity) or to a coherent retrieval phase. The latter corresponds to the \textit{ferromagnetic} phase, where one of the stored maps is spontaneously reinstated by the network.

%The maximum storage capacity of the model can be estimated by numerically solving the self-consistent equations.
%We solve the equations for \(p = 4\) and for a one-dimensional manifold embedded in \(d = 2\) (i.e., the unit circle). The model exhibits three distinct phases: ferromagnetic (FM), paramagnetic (PM) and spin glass (SG), as shown in Fig. \ref{fig:pd}.

\section{Acknowledgments}
The authors are grateful to the PRIN 2022 grants (a) {\em Statistical Mechanics of Learning Machines: from algorithmic and information theoretical limits to new biologically inspired paradigms} n. 20229T9EAT funded by European Union - Next Generation EU and (b)  {\em “Stochastic Modeling of Compound Events (SLIDE)”} n. P2022KZJTZ funded by the Italian Ministry of University and Research (MUR) in the framework of European Union - Next Generation EU.
\newline
A.B. acknowledges fundings also by Sapienza University of Rome via the grant {\em Statistical learning theory for generalized Hopfield models}.
\newline
A.B. and D.T. are members of the INdAM's group GNFM which is acknowledged too.

\bibliographystyle{elsarticle-num}
\bibliography{biblio}

\end{document}